\newcommand{\g}{{\mathsf{g}} }
\newtheorem{thm}{Theorem}[section]
\newtheorem{lem}[thm]{Lemma}
\newtheorem{remark}[thm]{Remark}
\newcounter{IssueCounter}
\newtheorem{Issue}[IssueCounter]{Issue}
\def\be {\begin{equation}}
\def\ee {\end{equation}}
\def\ba {\begin{eqnarray}}
\def\ea {\end{eqnarray}}
\def\bas {\begin{eqnarray*}}
\def\eas {\end{eqnarray*}}
\begin{document}

\title[Bifurcations of Riemann Ellipsoids]{Bifurcations of Riemann Ellipsoids}

\author[1]{\fnm{Mokhtari,} \sur{Fahimeh}}\email{f.mokhtari@vu.nl}

\author[2]{\fnm{Palaci\'an,} \sur{Jes\'us F.}}\email{palacian@unavarra.es}

\author[2]{\fnm{Yanguas,} \sur{Patricia}}\email{yanguas@unavarra.es}

\affil[1]{\orgdiv{Department of Mathematics}, \orgname{Vrije Universiteit Amsterdam}, \orgaddress{\street{Nieuw Universiteitsgebouw, De Boelelaan 1111}, \city{Amsterdam}, \postcode{1081 HV}, \country{The Netherlands}}}

\affil[2]{\orgdiv{Departamento de Estad{\'\i}stica, Inform\'atica y Matem\'aticas and Institute for Advanced Materials and Mathematics (INAMAT$^2$)}, \orgname{Universidad P\'ublica de Navarra}, \orgaddress{\street{Campus de Arrosadia s/n}, \city{Pamplona}, \postcode{31006}, \state{Navarra}, \country{Spain}}}

\abstract{We give an account of the various changes in the stability character in the five types of Riemann ellipsoids by establishing the occurrence of different quasi-periodic Hamiltonian bifurcations. Suitable symplectic changes of coordinates, that is, linear and non-linear normal form transformations are performed, leading to the characterisation of the bifurcations responsible of the stability changes. Specifically we find three types of bifurcations, namely, Hamiltonian pitchfork, saddle-centre and Hamiltonian-Hopf in the four-degree-of-freedom Hamiltonian system resulting after reducing out the symmetries of the problem. The approach is mainly analytical up to a point where non-degeneracy conditions have to be checked numerically. We also deal with the regimes in the parametric plane where Liapunov stability of the ellipsoids is accomplished. This strong stability behaviour occurs only in two of the five types of ellipsoids, at least deductible only from a linear analysis.}

\keywords{Hamiltonian equations, relative equilibria, linear and non-linear normal-form transformations, versal normal form, linear stability, Liapunov stability, quasi-periodic local bifurcations, global bifurcation}

\pacs[MSC Classification]{70H14, 37J20}
\maketitle  
%%%%%%%%%%%%%%%%%%%%%%%%%%%%%%%%%%%
\section{Introduction}
%%%%%%%%%%%%%%%%%%%%%%%%%%%%%%%%%%%

One of the relevant problems Newton addresses in his Principia \cite{Principia} is the determination of the shape of the Earth when rotating around its axis. He assumes the Earth to be a homogeneous axisymmetric self-gravitating fluid slowly rotating around its axis and shows that rotation makes the body oblate. This investigation constitutes the first attempt at determining how rotation affects the shape of a body whose surface is not rigid and is the beginning of a fruitful line of research that notable scientists, such as MacLaurin or Jacobi follow. The models addressed by these authors assume that the fluid is stationary in a frame rotating with the body.
\\

In 1860, Dirichlet \cite{Dirichlet} starts a new line of research when considering non-rigid movements. He contemplates configurations whose motion in an inertial frame is a linear function of the coordinates. In particular, he considers
\begin{equation} 
x( t, y ) = F( t ) \, y, \qquad F( t ) \in SL( 3 ), 
\label{Dirichlet}
\end{equation}
where $x( t, y )$ is the position of a particle at time $t$ in a fixed reference system, $y$ are the coordinates of the particle in the reference system centred at the body and $SL( 3 )$ denotes the special linear group of degree $3$ over ${\mathbb R}$. Dirichlet addresses the problem of determining the conditions for these configurations to have an ellipsoidal figure at any moment \cite{Dirichlet,chandrasekhar1969ellipsoidal,fasso2001stability}. Assuming that the reference configuration of the fluid mass is a radius \(\rho\)-ball, the free surface of the fluid mass determined by \eqref{Dirichlet} is an ellipsoid with semi-axes 
\( \rho a_1, \rho a_2, \rho a_3 \), where \( a_i \) are the singular values of matrix \( F \), i.e. the square root of the eigenvalues of $F F^T$ or $F^T F$.
\\

It is Euler's equations that govern the dynamics of adiabatic and inviscid flows. They are a set of quasi-linear partial differential equations that correspond to the Navier-Stokes equations with no viscosity and no thermal conductivity. Dirichlet's philosophy is using Lagrangian formulation to reduce Euler's equations to a system of ordinary differential equations such that the position of the particle in the ellipsoid at any time is a linear homogeneous function of its initial position. He proves that (\ref{Dirichlet}) forms an invariant subsystem of Euler's equations of fluid dynamics. It is his student, Dedekind, who publishes Dirichlet's work posthumously and completes some results.
\\

Riemann \cite{Riemann} subsequently continues Dirichlet's work and reformulates the equations of motion (\ref{Dirichlet}) in a convenient way to study steady asymmetric configurations. Riemann determines and classifies all possible relative equilibrium conditions and analyses the stability of the corresponding equilibria. They are characterised by different relations among the angular velocities and the figures' semi-axes. There are five kinds of these states called {\em Riemann ellipsoids}. They are denoted by $S_2$, $S_3$, I, II, III and are motions of the type (\ref{Dirichlet}) such that 
\be 
F(t) = \exp( t \, \Omega_l ) \, A \exp( -t \, \Omega_r ), 
 \label{Riemann_svd}
\ee
where  
\[ 
A = \mbox{diag} \left( a_1, a_2, a_3 \right), \,  \Omega_l, \Omega_r \,\, \mbox{being constant antisymmetric $( 3 \times 3 )$-matrices.} 
 \]
The equilibrium form does not perform a rigid motion, since it is a composition of an internal rotation together with a stretch along the principal axes and a spatial rotation such that the free surface retains a rotating ellipsoidal shape. Thence, Riemann ellipsoids are steady states of an ideal incompressible homogeneous self-gravitating fluid mass that has an ellipsoidal shape. The fluid particles describe either periodic or quasi-periodic rosette-shaped motions. In this latter case they depend on the two angular frequencies, $\omega_l$ and $\omega_r$, respectively associated to matrices $\Omega_l$ and $\Omega_r$. For ellipsoids $S_2$ and $S_3$ vectors $\omega_l$ and $\omega_r$ are parallel to the same principal axis of the ellipsoid; this axis is either the shortest for $S_3$ or the middle one for $S_2$. For ellipsoids I, II and III, both $\omega_l$ and $\omega_r$ lie in one of the two principal planes containing the longest ellipsoid's principal axis. Given an $S$-type Riemann ellipsoid we say it is {\em co-parallel} when the dot product of $\omega_l$ and $\omega_r$ is positive, otherwise we call it {\em counter-parallel}. All $S_3$-ellipsoids are counter-parallel, while the $S_2$-ellipsoids may be co-parallel or counter-parallel.
\\

Ensuing contributions by Liapunov, Poincar\'e and Cartan enhance knowledge of this problem. Chandrasekhar \cite{chandrasekhar1965, chandrasekhar1966, chandrasekhar1969ellipsoidal} enlarges and completes the work initiated by the previous authors, studying the linear (its spectral version) stability of the ellipsoids by making use of the Virial Theorem of Mechanics. In particular, he presents in a unified way Dirichlet and Dedekind's approaches, MacLaurin spheroids, Jacobi and Dedekind ellipsoid and Riemann ellipsoids. In \cite{lebovitz1996}, Lebovitz explains why some of Riemann's conclusions on stability are incorrect, in contrast to Chandrasekhar's. Rosensteel \cite{Rosensteel1, Rosensteel2, Rosensteel3} reformulates the problem from a symplectic geometry point of view and also adapts it to nuclear physics models. Moreover, applying symplectic geometry Lewis \cite{Lewis} gives an account of the stability of MacLaurin spheroids, already accomplished by Riemann and Chandrasekhar. Paper \cite{roberts1999symmetries} continues with the Hamiltonian geometric approach to Riemann's classification of equilibrium ellipsoids. Related amendments to Riemann's conclusions are included in \cite{Marshalek}. The Hamiltonian formulation is also discussed in \cite{MoLeBi}. A recent differential geometric approach can be seen in \cite{OlmosSousa}, where the authors deal with the non-linear stability of some particular Riemann ellipsoids that can be formulated as three-degree-of-freedom Hamiltonian systems. Indeed, there is  a wide bibliography on the theme. For a historical account see, for instance \cite{chandrasekhar1969ellipsoidal}. A good review paper with some new results on the dynamics of self-gravitating liquid and gas ellipsoids is \cite{BorisovKilinMamaev}. 
\\

In \cite{fasso2001stability, fasso2014erratum} Fass\`o and Lewis perform a thorough analysis of the stability of Riemann ellipsoids, improving and completing previous studies appearing in the literature, in particular, they amend some of Chandrasekhar's findings. Especially, they notice that the regions of known instability of the ellipsoids of types II and III are substantially smaller than those sought by Chandrasekhar. As a first step, Fass\`o and Lewis perform a linear (the so-called spectral) stability analysis, as they focus on the eigenvalues of the linearisation matrix. In a second step they deal with the non-linear stability of the ellipsoids, applying Nekhoroshev theory on exponentially long-time stability of solutions. The approach followed by Fass\`o and Lewis can be interpreted as semi-numerical: when possible, the calculations are carried out symbolically, but the determination of the bifurcation curves is done numerically. In this paper we continue their work and deepen the analysis of the dynamics and stability of Riemann ellipsoids. Our notations and calculations are based in theirs. We have reproduced the material from \cite{fasso2001stability} needed for the understanding of the present manuscript. 
\\

We present a systematic study of the bifurcations arising for Riemann ellipsoids. This problem is far from trivial, as the computations we perform imply manipulating large formulae. The Hamiltonian function accounts for a system of four degrees of freedom and it depends on an incomplete elliptic integral that is handled in closed form, that is, without resorting to numerical approximations. Our study is mostly analytical, said in other words, in closed form: the linear and non-linear normal forms and related transformations are all analytical, but some final checks proving non-degeneracy (normally that an expression does not vanish on a bifurcation line) should be made numerically, as we shall mention adequately. Moreover, all calculations have been carried out with {\sc Mathematica}, using integer arithmetic. Notice that the numerical testing does not reduce the rigour of our analysis, and we can state the occurrence of different bifurcations by means of theorems. Regarding previous approaches dealing with bifurcations analysis for the four-degree-of-freedom Riemann ellipsoids we only know of a recent reference \cite{benavides}, which is mainly numerical. On our side we prove that there are up to three types of quasi-periodic bifurcations, the most abundant being the Hamiltonian-Hopf bifurcations, that arise for the ellipsoids of types I, II and III. However, the ellipsoids $S_2$ experience a Hamiltonian-pitchfork bifurcation, whereas type II-ellipsoids undergo a saddle-centre bifurcation. All these bifurcations take place in the parametric plane determined by the two essential parameters, the same plane as the one considered in \cite{chandrasekhar1969ellipsoidal} and \cite{fasso2001stability}. 
\\

Quasi-periodic bifurcations occurring in the Hamiltonian context have been extensively studied by Broer, Han{\ss}man and co-workers, and we follow the monograph \cite{hanssmann2006local} to establish our results on the bifurcations of Riemann ellipsoids.
\\

In future we will provide the non-linear stability analysis of the different Riemann ellipsoids. By this we mean stability of formal type, the so-called Lie stability, which in particular generalises Nekhoroshev-type stability for equilibrium points of elliptic character, see \cite{Carcamo2021}. 
\\

The paper is structured as follows. Section \ref{Formulation} presents the Hamiltonian formulation of the system. Section \ref{Equilibria} provides the equilibria of the problem and their regions of existence. The analysis of the bifurcations can be found in the subsequent sections. The stability of $S_2$-ellipsoids is dealt with in Section \ref{SectionS2}. The main finding concerning these ellipsoids is a Hamiltonian pitchfork bifurcation, which is studied through a non-linear approach. Ellipsoids of type $S_3$ are Liapunov stable, as shown in Section \ref{SectionS3}. This result is due to Riemann, we simply recover it for completeness. In Section \ref{TypeI} we deal with type-I ellipsoids, analysing the particular case of irrotational ellipsoids (one of the two angular frequencies vanishes and the Hamiltonian system can be reduced by one degree of freedom). Concretely we study its stability and make the observation that the transition from stability to instability is done by means of two Hamiltonian-Hopf bifurcation points. There is a saddle-centre bifurcation related to type-II ellipsoids that is described in Section \ref{TypeII}. For type-III ellipsoids there is a Hamiltonian-Hopf bifurcation that is explained in Section \ref{TypeIII}. Finally, there is a global bifurcation involving the $S_2$ and type-III ellipsoids that is described in Section \ref{global}. This bifurcation corresponds to a global viewpoint of the pitchfork bifurcation tackled in Section \ref{SectionS2}. The main achievements and some remarks regarding possible future approaches are outlined in Section \ref{conclusions}. Appendix \ref{C1C2} provides explicit expressions of two improper integrals in terms of two incomplete elliptic integrals that are required in our approach. In Appendix \ref{CoefficientsL} we place the essential formulae related to the regime of the $S_2$-ellipsoids where the pitchfork bifurcation arises. Finally, Appendix \ref{Linearization} is devoted to the description of Markeev's procedure to compute the linear normal form of a Hamiltonian system corresponding to an elliptic equilibrium. We also collect the entries of the transformation matrix used to deal with the pitchfork bifurcation analysis.  
\\

The calculations presented in Appendix \ref{C1C2} are crucial for the achievements obtained on the Riemann ellipsoids. Actually, the determination of these two functions allows us to explicitly write the coordinates of all ellipsoids, as well as the sets in the parametric plane where the ellipsoids are properly defined. Moreover, excepting the curves associated to Hamiltonian-Hopf bifurcations that, much as determined analytically, are approximated by applying numerical techniques, making the approach practical, the rest of lines and points in the parametric plane corresponding to changes in stability have been obtained in closed form. This is in part due to the improper integrals provided in Appendix \ref{C1C2}. We shall give details on this feature when dealing with the study performed in the five ellipsoids.  
\\

Our analysis has not pursued the heavy task of seeking all stability regions and bifurcation curves in the parametric plane (some portions of it certainly being very subtle) corresponding to ellipsoids of types I, II and III. Apart from the analysis carried out in Sections \ref{TypeI}, \ref{TypeII} and \ref{TypeIII} on the bifurcations, we have checked the linear stability in the regions encountered in \cite{fasso2001stability}, by simply picking samples in different regions of the parametric plane. Our results agree with the ones obtained by Fass\`o and Lewis. Moreover, according to our appraisals, the bifurcations of these three types of ellipsoids not considered in our study seem to be of Hamiltonian-Hopf type, although we have not performed a further study about this. 
\\

The bifurcations accounted for in Sections \ref{SectionS2}, \ref{TypeI}, \ref{TypeII} and \ref{TypeIII} have to be understood as the dynamical behaviour of a single ellipsoid. Noticing that a specific point in the parametric plane represents a Riemann ellipsoid with its type of stability, for such a Riemann ellipsoid, the occurring bifurcations of invariant (KAM) tori of various dimensions have to be thought as the typical bifurcations expected to take place in a Hamiltonian system of four degrees of freedom. The KAM tori change their stability depending on the bifurcation they experience. From this viewpoint the richness in the dynamic behaviour of the Riemann ellipsoids is evident, a fact already seen by Chandrasekhar \cite{chandrasekhar1969ellipsoidal} and Fass\`o and Lewis \cite{fasso2001stability}. The case of the bifurcation described in Section \ref{global} is different because it involves two types of ellipsoids, namely, $S_2$ and type III.
\\

In general the computations in the work are lengthy; that is why in the text we have written down the most abridged ones, while the rest is comprised in a {\sc Mathematica} 13.2 file attached to this manuscript. In this file we have included the derivation of all formulae providing detailed explanations. The calculations performed in the file are usually quite involved and they often need careful simplification rules towards getting compact expressions. In this respect there is a clear distinction between the treatment of $S$-ellipsoids where the formulae are long but manageable and the treatment of types I, II and III where the computations become enormous, although they are affordable to extract useful information regarding the bifurcations of the problem. As well, we have checked our findings with care, both analytically and numerically. The {\sc Mathematica} program runs on medium-scale computers, such as laptops with 2,9 GHz Intel Core i7 processor and 16Gb of memory.
\\  

%%%%%%%%%%%%%%%%%%%%%%%%%%%%%%
\section{Formulation of the problem}
\label{Formulation}
%%%%%%%%%%%%%%%%%%%%%%%%%%%%%%

Following the detailed description appearing in \cite{fasso2001stability} we start by summarising the essential steps and notations to state the formulation of the problem. 
\\

Riemann uses the singular value decomposition of matrices to formulate system (\ref{Dirichlet}) with $F$ in  (\ref{Riemann_svd}). Given $F\in SL( 3 )$, in any singular value decomposition there exist matrices \( U_l \) and \( U_r \) such that $F = U_l A U_r^{T}$, where \( A \) is a diagonal matrix (the singular matrix) whose diagonal elements are the eigenvalues of \( F \). The ordering fixed for the elements in matrix \( A \) is $a_1 \ge a_2 \ge a_3 > 0$. In like manner, $\Omega_l = U_l^T \dot{U}_l$ and $\Omega_r = U_r^T \dot{U}_r$.
\\

At this point we introduce the potential function
\[\mathcal{V}( A ) = -2 \pi \g \int_{0}^{\infty} \left( ( s + a_1^2 ) ( s + a_2^2 ) ( s + a_3^2 ) \right)^{-1/2} ds, \]
where $\g$ denotes the gravitational constant.
\\

Dirichlet shows that (\ref{Dirichlet}) is a solution of the hydrodynamical equation for an ideal incompressible homogeneous self-gravitating fluid with constant pressure at the boundary when
\be
\mathbb{P}_{F} \left[ \ddot{F} + U_l \mathcal{V}'(A) U_r^T \right] = 0,
\label{Dirichlet_solution}
\ee
where $\mathcal{V}' = \mbox{diag} \left( 
\frac{\partial \mathcal{V}}{\partial a_1}, 
\frac{\partial \mathcal{V}}{\partial a_2}, 
\frac{\partial \mathcal{V}}{\partial a_3} \right)$ and $\mathbb{P}_F( G ) = G - \frac{1}{3} \langle G, F \rangle F^{-T}$, for any $G \in L( 3 )$, with $L( 3 )$ denoting the group of motions of the three-dimensional Euclidean space and $\langle \,, \rangle$ denoting the standard inner product in $L( 3 )$.
\\

After Riemann's reformulation, the previous condition is translated into
\be
\mathbb{P}_{A}\left[\ddot{A}
+ 2( \Omega_l {\dot A} - {\dot A} \Omega_r )
+ {\dot\Omega_l} A - A {\dot \Omega_r}
+ \Omega_l^2 A - 2 \Omega_l A \Omega_r + A \Omega_r^2
+ {\mathcal V}'( A )
\right] = 0.
\label{Riemann_solution}
\ee
This equation determines a second-order differential system on the manifold $\mathcal{A}\times SO( 3 ) \times SO( 3 )$, where
\[
{\mathcal A} = \left\{ \mbox{diag} \left( a_1, a_2, ( a_1 a_2 )^{-1} \right) \, : \, a_1 > a_2 > a_1^{-1/2} \right\}.
\]
Riemann's condition (\ref{Riemann_solution}) is equivalent to the restriction of Dirichlet's condition (\ref{Dirichlet_solution}) to the submanifold
\[ 
Q = \left\{ F\in SL( 3 ) \, : \, a_1 > a_2 > a_1^{-1/2} \right\}.
\]
As shown in \cite{fasso2001stability}, the two conditions are related by a four-to-one covering.
\\

In the following we present Riemann's equation (\ref{Riemann_solution}) in Hamiltonian form on the cotangent bundle of $\mathcal{A} \times SO( 3 ) \times SO( 3 )$. As a first step, a diffeomorphism is established between $\mathcal{A}$ and 
\ba
\label{eq:B}
\mathcal{B} = \left\{ b = ( b_1, b_2 ) \in \mathbb{R}^2 \, : \, b_1 > b_2 > \frac{1}{\sqrt{b_1}} > 0 \right\},
\ea
where $( b_1, b_2 )$ are the first two singular values of $F$ and $b_3 = ( b_1 b_2 )^{-1}$. After due identifications (see \cite{fasso2001stability}) it is possible to pass to the sixteen-dimensional manifold ${\mathcal M} = {\mathcal B} \times {\mathbb R}^2 \times ( SO( 3 ))^2 \times ( {\mathbb R}^3 )^2$, which is diffeomorphic to the cotangent bundle of $\mathcal{A} \times SO( 3 ) \times SO( 3 )$. 
\\

Proposition 2 in \cite{fasso2001stability} establishes that Riemann's equation (\ref{Riemann_solution}) on $\mathcal{A} \times SO( 3 ) \times SO( 3 )$ is equivalent to the following Hamiltonian defined on ${\mathcal M}$
\bas
H( b, c, U, m ) = \mbox{$\frac{1}{2}$} c \cdot
\mathcal{K}( b ) \cdot c + \mbox{$\frac{1}{2}$} m \cdot \mathcal{J}( b ) \cdot m + \mathcal{V}( b ),
\eas
with 
\bas
	\mathcal{K}( b ) = \frac{1}{b_1^2 b_2^2 + b_1^2 b_3^2 + b_2^2 b_3^2}
	\left( 
 \begin{array}{cc} 
 b_1^2 ( b_2^2 + b_3^2 ) & - b_3 \\ \noalign{\medskip} 
 -b_3 & b_2^2 ( b_1^2 + b_3^2 ) 
 \end{array}
	\right), 
\eas 
\bas 
\mathcal{J}( b ) = 
\left(
	\begin{array}{cccccc}
	   \frac{b_2^2 + b_3^2}{( b_2^2 - b_3^2 )^2} & 0 & 0 & \frac{2\, b_2 b_3}{( b_2^2 - b_3^2 )^2} & 0 & 0 \\
		 0 & \frac{b_1^2 + b_3^2}{( b_1^2 - b_3^2 )^2} & 0 & 0 & \frac{2\, b_1 b_3}{( b_1^2 - b_3^2 )^2} & 0 \\
	   0 & 0 & \frac{b_1^2 + b_2^2}{( b_1^2 - b_2^2 )^2} & 0 & 0 & \frac{2\, b_1 b_2}{( b_1^2 - b_2^2 )^2} \\
	   \frac{2\, b_2 b_3}{( b_2^2 - b_3^2 )^2} & 0 & 0 & \frac{b_2^2 + b_3^2}{( b_2^2 - b_3^2 )^2} & 0 & 0 \\
	   0 & \frac{2\, b_1 b_3}{( b_1^2 - b_3^2 )^2} & 0 & 0 & \frac{b_1^2 + b_3^2}{( b_1^2 - b_3^2 )^2} & 0 \\
	   0 & 0 & \frac{2\, b_1 b_2}{( b_1^2 - b_2^2 )^2} & 0 & 0 & \frac{b_1^2 + b_2^2}{( b_1^2 - b_2^2 )^2} \\
	\end{array}
\right)
\eas
and ${\mathcal V}$ the self-gravitational potential 
\[ {\mathcal V}( b ) = -\frac{4 \pi \g}{\sqrt{b_1^2 - b_3^2}} \, F \left( \arccos \left( \frac{b_3}{b_1} \right) \bigg\vert \, \frac{b_1^2 - b_2^2}{b_1^2 - b_3^2} \right), 
\]
where $F( \phi \mid k )$ stands for the incomplete elliptic integral of the first kind, for \( \phi \in ( 0, \frac{\pi}{2} ) \) and \( k \in ( 0, 1 ) \), i.e.
\bas
F( \phi \mid k ) = \int_0^{\phi} \left( 1 - k^2 \sin^2( \theta ) \right)^{-1/2} d\theta.
\eas
Notice that this is not a standard notation. It corresponds to the way {\sc Mathematica} and also \cite{fasso2001stability} handle it. Hamiltonian $H$ is invariant under a symplectic action of $SO( 3 ) \times SO( 3 )$ on ${\mathcal M}$ and, consequently, the reduced space is the eight-dimensional symplectic manifold given by  
\ba\label{eq:p}
 P_{L,R} = \mathcal{B} \times \mathbb{R}^2 \times
  \left( S^2_{L} \times S^2_{R} \right)
\ea
(see Proposition 3 in \cite{fasso2001stability}), where $S^{2}_\rho$ is the sphere of radius $\rho$, $L = \| \eta_l \|$, $R = \| \eta_r \|$ are fixed,  $m = ( \eta_l, \eta_r) \neq ( 0, 0 )$ with $\eta_l = ( m_1, m_2, m_3 ) \in \mathbb{R}^3$ and $\eta_r = ( m_4, m_5, m_6 ) \in \mathbb{R}^3$. Vectors $\eta_l$ and $\eta_r$ refer to the angular momentum and circulation (or vorticity or angular velocity) vectors, respectively. They correspond to the $\omega_l$ and $\omega_r$ of the introduction. 
The reduced Hamiltonian is 
\[
\begin{array}{ccccc}
H &:& P_{L, R} & \longrightarrow & \mathbb{R} \\
&&(b, c, \eta_l, \eta_r) & \mapsto & H( b, c, \eta_l, \eta_r ),
\end{array}\]
where $c = ( c_1, c_2 )$ are the momenta conjugate to $b = ( b_1, b_2 )$. Besides, $\eta_l, \eta_r$ have the following Poisson structure:
\[
\begin{array}{lcllcllcl}
\{ m_1, m_2 \} &=& m_3, \quad
\{ m_1, m_3 \} &=& -m_2, \quad
\{ m_2, m_3 \} &=& m_1,\\[1ex]
\{ m_4, m_5 \} &=& m_6, \quad
\{ m_4, m_6 \} &=& -m_5, \quad
\{ m_5, m_6 \} &=& m_4,
\end{array}
\]
with the rest of the Poisson brackets equal to zero.
\\

Hence, the simplified form of the Hamiltonian defined on the manifold $P_{L, R}$ is given by
\begin{equation}
\begin{array}{lcl}
	 H( b, c, m ) &=& \displaystyle
\frac{( b_2^2 + b_3^2 ) b_1^2 c_1^2 + ( b_1^2 + b_3^2 ) b_2^2 c_2^2 - 2\, b_3 c_1 c_2}{2\, (b_1^2 b_2^2 + b_1^2 b_3^2 + b_2^2 b_3^2)}
\\[3ex]&& \displaystyle
	+ \, \frac{( b_1^2 + b_2^2 ) ( m_3^2 + m_6^2 ) + 4\, b_1 b_2 m_3 m_6}{2\, ( b_1^2 - b_2^2 )^2}
\\[3ex]&& \displaystyle
	+ \, \frac{( b_1^2 + b_3^2 ) ( m_2^2 + m_5^2 ) + 4\, b_1 b_3 m_2 m_5}{2\, ( b_1^2 - b_3^2 )^2}
\\[3ex]&& \displaystyle
	+ \, \frac{( b_2^2 + b_3^2 ) ( m_1^2 + m_4^2 ) + 4\, b_2 b_3 m_1 m_4}{2\, ( b_2^2 - b_3^2 )^2} 
\\[3ex]&& \displaystyle
    - \, \frac{4 \pi \g}{\sqrt{b_1^2 - b_3^2}} F \left(\arccos \left( \frac{b_3}{b_1} \right) \bigg\vert \, \frac{b_1^2 - b_2^2}{b_1^2 - b_3^2} \right).
	\end{array}
 \label{Hamiltonian}
 \end{equation}
Hamiltonian $H( b, c, m )$ represents the Hamiltonian function in the coordinates $b_1, b_2$ with respective conjugate momenta $c_1, c_2$, and the three-dimensional vectors $\eta_l = ( m_1, m_2, m_3 )$ and $\eta_r = ( m_4, m_5, m_6 )$. Notice that $b_3$ is related to $b_1$, $b_2$ through the constraint $b_1 b_2 b_3 = 1.$ The reduced system has $4$ degrees of freedom and reads as
\begin{equation}
\left\{ \hspace*{-0.1cm}
\begin{array}{rcl}
\displaystyle \frac{d b}{d t} &=& \displaystyle\frac{\partial H}{\partial c}, \\[1.6ex]
\displaystyle \frac{d c}{d t} &=& \displaystyle-\frac{\partial H}{\partial b}, \\[1.6ex]
\displaystyle \frac{d \eta_l}{d t} &=& \displaystyle \eta_l \times \frac{\partial H}{\partial \eta_l}, \\[1.6ex]
\displaystyle \frac{d \eta_r}{d t} &=& \displaystyle\eta_r \times \frac{\partial H}{\partial \eta_r}, \\[1.6ex]
\end{array} \right.
\label{reduced_system}
\end{equation}
except one case, the so-called {\em irrotational} ellipsoid, where either $\eta_l = 0$ or $\eta_r = 0$. In this case the Hamiltonian system has 3 degrees of freedom and the reduced space becomes $P_{L} = \mathcal{B} \times \mathbb{R}^2 \times S^2_{L}$ or $P_{R} = \mathcal{B} \times \mathbb{R}^2 \times S^2_{R}$.

%%%%%%%%%%%%%%%%%%%%%%%%%%%%%%%%%%%%%%%%%%%%%%%%%%
\section{The equilibria: Riemann ellipsoids}
\label{Equilibria}
%%%%%%%%%%%%%%%%%%%%%%%%%%%%%%%%%%%%%%%%%%%%%%%%%%

This section is intended to the introduction of the five types of Riemann ellipsoids, which are the equilibria of the reduced system (\ref{reduced_system}). Most of the formulae we present are provided in Section 3 of \cite{fasso2001stability}; in particular, see Proposition 4 and lemmas 3 and 4 (in Appendix A). The expressions that follow are key to the development of the rest of the paper. Notice that $c^\ast = ( 0, 0 )$ at the equilibrium and we denote $b^\ast = ( b_1^\ast, b_2^\ast, b_3^\ast )$ also at the equilibrium. 
\\

With the aim of establishing the existence of the different types of equilibria in the space ${\mathcal B}$ the following functions are introduced:
\begin{equation}
\begin{array}{ccl}
G( x, y, z ) &=& x^2 ( y^2 - z^2 ) C_1( x, y, z ) + ( y^2- 4\, z^2 ) \big( z^2 C_1( x, y, z ) \\[0.8ex]
&& \hspace*{5.5cm} + \, C_2( x, y, z ) \big), \\[1.5ex]
D( x, y, z ) &=& x^2 ( y^2 - z^2 ) + z^2 ( 4\, z^2 - y^2 ), \\[1.5ex]
G^{S}_{\pm}( x, y, z ) &=& \displaystyle\frac{( x \mp z )^4}{x z}
\Big( \big( x y^2 z \pm( x^2 y^2 - x^2 z^2 + y^2 z^2 ) \big) C_1( x, y, z ) 
\\[0.8ex]
&&\hspace*{1.6cm} + \, ( x z \pm y^2 ) C_2( x, y, z ) \Big),
\\[1.5ex] 
N^S_{\pm}(x, y, z) &=& \frac{1}{2} \left( \sqrt{G^S_+( x, y, z )} \pm \sqrt{G^S_- ( x, y, z )} \right), \\[1.5ex]
G^{R}_{\pm}( x, y, z ) &=&
\displaystyle
( y \mp z )^4
\left( x^2 - ( y \pm 2\, z )^2 \right)
\frac{( x^2 - z^2 ) G( x, y, z )}{( x^2 - y^2 ) D( x, y, z )}, \\[2.4ex]
N^R_{\pm}( x, y, z ) &=& \frac{1}{2} \Big(\sqrt{G^R_+( x, y, z )} \pm \sqrt{G^R_-( x, y, z )} \Big),
\end{array}
\label{eq:Gs}
\end{equation}
where 
\bas
\begin{array}{ccc}
C_{1}( x, y, z ) &=& \displaystyle 2 \pi \g \int_{0}^{\infty} \left( ( s + x^2 )( s + y^2 )( s + z^2 ) \right)^{-3/2} s \, ds,
\\[2.4ex]
C_{2}( x, y, z ) &=& \displaystyle 2 \pi \g \int_{0}^{\infty} \left( ( s + x^2 )( s + y^2 )( s + z^2 ) \right)^{-3/2} s^2 \, ds.
\end{array}
\label{eq:c1}
\eas
These integrals have been treated numerically in \cite{chandrasekhar1969ellipsoidal, fasso2001stability}. In Appendix \ref{C1C2} we provide analytical expressions of them. \\

Using these functions the domains of existence of the five equilibria are given by
\[
\begin{array}{lcl}
{\mathcal B}_{S_2} &=& \Big\{ b \in {\mathcal B} \, : \, G_-^S( b_1^\ast, b_2^\ast, b_3^\ast ) \geq 0 \Big\}, \nonumber\\[1.5ex]
{\mathcal B}_{S_3} &=& \Big\{ b \in {\mathcal B} \, : \, G_+^S( b_1^\ast, b_3^\ast, b_2^\ast ) \geq 0 \Big\}, \nonumber\\[1.5ex]
{\mathcal B}_{\rm I} &=& \Big\{ b \in {\mathcal B} \, : \, b_1^\ast \leq 2 \, b_2^\ast - b_3^\ast \Big\},
\nonumber\\[1.5ex]
{\mathcal B}_{\rm II} &=& \Big\{ b \in {\mathcal B} \, : \, b_1^\ast \geq 2\, b_2^\ast + b_3^\ast, \,\, D( b_1^\ast, b_3^\ast, b_2^\ast) < 0 \Big\},
\nonumber\\[1.5ex]
{\mathcal B}_{\rm III} &=& \Big\{ b\in {\mathcal B} \, : \, b_1^\ast \geq b_2^\ast + 2\, b_3^\ast, \,\, G( b_1^\ast, b_2^\ast, b_3^\ast ) > 0 \Big\}.
\end{array}
\]
They are represented in the subsequent sections. There is an overlapping among the different regions, despite the fact that there exists a small portion of the parametric plane where there are no Riemann ellipsoids. Fig. \ref{fig:mesh1} contains the superposition of the five regions in the parametric plane $b_2^\ast/b_1^\ast$--$b_3^\ast/b_1^\ast$.
\\

\begin{figure}[ht]
    \centering
    \includegraphics[width=0.6\textwidth]{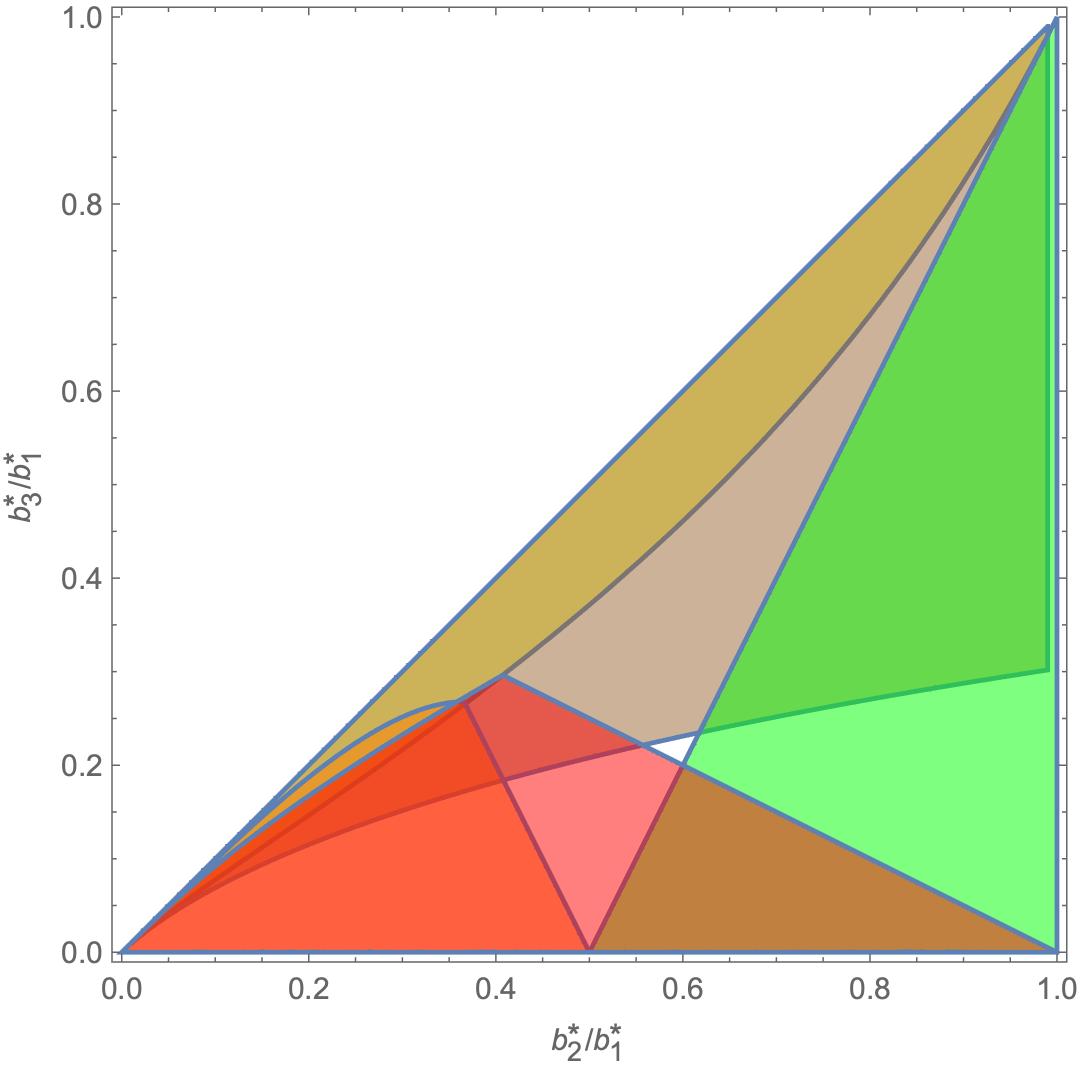}
    \caption{Superposition of the regions of existence of the five Riemann ellipsoids in the plane $b_2^\ast/b_1^\ast$--$b_3^\ast/b_1^\ast$}
    \label{fig:mesh1}
\end{figure}
The five types of Riemann ellipsoids appear in Table \ref{table:nonlin}, together with their regions of existence and their coordinates in the reduced space. The canonical basis of ${\mathbb R}^3$ is denoted as $\{ e_1, e_2, e_3 \}$. 
\\
\begin{table}[ht]
\caption{The first column shows the five types of Riemann ellipsoids. In the second column regions of existence in the plane $b_2^\ast/b_1^\ast$--$b_3^\ast/b_1^\ast$ are indicated. The third column accounts for the vectors $\mu^{\pm}_\alpha( b^{\ast} )$
standing for $\eta_l( b^{\ast} )$, $\eta_r( b^{\ast} )$ such that the coordinates in $S_L^2 \times S_R^2$ become $( \mu^+_\alpha( b^{\ast} ), \mu^-_\alpha( b^{\ast} ))$ or $( \mu^-_\alpha( b^{\ast} ), \mu^+_\alpha( b^{\ast} ))$.  Hence, the coordinates of an ellipsoid of type $\alpha$ in $P_{L, R}$ are $( b^\ast, 0, \mu^{+}_\alpha( b^{\ast} ), \mu^{-}_\alpha( b^{\ast} ))$ or $( b^\ast, 0, \mu^{-}_\alpha( b^{\ast} ), \mu^{+}_\alpha( b^{\ast} ))$} 
\centering 
\begin{tabular}{c c c} 
\hline\hline 
Type ($\alpha$) & Region $( b_2^\ast/b_1^\ast, b_3^\ast/b_1^\ast )$ & $\mu^{\pm}_\alpha( b^{\ast} )$ \\ [0.5ex] 
\hline \\[-2.3ex]
$S_2$ & ${\mathcal B}_{S_2}$ & $N^S_{\pm}( b_1^\ast, b_2^\ast, b_3^\ast ) e_2$ \\[2ex] 
$S_3$ & ${\mathcal B}_{S_3}$ & $N^S_{\pm}( b_1^\ast, b_3^\ast, b_2^\ast ) e_3$ \\[2ex]
I & ${\mathcal B}_{\rm I}$ & $N^R_{\pm}( b_1^\ast, b_3^\ast, b_2^\ast ) e_1 + N^R_{\pm}( b_3^\ast, b_1^\ast, b_2^\ast ) e_3$ \\[2ex]
II & ${\mathcal B}_{\rm II}$ & $N^R_{\pm}( b_1^\ast, b_3^\ast, b_2^\ast ) e_1 + N^R_{\mp}( b_3^\ast, b_1^\ast, b_2^\ast ) e_3$ \\[2ex]
III & ${\mathcal B}_{\rm III}$ & $N^R_{\pm}( b_1^\ast, b_2^\ast, b_3^\ast ) e_1 + N^R_{\mp}( b_2^\ast, b_1^\ast, b_3^\ast ) e_2$ \\[1ex] 
\hline 
\end{tabular}
\label{table:nonlin} 
\end{table}

\begin{remark}
The reduced system is invariant under both a ${\mathbb Z}_2$ and a ${\mathbb Z}_4$ action. A Riemann ellipsoid can be identified with the $({\mathbb Z}_4 \times {\mathbb Z}_2)$-orbit of an equilibrium of the reduced system, and it can consist of eight, four, or two equilibrium points on the reduced phase space, this depending on the number of zeroes vectors $\eta_l$, $\eta_r$ have; see more details in Proposition 4 of \cite{fasso2001stability}. As ${\mathbb Z}_2$, ${\mathbb Z}_4$ are discrete symmetries, their application to further reduce the system would introduce singularities in the reduced space. In view of this, we do not reduce the system further and work with regular reduction techniques. Finally, following Fass\`o and Lewis, we distinguish between relative equilibria whose projections in $S_L^2 \times S_R^2$ have coordinates $( \mu^+_\alpha( b^{\ast} ), \mu^-_\alpha( b^{\ast} ))$ or $( \mu^-_\alpha( b^{\ast} ), \mu^+_\alpha( b^{\ast} ))$, calling them adjoint equilibria.
\label{adjoint}
\end{remark}

In the following sections we describe the bifurcations of the equilibria. As a first step, their linear stability is determined. For that, we calculate the associated symplectic linear normal form and see that the equilibrium's linearisation matrix is diagonalisable \cite{CushmanBurgoyne, LaubMeyer1974}. Here we follow Markeev's procedure \cite{Markeev} to bring the linear Hamiltonian system (i.e. the one corresponding to the quadratic terms of the Hamilton function) to diagonal form. The algorithm is described in Appendix \ref{Linearization} and is designed for elliptic equilibria in Hamiltonian systems. The next step is the analysis of the non-linear stability and the bifurcations. We start by studying the stability of $S_2$-ellipsoids.
\\

%%%%%%%%%%%%%%%%%%%%%%%%%%%%%%%%
\section{Stability of $S_2$-ellipsoids and the quasi-periodic pitchfork bifurcation in $\mathcal{B}_{S_2}$}
\label{SectionS2}
%%%%%%%%%%%%%%%%%%%%%%%%%%%%%%%%

The linear stability analysis of the $S_2$-ellipsoids is performed analytically without particularising for specific values of $( b_2^\ast/b_1^\ast, b_3^\ast/b_1^\ast )$ on a grid of points in the parametric plane, albeit the expressions are quite big. Nevertheless, the computations are much easier for these ellipsoids and for $S_3$ than they are for types I, II and III. The reason stems from the discrete symmetries of the problem and from the fact that the angular frequencies $\eta_l$ and $\eta_r$ are parallel to the same principal axis of the ellipsoid. This implies that the Hessian matrix and the associated linearisation matrix contain several zero blocks. \\

Recall that the region of existence for $S_2$-ellipsoids is defined by
\bas
\mathcal{B}_{S_2} = \Big\{ b\in \mathcal{B} \, : \, G^S_{-}( b_1^\ast, b_2^\ast, b_3^\ast ) \ge 0 \Big\},
\eas
where \( \mathcal{B} \) is given in \eqref{eq:B}. The region is represented in Fig. \ref{fig:S2}. It is enclosed between the lines $b_2^\ast = b_3^\ast$ and $G^S_{-}( b_1^\ast, b_2^\ast, b_3^\ast ) = 0$. The green line is defined through the identity $G^{S}_{+}( b_1^\ast, b_2^\ast, b_3^\ast ) = G^{S}_{-}( b_1^\ast, b_2^\ast, b_3^\ast )$ and corresponds to irrotational ellipsoids. On this line the Hamiltonian equation gets reduced to a system with three degrees of freedom. Above the line the two momenta of the ellipsoids are counter-parallel, whereas they are co-parallel below the curve.  \\

\begin{figure}[htb]
    \centering
    \includegraphics[width=0.6\textwidth]{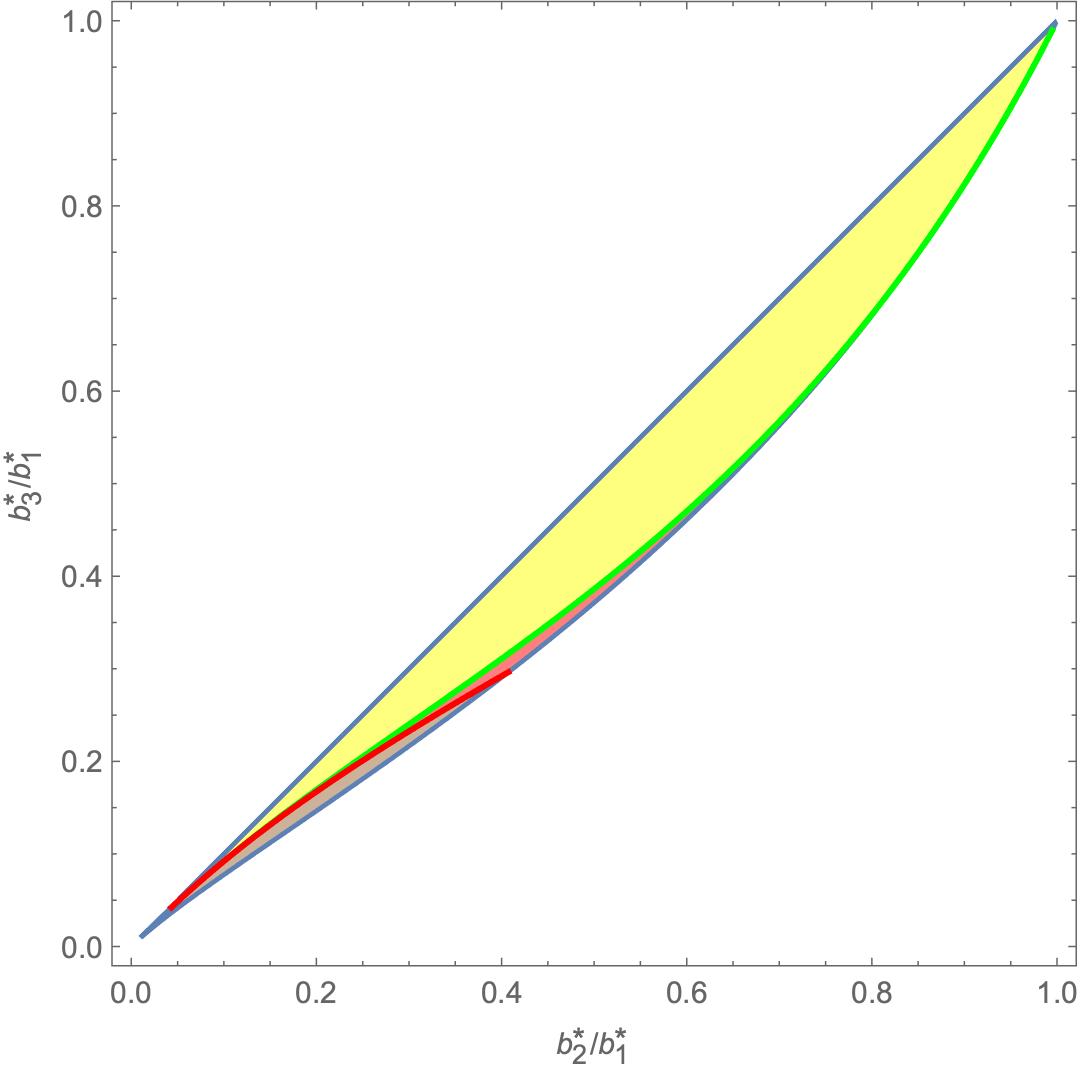}
    \caption{${\mathcal B}_{S_2}$: Region of existence of the $S_2$-Riemann ellipsoids in the parametric plane $b_2^\ast/b_1^\ast$--$b_3^\ast/b_1^\ast$. The green line ( $G^{S}_{+}( b_1^\ast, b_2^\ast, b_3^\ast ) = G^{S}_{-}( b_1^\ast, b_2^\ast, b_3^\ast )$) corresponds to the irrotational ellipsoids, whereas the red-one ($G = 0$) accounts for a quasi-periodic Hamiltonian pitchfork bifurcation of co-parallel $S_2$-ellipsoids. In the yellow subregion the ellipsoids are Liapunov stable, while in the pink one they are linearly stable and in the brown one they are unstable}
    \label{fig:S2}
\end{figure}

The region is divided into three sub-regions with different dynamics. Counter-parallel $S_2$-ellipsoids are Liapunov stable, as Riemann already stated \cite{Riemann}. Co-parallel $S_2$-ellipsoids result to be linearly stable with indefinite quadratic Hamiltonian function, thus their Liapunov stability is not known from the linear analysis and a non-linear investigation is due. 
\\
 
 We prove that co-parallel ellipsoids undergo a Hamiltonian pitchfork bifurcation of quasi-periodic nature. The red curve ($G = 0$) corresponds to a supercritical quasi-periodic pitchfork bifurcation of invariant $3$-tori. An elliptic $3$-torus above the curve becomes parabolic on $G = 0$ and then it turns hyperbolic when crossing the bifurcation line. Additionally, two elliptic $3$-tori are born when the first torus changes its stability. Furthermore, the appearance of the elliptic tori is associated to a global bifurcation involving type-III ellipsoids and it will be described in Section \ref{global}. The proof of the pitchfork bifurcation of invariant tori is partially based on KAM theory. We follow \cite{hanssmann2006local} (Section 4.1), but references \cite{LitvakHinenzonRomKedar2002, LitvakHinenzonRomKedar2002Nonlinearity} are also illustrative. 
 \\
 
The values of $\mu^{\pm}_{S_2}( b^{\ast} )$ introduced in the previous section are
\begin{equation}
    \label{Nes}
N^{S}_{\pm}( b_1^\ast, b_2^\ast, b_3^\ast ) e_2 = \left( 0, \mbox{$\frac{1}{2}$} \Big( \sqrt{G^{S}_{+}( b_1^\ast, b_2^\ast, b_3^\ast )} \pm \sqrt{G^{S}_{-}( b_1^\ast, b_2^\ast, b_3^\ast )} \Big), 0 \right),
\end{equation}
see Table \ref{table:nonlin}. We check that these expressions are well defined. Due to the fact that in the $\mathcal{B}_{S_2}$-region \(G^S_{-}( b_1^\ast, b_2^\ast, b_3^\ast ) \ge 0 \) holds, the only thing that should be checked is ${G^{S}_{+}( b_1^\ast, b_2^\ast, b_3^\ast )} \ge 0.$ From \eqref{eq:Gs} we obtain 
\[
\begin{array}{lcl}
G_{+}^{S}( b_1^\ast, b_2^\ast, b_3^\ast ) &=& 
\displaystyle 
\frac{( b_1^{\ast 2} b_2^\ast - 1 )^4}{b_1^{\ast 6} b_2^{\ast 5}} 
\Big(\big( b_2^{\ast 2} + b_1^{\ast 4} b_2^{\ast 4} + b_1^{\ast 2} ( b_2^{\ast 3} - 1 ) \big) C_1( b_1^\ast, b_2^\ast, b_3^\ast ) \\
&& \hspace*{2.1cm} + \, \big( b_1^{\ast 2} b_2^\ast ( b_2^{\ast 3} + 1 ) \big) C_2( b_1^\ast, b_2^\ast, b_3^\ast) \Big).
  \end{array}
\] 
Note that $C_{i}( b^\ast )$ are non-negative, as they are integrals of positive functions. 
Additionally, using \eqref{eq:B} it is readily deduced that the terms factorising $C_1( b^\ast )$ and $C_2( b^\ast )$ are also positive. Hence, $G_{+}^{S}( b_1^\ast, b_2^\ast, b_3^\ast )$ is non-negative in this region.
\\

The following theorem is the main result regarding the dynamics of $S_2$-ellipsoids. 
		
\begin{thm} 
\label{S2Theorem}
Region $\mathcal{B}_{S_2}$ is divided into three sub-regions with the following features:

\begin{enumerate}

\item [i.] The first subregion is bounded by the lines $b_2^\ast = b_3^\ast$ and $G^{S}_{+}( b_1^\ast, b_2^\ast, b_3^\ast ) = G^{S}_{-}( b_1^\ast, b_2^\ast, b_3^\ast )$ (green curve in Fig. \ref{fig:S2}) and corresponds with counter-parallel $S_2$-ellipsoids. The counter-parallel $S_2$-ellipsoids in the interior of this subregion and the irrotational ones
on the green curve are Liapunov stable.
\\
 
\item[ii.] The second subregion is delimited above by the line $G^{S}_{+}( b_1^\ast, b_2^\ast, b_3^\ast ) = G^{S}_{-}( b_1^\ast, b_2^\ast, b_3^\ast )$ and below by the curves \( G = 0 \) and $G^s_{-}( b_1^\ast, b_2^\ast, b_3^\ast ) = 0$: \\

\begin{itemize}
	    
\item The co-parallel $S_2$-ellipsoids are linearly stable inside this region and on this part of the line $G^s_{-}( b_1^\ast, b_2^\ast, b_3^\ast ) = 0$.

\item The curve \(G = 0\) (red line in Fig. \ref{fig:S2}) corresponds to a Hamiltonian pitchfork bifurcation of quasi-periodic nature. 
	 
\end{itemize}

\item[iii.] The third subregion is bounded from above by the curve $G = 0$ and by $G^s_{-}( b_1^\ast, b_2^\ast, b_3^\ast ) = 0$ from below and corresponds with co-parallel $S_2$-ellipsoids. Inside this region the ellipsoids are unstable.
\end{enumerate}
\end{thm}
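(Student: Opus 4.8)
The plan is to dissect $\mathcal{B}_{S_2}$ by a spectral study of the linearisation of \eqref{reduced_system} at the relative equilibrium $p^\ast=(b^\ast,0,\mu^+_{S_2}(b^\ast),\mu^-_{S_2}(b^\ast))$, followed by a normal-form and KAM treatment on the bifurcation curve. First I would compute the Hessian of $H$ at $p^\ast$, exploiting the structure noted above: since $\eta_l,\eta_r$ are both parallel to $e_2$, the Hessian decouples into blocks with many vanishing entries, so its characteristic polynomial factorises. In the counter-parallel subregion (i) I expect the Hessian of $H$ restricted to $T_{p^\ast}P_{L,R}$ to be sign-definite; definiteness makes $p^\ast$ a strict local extremum of the conserved $H$ on the symplectic leaf $P_{L,R}$, which yields Liapunov stability at once. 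On the green curve $G^S_+=G^S_-$ one degree of freedom drops out (the irrotational reduction to $P_L$ or $P_R$), and I would verify that definiteness persists in the limit, so that Liapunov stability survives there too.

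For subregion (ii) the strategy is to show that the eigenvalues of the linearisation are purely imaginary and semisimple while the quadratic Hamiltonian is indefinite. Markeev's algorithm (Appendix \ref{Linearization}) supplies the symplectic transformation that diagonalises the quadratic part and simultaneously exhibits the signs of the elliptic modes; an indefinite signature is exactly what makes the equilibrium linearly stable yet inaccessible to a linear Liapunov conclusion. Following the factorised characteristic polynomial across the boundary, I would check that three eigenvalue pairs remain purely imaginary and bounded away from the origin throughout, that one pair stays imaginary along the part of $G^S_-=0$ adjacent to this subregion, and, crucially, that this last pair collides at the origin as the parameters reach $G=0$. Below $G=0$, in subregion (iii), the same pair splits onto the real axis, producing a hyperbolic direction and hence spectral (and therefore Liapunov) instability; this is the elliptic $\to$ parabolic $\to$ hyperbolic transition asserted in the theorem.

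The core of the proof is to identify the transition on $G=0$ as a supercritical quasi-periodic Hamiltonian pitchfork, for which I would follow \cite{hanssmann2006local} (Section 4.1). The three surviving elliptic pairs realise the equilibrium as a normally-elliptic invariant $3$-torus, while the vanishing pair supplies the central normal direction that carries the bifurcation. After isolating that direction and putting the Hamiltonian into ${\mathbb Z}_2$-equivariant pitchfork normal form, the normal dynamics is governed by a planar model $\frac{1}{2}\,y^2+\frac{1}{4}\,\beta\,x^4-\frac{1}{2}\,\delta\,x^2$, where the detuning $\delta$ changes sign across $G=0$ and $\beta\neq 0$ fixes the criticality. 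The ${\mathbb Z}_2$ symmetry of the reduced system (Remark \ref{adjoint}) is precisely what forbids a linear term in $x$ and thus forces a pitchfork rather than a saddle-centre: as the central $3$-torus turns hyperbolic, two new elliptic $3$-tori branch off symmetrically, in agreement with the stated dynamics and with the global picture of Section \ref{global}. Persistence of the Diophantine members of these torus families then follows from the quasi-periodic bifurcation theory of \cite{hanssmann2006local}, subject to the appropriate KAM non-degeneracy.

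The main obstacle I anticipate is the verification of these non-degeneracy conditions along $G=0$. The quartic coefficient $\beta$ that distinguishes the super- from the sub-critical pitchfork (and certifies that the bifurcation is non-degenerate) is a very large closed-form expression assembled from $C_1$, $C_2$ and the elliptic integral; I would first simplify it with the identities of Appendix \ref{C1C2} and then confirm $\beta\neq 0$ by sampling the curve $G=0$ numerically. As stressed in the introduction, this does not weaken the rigour, since it only certifies the non-vanishing of an analytically computed quantity on the bifurcation line. The Kolmogorov/R\"ussmann frequency condition required for the KAM persistence of the tori has to be examined in the same way, numerically where the symbolic expressions become unmanageable.
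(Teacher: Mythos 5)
Your proposal follows essentially the same route as the paper: Markeev's linear normal form splits $\mathcal{B}_{S_2}$ exactly as you describe (a definite quadratic form giving Liapunov stability for counter-parallel and irrotational ellipsoids via the Dirichlet criterion, an indefinite but elliptic form giving linear stability above $G=0$, one pair of eigenvalues passing through zero and becoming real below it), and the bifurcation itself is established through Hanßmann's quasi-periodic pitchfork theorem, with the symmetry of the $S$-type ellipsoids killing the odd terms, the quartic coefficient (the paper's $Q_4$, your $\beta$) verified to be nonzero numerically along $G=0$, and the KAM non-degeneracy (submersion/frequency-map condition) also checked numerically. The only ingredient of the paper's proof absent from your outline is the explicit removal of small neighbourhoods of the finitely many low-order resonance points on $G=0$ (where $-\omega_1+\omega_3$, $-\omega_1+2\,\omega_2$, $-2\,\omega_2+\omega_3$ or $-\omega_1+3\,\omega_2$ vanish), at which the denominators of the normalising generating functions blow up and the normal-form construction, hence the conclusion, is not valid.
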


\begin{proof}
We choose the ellipsoid with $( S^2_L \times S^2_R )$-coordinates $( \mu^{+}_{S_2}( b^{\ast} ), \mu^{-}_{S_2}( b^{\ast} ))$, the study being the same for its adjoint ellipsoid. Parameters $L$, $R$ satisfy 
\[ 
L = N^{S}_{+}( b^{\ast} ), \quad R = \vert N^{S}_{-}( b^{\ast} ) \vert, 
\]
where $N^S_{\pm}$ has been introduced in (\ref{Nes}). Counter-parallel $S_2$-ellipsoids are represented by the equilibrium with coordinates $( b_1^\ast,b_2^\ast, 0,0, 0,-L,0, 0,R,0 )$.  Thus, the projection onto $S^2_L \times S^2_R$ corresponds to the South-North poles of the two-spheres. The coordinates of co-parallel ellipsoids are $( b_1^\ast,b_2^\ast, 0,0, 0,L,0, 0,R,0 )$ and the projection onto $S^2_L \times S^2_R$ corresponds to the North-North poles of the two-spheres.
\\

We start by determining the linear stability. The first step is shifting the equilibrium to the origin. For that, the following transformation is applied
   \[
  ( \bar b, \bar c, q_1, q_2, p_1, p_2 ) \rightarrow ( b, c, \eta_l, \eta_r ),
   \]
where $\bar b = ( \bar b_1, \bar b_2 )$, $\bar c = ( \bar c_1, \bar c_2 )$ and
 \[
   \begin{array}{lcl}
   b_i &=& b_i^\ast + {\bar b_i}, \quad
   c_i = {\bar c_i}, \\[1ex]
   m_1 &=& \pm q_1 \sqrt{L - \frac{q_1^2 + p_1^2}{4}}, \quad
   m_2 = \mp L \pm\frac{q_1^2 + p_1^2}{2}, \quad
   m_3 = p_1 \sqrt{L - \frac{q_1^2 + p_1^2}{4}}, \\[1ex]
   m_4 &=& -q_2 \sqrt{R - \frac{q_2^2 + p_2^2}{4}}, \quad
   m_5 = R - \frac{q_2^2 + p_2^2}{2}, \quad
   m_6 = p_2 \sqrt{R - \frac{q_2^2 + p_2^2}{4}},
   \end{array}
   \]
with the upper sign applying for counter-parallel ellipsoids and the lower one for co-parallel ellipsoids, see also the similar approach followed in \cite{fasso2001stability}. 
\\
   
Notice that the local coordinates $( q_1, q_2, p_1, p_2 )$ are canonical, as they preserve the Poisson structure associated to $m_i$, $i = 1, \ldots, 6$; thus, the whole transformation is symplectic.
\\

Naming $u = ( \bar b, q_1, q_2, \bar c, p_1, p_2 )$, the set of Cartesian coordinates (also said rectangular), the next step is performing a Taylor expansion around $u = 0$ up to polynomials of degree two. We determine the quadratic form $H_2( u ) = \frac{1}{2} \, u^T \cdot ( -{\mathcal J}_8 {\mathcal L} ) \cdot u$, where ${\mathcal J}_8$ is the usual $( 8 \times 8 )$-skew symmetric matrix, whereas $H_2$ refers to the Hamiltonian function of the linearised system around $u = 0$ with linearisation matrix ${\mathcal L}$. The entries of this matrix are provided explicitly in Appendix \ref{CoefficientsL} for the co-parallel case. Notice that they are similar in the counter-parallel regime and that they have been placed in the {\sc Mathematica} file supplied with the paper.
\\

Next, we apply Markeev's algorithm described in Appendix \ref{Linearization} to bring $H_2$ to normal form. We arrive at the following conclusions:
\\
   
\begin{enumerate}

{\small
\item [i.] Counter-parallel ellipsoids are Liapunov stable because the Hamiltonian corresponding to the linearised system in the normal-form coordinates $z = ( x_1, x_2, x_3, y_1, y_2, y_3 )$ becomes
\[
 H_2( z ) = \frac{\omega_1}{2} ( x_1^2 + y_1^2 ) + \frac{\omega_2}{2} ( x_2^2 + y_2^2 ) + \frac{\omega_3}{2} ( x_3^2 + y_3^2 ) +  \frac{\omega_4}{2} ( x_4^2 + y_4^2 ),
 \]
where the frequencies $\omega_i$ appear in (\ref{omegas}) and have to be understood such that the associated coefficients $\ell_{i,j}$ are those specific for the counter-parallel regime of the $S_2$-ellipsoids. The entries $\ell_{i,j}$ are given in the {\sc Mathematica} file. In this subregion of $\mathcal{B}_{S_2}$ the transformation matrix ${\mathcal T}$ appearing in Appendix \ref{Linearization} is real and the $\omega_i$, $i = 1, \ldots, 4$, coefficients are positive. Thus, applying Dirichlet Stability Theorem \cite{meyeroffin}, Liapunov stability is achieved. 
\\

On the boundary of the subregion, i.e. on the curve $G^{S}_{+}( b_1^\ast, b_2^\ast, b_3^\ast ) = G^{S}_{-}( b_1^\ast, b_2^\ast, b_3^\ast )$ (green line in Fig. \ref{fig:S2}) the system has three degrees of freedom. We set $m_1 = m_2 = m_3 = 0$
and take $u = ( \bar b, q_2, \bar c, p_2 )$. The equilibrium has coordinates $( b_1^\ast,b_2^\ast, 0,0, 0,R,0 )$ and the symplectic matrix ${\mathcal T}_I$ is defined accordingly. We note that all matrices ${\mathcal L}_I$, ${\mathcal T}_I$ and ${\mathcal J}_6$ involved are $( 6 \times 6 )$-dimensional. The normal-form Hamiltonian truncated at degree two in this case is
\[
 H_2( z ) = \frac{\omega_1}{2} ( x_1^2 + y_1^2 ) + \frac{\omega_2}{2} ( x_2^2 + y_2^2 ) + \frac{\omega_3}{2} ( x_3^2 + y_3^2 ).
 \]
 The frequencies $\omega_i$ appearing in (\ref{omegas}) satisfy $\omega_i > 0$ for $i = 1, 2, 3$. Thereby, Liapunov stability also holds on the boundary of this subregion. 
 \\

\item [ii] (and (iii)) Now we focus on co-parallel ellipsoids. The linear normal form is determined by applying Markeev's approach as in item i. That being said, we wish to obtain a normal-form Hamiltonian that remains valid not only for the linearly stable part, but also for the unstable one. Thus, we need to make a slight modification in the calculation of matrix ${\mathcal T}$. Indeed, it is enough to do $x_4 \rightarrow \sqrt{\omega_4} x_4$, $y_4 \rightarrow y_4/\sqrt{\omega_4}$. By applying this change and doing some simplifications the resulting matrix, that we also name ${\mathcal T}$ is, is real and well defined everywhere above, below and on the red line. (The $\omega_4$ in the denominator of $y_4$ compensates with a factor in the denominator that also vanishes on the bifurcation line leading to a valid formula which makes sense even when $\omega_4 = 0$.) The final linear transformation remains symplectic, and is given in Appendix \ref{Linearization}. 
\\

Matrix ${\mathcal T}$ lies in the range of the so called versal normal form. The theory was developed by Arnold \cite{Arnold} to overcome the difficulty that for matrices that depend on parameters their transformations into Jordan canonical form could become singular. In our context, $\mathcal T$ depends smoothly on $\omega_4$ and it is real and non-singular in a neighbourhood, at least in a narrow strip surrounding the curve $G = 0$.  
\\

The transformed quadratic Hamiltonian function is
 \[
 H_2( z ) = \frac{\omega_1}{2} ( x_1^2 + y_1^2 ) + \frac{\omega_2}{2} ( x_2^2 + y_2^2 ) + \frac{\omega_3}{2} ( x_3^2 + y_3^2 ) - \frac{1}{2} ( \omega_4^2 x_4^2 + y_4^2 ),
 \]
 with $\omega_i$ given in (\ref{omegas}), where this time $\ell_{i,j}$ are provided in an explicit way in Appendix \ref{CoefficientsL}. We stress that $\omega_i > 0$ for $i = 1, 2, 3$, while $\omega_4$ can be positive, pure imaginary with negative imaginary part or zero. More specifically $\omega_4 > 0$ for co-parallel ellipsoids above the bifurcation line $G = 0$ (i.e., the red line in Fig. \ref{fig:S2}), whereas $\omega_4 = \imath \bar{\omega}_4$, $\bar{\omega_4} < 0$ for co-parallel ellipsoids below the red curve, and $\omega_4 = 0$ on the curve $G = 0$. Actually, $G = 0$ is equivalent to $\omega_4 = 0$ at the points of the parametric plane where the bifurcation takes place. Thereby, both above and below the line $G = 0$ the Hamiltonian is semisimple.
 \\ 
 
 The origin $z = 0$ is linearly stable above $G = 0$, as the linearisation is of the type centre $\times$ centre $\times$ centre $\times$ centre with three positive signs in front of the $\omega_i > 0$ and one negative. Below the red curve, since 
 \[ - \mbox{$\frac{1}{2}$} ( \omega_4^2 x_4^2 + y_4^2 ) = - \mbox{$\frac{1}{2}$}( -\bar{\omega}_4^2 x_4^2 + y_4^2 ), \quad \mbox{with} \,\, \bar{\omega}_4^2 > 0, 
 \]
 the equilibrium is unstable with linearisation centre $\times$ centre $\times$ centre $\times$ saddle. On the red curve the Hamiltonian is no longer semisimple, as it has the nilpotent term $-y_4^2/2$. 
 \\

In order to prove that a quasi-periodic Hamiltonian pitchfork bifurcation takes place we need to determine the non-linear terms up to degree four. For that, we extend the computation of the normal form up to quartic terms in the $z$ coordinates and express the normal form in complex/real-symplectic coordinates, say $Z = ( X_1, X_2, X_3, X_4, Y_1, Y_2, Y_3, Y_4 )$, such that
\begin{equation}
\label{complex}
\begin{array}{lcl}
    x_i &=& \mbox{$\frac{1}{\sqrt{2}}$}( X_i + \imath Y_i ), \quad
    y_i = \mbox{$\frac{1}{\sqrt{2}}$}( \imath X_i + Y_i ), \quad i = 1, \ldots, 3, \\[1ex]
    x_4 &=& X_4, \quad y_4 = Y_4.
\end{array}
\end{equation}
Then, 
\[ 
H_2( Z ) = \imath \omega_1 X_1 Y_1 + \imath \omega_2 X_2 Y_2 + \imath \omega_3 X_3 Y_3 -
\mbox{$\frac{1}{2}$}( \omega_4^2 X_4^2 + Y_4^2 ).
\\[1ex] \]

It is time to apply the linear changes passing from the $u$ coordinates to the $Z$ and execute two steps of the Lie transformation method \cite{Deprit}, proceeding in a symbolic fashion. The first order of the generating function, ${\mathcal W}_1$, is determined in such a way that the associated normal form, $H_1$, be zero. For that, we deal with the homological equation solving 120 linear equations with 120 unknowns (these unknowns are the coefficients of the terms of ${\mathcal W}_1$, i.e., monomials of degree three in $Z$). Cubic terms are neither present in the Hamiltonian funciton written in normal-form coordinates. This is due to the reversible character of the perturbation in case of type-$S$ ellipsoids.
\\

For computing the normal-form Hamiltonian, say $H_4$, and the associated generating function ${\mathcal W}_2$ we impose that the terms in the normal form are combinations of $X_1 Y_1$, 
$X_2 Y_2$, $X_3 Y_3$, $X_4^2$. As $H_4$ is of degree four in $Z$ (by an abuse of notation we also name $Z$ the transformed coordinates), we set
\[
\begin{array}{lcl}
H_4( Z ) &=& Q_1 ( X_1 Y_1 )^2 + Q_2 ( X_2 Y_2 )^2 + Q_3 ( X_3 Y_3 )^2 + Q_4 ( X_4^2 )^2 \\[1ex]
&& + \, Q_5 X_1 Y_1 X_2 Y_2 + Q_6 X_1 Y_1 X_3 Y_3 + Q_7 X_2 Y_2 X_3 Y_3 \\[1ex]
&& + \, Q_8 X_1 Y_1 X_4^2 + Q_9 X_2 Y_2 X_4^2 + Q_{10} X_3 Y_3 X_4^2.
\end{array}
\]
One has to expect a transformed Hamiltonian like $H_4$ due to the nilpotent term in $H_2$ when $\omega_4 = 0$, see \cite{meyeroffin}. The $Q_i$ coefficients $( i = 1, \ldots, 10 )$ and the ones forming the function ${\mathcal W}_2$ are determined by solving a consistent underdetermined linear system with 330 equations and 340 unknowns. Out of all unknowns, 330 correspond to the coefficients of ${\mathcal W}_2$ written in terms of the monomials of degree four in $Z$, and the other ten are the $Q_i$.\\

The transformation is well defined excepting certain resonance values. We determine the resonances by taking the denominators in the generating functions ${\mathcal W}_1$ and ${\mathcal W}_2$, evaluating them along the bifurcation line ($\omega_4 = 0$) and selecting the ones that pass through zero. There are two resonances of order $3$ and one of orders $2$ and $4$, specifically
\[ 
-\omega_1 + \omega_3, \, 
-\omega_1 + 2\, \omega_2, \, 
-2\, \omega_2 + \omega_3, \, 
-\omega_1 + 3\, \omega_2. 
\] 
They are shown in Fig. \ref{fig:resonances}. The order-two resonance is $-\omega_1 + \omega_3$. The order-three ones are $-\omega_1 + 2\, \omega_2$ and $-2\, \omega_2 + \omega_3$. The resonance of order $4$ is $-\omega_1 + 3\, \omega_2$. Consequently, in order to avoid the appearance of vanishing denominators in the expressions, we need to remove from the line $G = 0$ those points $( b_2^\ast/b_1^\ast, b_3^\ast/b_1^\ast )$ where the linear combinations of the frequencies become zero, since for these points of the parametric plane the approach is not valid. Furthermore, by continuity of the formulae with respect to the parameters $b^\ast$, we discard small neighbourhoods (balls centred at the points where the denominators are exactly zero), since some terms in the generating functions become unbounded there. More precisely, $-\omega_1 + \omega_3 = 0$ for $b_2^\ast/b_1^\ast \approx 0.3602$, $-\omega_1 + 2\, \omega_2 = 0$ for $b_2^\ast/b_1^\ast \approx 0.2716$, $-2\, \omega_2 + \omega_3 = 0$ for $b_2^\ast/b_1^\ast \approx 0.2802$ and  $-\omega_1 + 3\, \omega_2 = 0$ for $b_2^\ast/b_1^\ast \approx 0.1518$ and the values of $b_3^\ast/b_1^\ast$ are determined after solving the equation $G = 0$. 
\\

\begin{figure}
    \centering
  \includegraphics[width=0.6\textwidth]{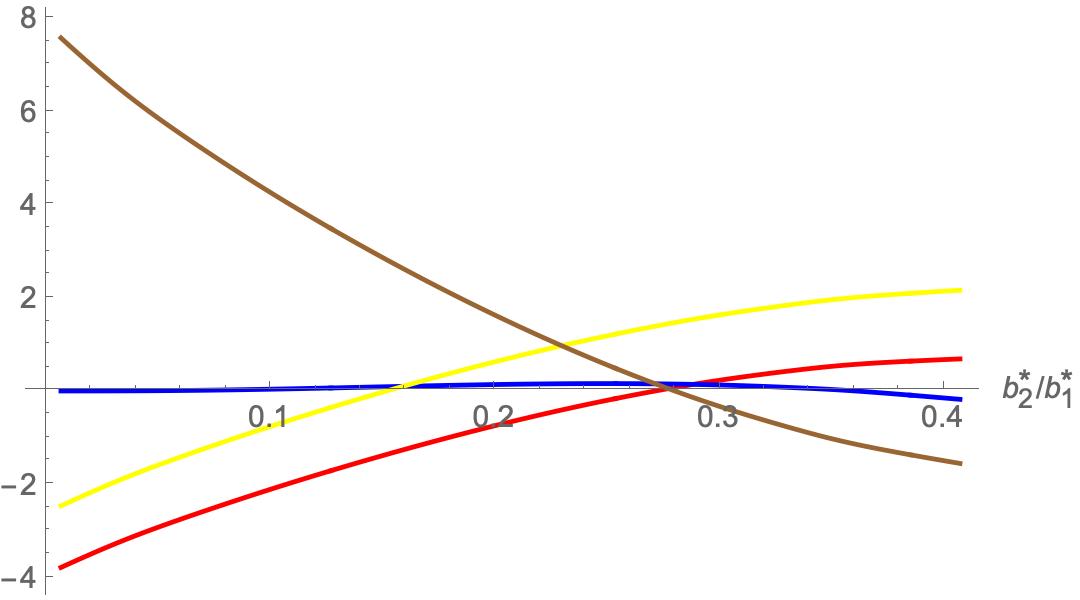}
    \caption{Resonances of different orders. Order 2: blue, $-\omega_1 + \omega_3$. Order 3: red, $-\omega_1 + 2\, \omega_2$; brown, $-2\omega_2 + \omega_3$. Order 4: yellow, $-\omega_1 + 3\, \omega_2$. Despite the blue curve looks very close to the axis $b_2^\ast/b_1^\ast$, it starts on the left taking the value $-\omega_1 + \omega_3 = 7.44... \cdot 10^{-6}$ for $b_2^\ast/b_1^\ast \approx 0$, then, it increases reaching its maximum at around $b_2^\ast/b_1^\ast = 0.24$ and decreases crossing the horizontal axis at a unique point around $b_2^\ast/b_1^\ast = 0.3602$}
    \label{fig:resonances}
\end{figure}

Next we check the specific conditions for establishing the occurrence of a supercritical quasi-periodic Hamiltonian pitchfork bifurcation on the curve $G = 0$, see Theorem 4.13 in \cite{hanssmann2006local}. 
\\

We return to a real normal form by introducing the actions $I_i = \imath X_i Y_i$, $i = 1, 2, 3$. The linearised system has as Hamiltonian function
\[
H_2( I, X_4, Y_4 ) = \omega_1 I_1 + \omega_2 I_2 + \omega_3 I_3 - \mbox{$\frac{1}{2}$}( \omega_4^2 X_4^2 + Y_4^2 ), \,\,\,\, 
I = ( I_1, I_2, I_3 ).
\]

At this point we consider the truncated normal form $H^4 = H_2 + \frac{1}{2} H_4$ in terms of $I$, $X_4$, $Y_4$. We write it as
\begin{equation}
\begin{array}{rcl}
H^4( I, X_4, Y_4 ) &=& \omega_1 I_1 + \omega_2 I_2 + \omega_3 I_3 - \mbox{$\frac{1}{2}$}( \omega_4^2 X_4^2 + Y_4^2 ) \\[1ex]  && - \,
\mbox{$\frac{1}{2}$} \left( Q_1 I_1^2 + Q_2 I_2^2 + Q_3 I_3^2 + Q_5 I_1 I_2 + Q_6 I_1 I_3 + Q_7 I_2 I_3 \right. \\[1ex] &&
 \left. \hspace*{0.68cm} + \, \imath \, ( Q_8 I_1 + Q_9 I_2 + Q_{10} I_3 ) X_4^2 - Q_4 X_4^4 \right), 
\end{array} 
\label{H4S2}
\end{equation}
which is a real function because $Q_8$, $Q_9$, $Q_{10}$ are pure imaginary while the other $Q_i$ are real. 
\\

Notice that the coefficient of $Y_4^2$ in $H^4$ is negative. Besides, we take a careful look at the coefficients of $X_4^2$ and $X_4^4$, respectively, 
\[ 
-\mbox{$\frac{1}{2}$} \left( \omega_4^2 + \imath ( Q_8 I_1 + Q_9 I_2 + Q_{10} I_3 ) \right), \quad \mbox{$\frac{1}{2}$} Q_4. 
\\
\] 
Firstly, the coefficient of $X_4^2$ is zero for $I = 0$, $\omega_4 = 0$ but it does not vanish when $\omega_4 \neq 0$ and $I = 0$, that is, in a neighbourhood of the line $G = 0$. Proving that $Q_4 \neq 0$ when $\omega_4 = 0$ requires more effort. We need to prove that both coefficients vanish only when $\omega_4 = 0$. We also check that the coefficient of $X_4^4$ is different from zero. Let us stress that $Q_4$ is computed in an explicit way on the whole line $G = 0$ and also in a neighbourhood of it, and it is given in terms of $b^\ast$ and supplied in the {\sc Mathematica} file. However, for the sake of proving that it does not vanish on the bifurcation line we proceed by replacing $Q_4$ in terms of the values $b^\ast$ take on the line $G = 0$. This step is numerical but we have performed it with very high precision of the calculations. We conclude that $Q_4 < 0$ in all points of the bifurcation line, see Fig. \ref{fig:Q4}. By continuity of the formulae with respect to the parameters and variables it is also negative on a narrow strip of the bifurcation line in the parametric plane. This bifurcation is of supercritical type as the coefficient of $Y_4^2$ is negative and the coefficient of $X_4^4$ remains negative as well.
\\

\begin{figure}
    \centering
  \includegraphics[width=0.6\textwidth]{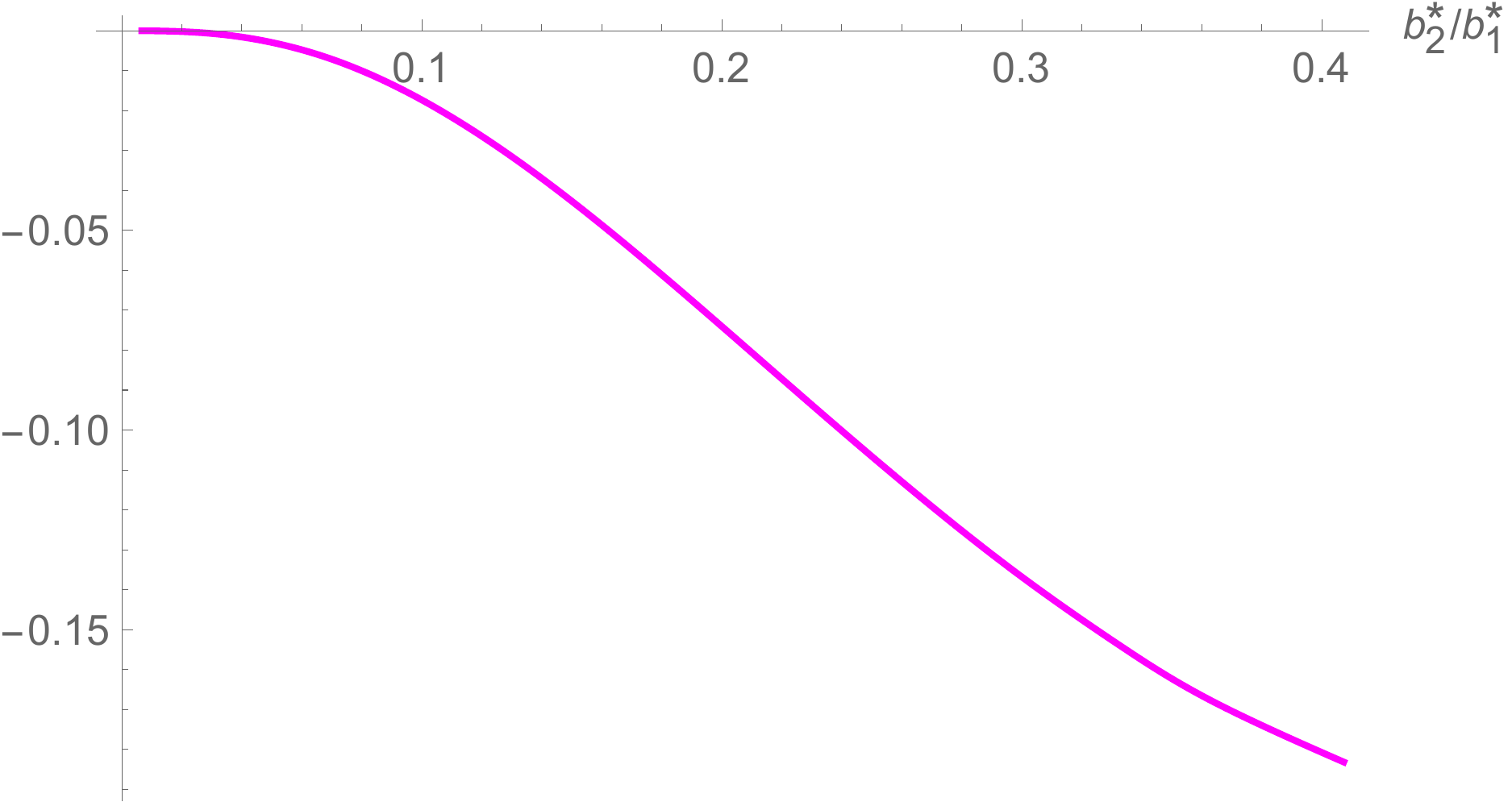}
    \caption{Coefficient $Q_4$ evaluated along the line $G = 0$. It is always below the horizontal axis. In fact, 
    it starts taking the value $Q_4 = -7.341... \cdot 10^{-6}$ for $b_2^\ast/b_1^\ast \approx 0$, and it is a decreasing function. Thus, $Q_4$ is negative for $b^\ast$ on the bifurcation line}
    \label{fig:Q4}
\end{figure}

At this point we need to prove the persistence under perturbation of the invariant tori related to the bifurcation. For this purpose we introduce 
\[ 
\tilde{c}( I ) = - \mbox{$\frac{1}{2}$} \omega_4^2 - \mbox{$\frac{\imath}{2}$} ( Q_8 I_1 + Q_9 I_2 + Q_{10} I_3 ), \quad
\tilde{\omega}_i( I ) = \frac{\partial H^4( I, 0, 0 )}{\partial I_i}, 
\]
and define the map $\xi: I \rightarrow \left( \tilde{c}( I ), \tilde{\omega}_1( I ), \tilde{\omega}_2( I ), \tilde{\omega}_3( I ) \right)$. Notice that $\tilde{c}( I )$ is taken as the coefficient of $X_4^2$ in $H^4$.
\\

We have to prove that $\xi$ a submersion at $\omega_4 = 0$, i.e. that the map is differentiable and the differential is surjective everywhere. Following \cite{hanssmann2006local}, on the one hand we get $D \tilde{c}( 0 ) = -\frac{\imath}{2} ( Q_8, Q_9, Q_{10} ) \neq ( 0, 0, 0 )$ along the curve $G = 0$, excepting at the resonance combinations which lead to very small or null denominators. More precisely, the evaluation of the norm of $D \tilde{c}( 0 )$ on a grid of points along the curve $G = 0$, remains positive and its minimum value is approximately $0.1617$. On the other hand, we build the ($3 \times 3$)-matrix $\mathtt M$ whose first row is $( \tilde{\omega}_1( I ), \tilde{\omega}_2( I ), \tilde{\omega}_3( I ))$, the second and third rows are the partial derivatives of the first row with respect to $I_1$ and $I_3$. The determinant of $\mathtt M$ at $I = 0$ yields
\[ 
\mbox{$\frac{1}{4}$} \left( 
\omega_1 ( 2\, Q_3 Q_5 - Q_6 Q_7 ) + 
\omega_2 ( Q_6^2 - 4\, Q_1 Q_3 ) + 
\omega_3 ( 2\, Q_1 Q_7 - Q_5 Q_6 ) \right). 
 \] 
This expression remains positive in a fine grid of points $b^\ast$ chosen homogeneously along the bifurcation curve $\omega_4 = 0$. We have to exclude the resonance values, where the normal-form computations do not make sense. Then, we conclude that the map $\xi$ is a submersion. The related calculations are provided in the {\sc Mathematica} file. 
\\

This gives the persistence of the invariant tori that interplay in the bifurcation. 
}
\end{enumerate}
\end{proof}

\begin{remark}
Determining the validity of the linear normal-form transformations, that is, whether ${\mathcal T}$, ${\mathcal T}_I$ are real matrices with non-vanishing denominators in the corresponding subregions of $\mathcal{B}_{S_2}$ where they are built, is not easy to accomplish. The same happens with the frequencies $\omega_i$. For instance, they are strictly positive for the ellipsoids of item i in the proof. Thus, one concludes Liapunov stability. In fact, we have all the associated expressions given explicitly in terms of $b^\ast$, but they are too big so that we can check our requisites. One can prove some partial results, for instance: $\omega_1 > \omega_2$, $\omega_3 > \vert \omega_4 \vert$. An alternative is checking the validity of our claim on a fine grid in ${\mathcal B}_{S_2}$, with the values $\ell_{i,j}$ take accordingly to the subregion we are considering. This can be performed with {\sc Mathematica} using the routine {\tt RegionPlot[]}, that makes plots of the provided formulae on specified regions in a two-dimensional grid. The approach is numerical but one can use high precision for the internal calculations. For instance, one can check that $\omega_i > 0$ on the green line and above it. The approach is similar for the behaviour of $\omega_4$ close to the bifurcation line. Proceeding like this we observe that the constructions we present are all right.
\label{remposiS2}    
\end{remark}

\begin{remark}
The invariant $3$-tori persisting in the co-parallel region (above and below the bifurcation line) are surrounded by families of invariant Lagrangian $4$-tori. This is established applying the standard Kolmogorov's non-degeneracy condition.
\\
\label{remposiS2b}    
\end{remark}

\begin{remark}
The principal terms of the invariant $3$ or $4$-tori that persist the small perturbations are trivially derived from Hamiltonian $H^4$ in (\ref{H4S2}) in the normal-form coordinates $I$, $X_4$,$Y_4$. It is possible to obtain them in the original coordinates by undoing the normal-form transformations.
\\
\label{remposiS2c}    
\end{remark}
   
%%%%%%%%%%%%%%%%%%%%%%%%%%%%%
\section{Stability of $S_{3}$-ellipsoids}
\label{SectionS3}
%%%%%%%%%%%%%%%%%%%%%%%%%%%%%

We deal with the $S_3$-ellipsoids. Riemann already proved that they are Liapunov stable \cite{Riemann}. Our contribution stems from the fact that the calculations are symbolically in the entire region of existence. This time $\mu^{\pm}_{S_3}( b^{\ast} )$ of Table \ref{table:nonlin} is
\begin{equation}
    \label{Ness}
N^{S}_{\pm}( b_1^\ast, b_3^\ast, b_2^\ast ) e_3 = \left( 0, 0, \mbox{$\frac{1}{2}$} \Big( \sqrt{G^{S}_{+}( b_1^\ast, b_3^\ast, b_2^\ast )} \pm \sqrt{G^{S}_{-}( b_1^\ast, b_3^\ast, b_2^\ast )} \Big) \right).
\end{equation}

The region of the parametric plane where these Riemann ellipsoids exist is given by 
\[ 
{\mathcal B}_{S_3} = \Big\{ b \in {\mathcal B} \, : \, G_+^S( b_1^\ast, b_3^\ast, b_2^\ast ) \geq 0 \Big\} 
\]
and is represented in Fig. \ref{fig:S3}. Condition $G_{-}^{S}( b_1^\ast, b_3^\ast, b_2^\ast ) \geq 0$ is satisfied because
\[
\begin{array}{lcl}
G_{-}^{S}( b_1^\ast, b_3^\ast, b_2^\ast ) &=& \displaystyle
\frac{( b_1^\ast - b_2^\ast )^4}{b_1^{\ast 3} b_2^{\ast 3}}
\Big( \big( b_1^{\ast 4} b_2^{\ast 4} + b_1^\ast b_2^\ast - b_1^{\ast 2} - b_2^{\ast 2} \big) C_1( b_1^\ast, b_3^\ast, b_2^\ast ) \\ && \displaystyle \hspace*{1.8cm} + \, \big( b_1^{\ast 3} b_2^{\ast 3} - 1 \big) C_2( b_1^\ast, b_3^\ast, b_2^\ast ) \Big)
\end{array}
\] 
and $C_1( b_1^\ast, b_3^\ast, b_2^\ast )$, $C_2( b_1^\ast, b_3^\ast, b_2^\ast )$, together with their coefficients, are also positive by condition \eqref{eq:B}. In so doing, the $S_3$-ellipsoids are properly defined in ${\mathcal B}_{S_3}$.
\\

\begin{figure}[ht]
    \centering
    \includegraphics[width=0.6\textwidth]{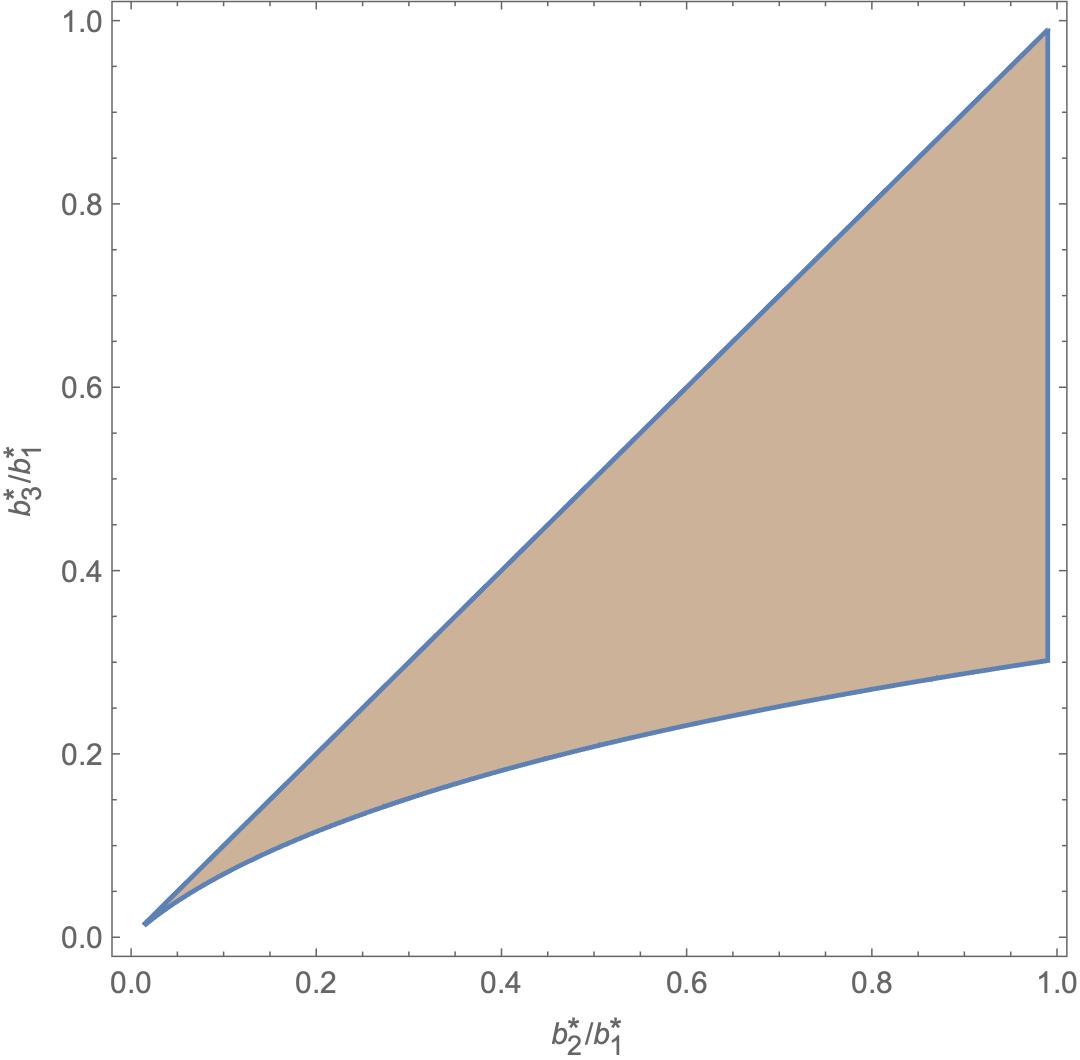}
    \caption{${\mathcal B}_{S_3}$: Region of existence of the $S_3$-Riemann ellipsoids}
    \label{fig:S3}
\end{figure}

We state the main result of this section.

\begin{thm} 
$S_3$-ellipsoids are Liapunov stable equilibria in their entire region of existence. 
\end{thm}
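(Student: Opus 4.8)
The plan is to reproduce, over the whole of $\mathcal{B}_{S_3}$, the linear-stability argument used for the counter-parallel $S_2$-ellipsoids in item i of the proof of Theorem \ref{S2Theorem}. The decisive simplification is that, as recalled in the Introduction, \emph{all} $S_3$-ellipsoids are counter-parallel; consequently there is no co-parallel sub-region, no loss of definiteness of the quadratic part, and hence no bifurcation to analyse. The whole result should therefore follow from a single definiteness check on the linearised Hamiltonian.

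First I would shift the equilibrium to the origin by a symplectic change adapted to the fact that here $\eta_l$ and $\eta_r$ are parallel to $e_3$; that is, the same parametrisation of the momenta $m_i$ as for $S_2$, but with the ``pole'' components now carried by $m_3$ and $m_6$ instead of $m_2$ and $m_5$, e.g.
\[
m_3 = \mp L \pm \frac{q_1^2 + p_1^2}{2}, \qquad m_6 = R - \frac{q_2^2 + p_2^2}{2},
\]
with $L = N^S_+(b_1^\ast, b_3^\ast, b_2^\ast)$, $R = \vert N^S_-(b_1^\ast, b_3^\ast, b_2^\ast)\vert$, and the remaining components of the form $q\sqrt{\cdot}$, $p\sqrt{\cdot}$ as in the $S_2$ case. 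Writing $b_i = b_i^\ast + \bar b_i$, $c_i = \bar c_i$, collecting $u = (\bar b, q_1, q_2, \bar c, p_1, p_2)$ and Taylor-expanding $H$ to second order, I would read off the linearisation matrix $\mathcal{L}$ and the quadratic Hamiltonian $H_2(u) = \frac{1}{2}\, u^T \cdot (-\mathcal{J}_8 \mathcal{L}) \cdot u$.

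Second, I would apply Markeev's algorithm of Appendix \ref{Linearization} to bring $H_2$ to the diagonal normal form
\[
H_2(z) = \frac{\omega_1}{2}(x_1^2 + y_1^2) + \frac{\omega_2}{2}(x_2^2 + y_2^2) + \frac{\omega_3}{2}(x_3^2 + y_3^2) + \frac{\omega_4}{2}(x_4^2 + y_4^2),
\]
with the $\omega_i$ given by (\ref{omegas}) and the coefficients $\ell_{i,j}$ now those of the $S_3$-regime. If the transformation matrix $\mathcal{T}$ is real and $\omega_i > 0$ for $i = 1, \ldots, 4$ throughout $\mathcal{B}_{S_3}$, then $H_2$ is positive definite, so that $H$ attains a strict local minimum at the equilibrium, and the Dirichlet Stability Theorem \cite{meyeroffin} delivers Liapunov stability at once, on the entire region.

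The main obstacle is precisely this reality-and-positivity check: the closed-form expressions for the $\omega_i$ and for the entries of $\mathcal{T}$ in terms of $b^\ast$ are far too large to certify purely by hand. As in Remark \ref{remposiS2}, I would combine the orderings among frequencies that can be extracted symbolically with a high-precision numerical sweep over $\mathcal{B}_{S_3}$ using {\sc Mathematica}'s {\tt RegionPlot[]}, verifying that each $\omega_i$ stays strictly positive and that no denominator appearing in $\mathcal{T}$ vanishes. Since $\mathcal{B}_{S_3}$ contains no co-parallel ellipsoids there is no $\omega_4 \to 0$ degeneration as in $S_2$, so neither a versal-normal-form modification nor any quartic (non-linear) computation is required: the linear analysis alone settles Liapunov stability everywhere in the region of existence.
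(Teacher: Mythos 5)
Your proposal is correct and follows essentially the same route as the paper's proof: shift the equilibrium to the origin via the pole-adapted symplectic parametrisation of the momentum spheres (with the pole components carried by $m_3$, $m_6$), Taylor-expand to second order, apply Markeev's algorithm of Appendix \ref{Linearization} to obtain the diagonal normal form with $\omega_i>0$ for $i=1,\ldots,4$, and conclude Liapunov stability from positive definiteness, with the largeness of the closed-form expressions handled exactly as in Remark \ref{remposiS2} (the paper's Remark \ref{remposiS3} invokes the same numerical check). The only cosmetic difference is that, since all $S_3$-ellipsoids are counter-parallel, the paper fixes a single sign in the change of coordinates rather than carrying the $\pm$ pair.
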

\begin{proof}
We take the cue from the scheme pursued in the proof of Theorem \ref{S2Theorem}. The coordinates of the $S_3$-ellipsoids are $( b_1^\ast, b_2^\ast, 0,0, 0,0,-L, 0,0,R )$, with $L$ and $R$ as in the case of $S_2$-ellipsoids. Projecting them onto $S^2_L \times S^2_R$ they correspond to the South-North poles of the two-spheres. Hence, $S_3$-ellipsoids are counter-parallel and the following change to symplectic variables is applied
\[
  ( \bar b, \bar c, q_1, q_2, p_1, p_2 ) \rightarrow
  ( b, c, \eta_l, \eta_r ),
\]
where $\bar b = ( \bar b_1, \bar b_2 )$, $\bar c = ( \bar c_1, \bar c_2)$ and
 \[
   \begin{array}{lcl}
   b_i &=& b_i^\ast + {\bar b_i}, \quad
   c_i = {\bar c_i}, \\[1ex]
   m_1 &=& p_1 \sqrt{L - \frac{q_1^2 + p_1^2}{4}}, \quad
   m_2 = q_1 \sqrt{L - \frac{q_1^2 + p_1^2}{4}}, \quad
   m_3 = -L + \frac{q_1^2 + p_1^2}{2}, \\[1.3ex]
   m_4 &=& p_2 \sqrt{R - \frac{q_2^2 + p_2^2}{4}}, \quad
   m_5 = -q_2 \sqrt{R - \frac{q_2^2 + p_2^2}{4}}, \quad
   m_6 = R - \frac{q_2^2 + p_2^2}{2}.
   \end{array}
   \]
In this manner we translate the equilibrium to the origin. Then, a Taylor expansion up to degree 2 is computed in the coordinates $u$. The coefficients of the linearisation matrix ${\mathcal L}$ are similar to those given in Appendix \ref{CoefficientsL} for the $S_2$-ellipsoids in the co-parallel regime, and they are provided in the {\sc Mathematica} file accompanied to this text. Next, we apply Markeev's procedure described in Appendix \ref{Linearization} to bring the quadratic form $H_2( u )$ to normal form. The linearisation in Cartesian coordinates results to be
\[
H_2( z ) = \frac{\omega_1}{2}( x_1^2 + y_1^2 ) + \frac{\omega_2}{2} ( x_2^2 + y_2^2 ) + \frac{\omega_3}{2} ( x_3^2 + y_3^2 ) + \frac{\omega_4}{2} ( x_4^2 + y_4^2 ),
 \]
with $\omega_i$ appearing in (\ref{omegas}) and the corresponding $\ell_{i,j}$ are given in the {\sc Mathematica} file. One has that $\omega_i > 0$ for $i = 1, \ldots, 4$. Consequently, $S_3$-ellipsoids are Liapunov stable.
\end{proof}

\begin{remark}
Analogous considerations to those made in Remark \ref{remposiS2} apply for the $S_3$-ellipsoids. They also apply in the bifurcation study of types-II and III ellipsoids.
\\
\label{remposiS3}    
\end{remark}

%%%%%%%%%%%%%%%%%%%%%%%%%%%%%%%%%%%%
\section{Linear stability of type-I irrotational ellipsoids}
\label{TypeI}
%%%%%%%%%%%%%%%%%%%%%%%%%%%%%%%%%%%%

This section is devoted to the analysis of the stability of type-I ellipsoids. Their existence domain is  
\bas
\mathcal{B}_{{\rm I}} = \Big\{ b\in \mathcal{B} \, : \, b_1^\ast \leq 2 b_2^\ast - b_3^\ast \Big\},
\eas
and is represented in Fig. \ref{fig:I}.\\
\begin{figure}[ht]
    \centering
    \includegraphics[width=0.6\textwidth]{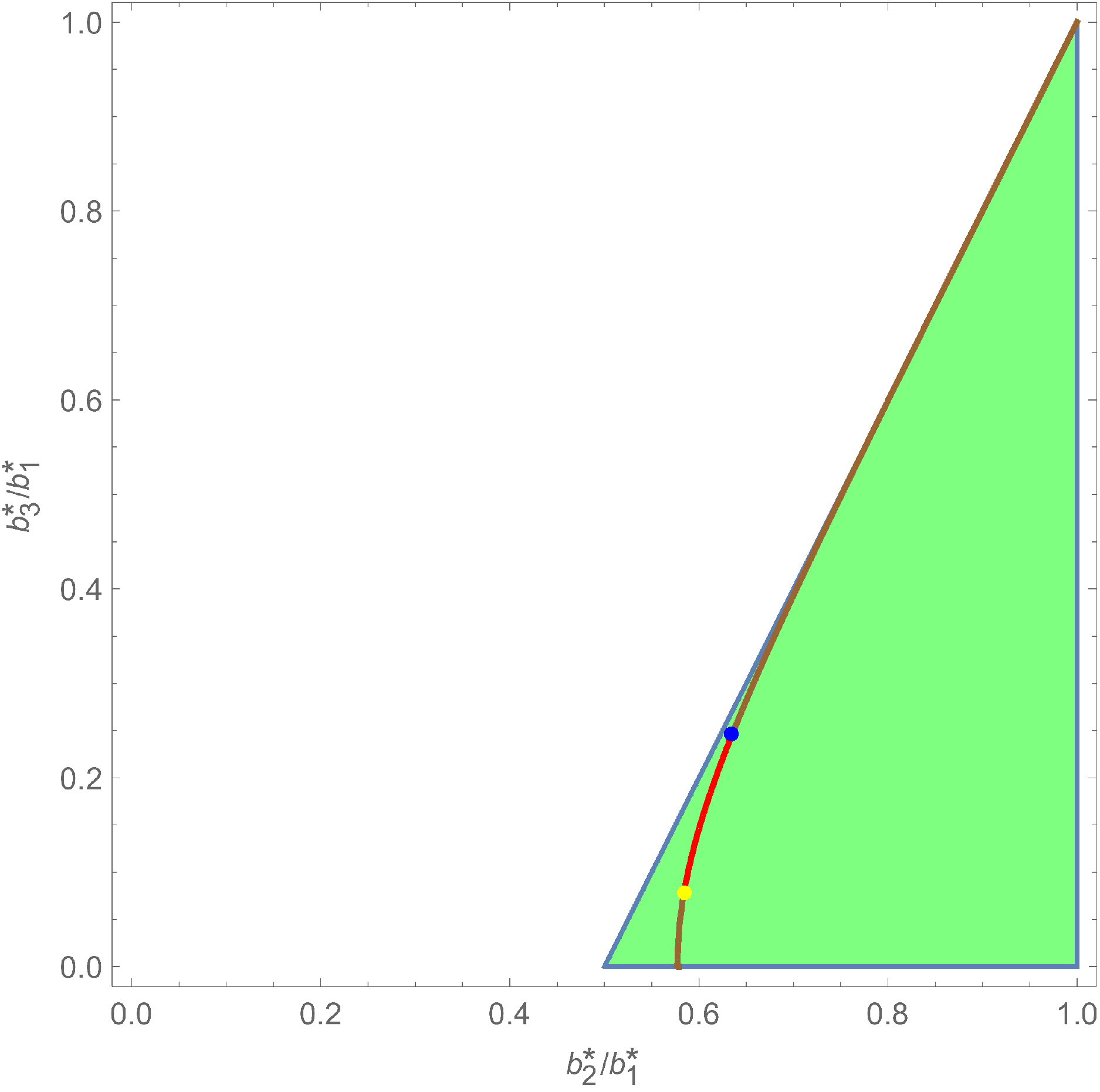}
    \caption{${\mathcal B}_{{\rm{I}}}$: Region of existence of type-I Riemann ellipsoids. The curve corresponds to irrotational ellipsoids. In the red portion there is linear stability, whereas there is instability in the brown ones. The brown part continues to the point $( 1, 1 )$. The yellow and blue points are likely to be Hamiltonian-Hopf bifurcations}
    \label{fig:I}
\end{figure}

Type-I ellipsoids can be linearly stable or unstable, passing from stable to unstable through bifurcation lines whose distribution may be very subtle \cite{fasso2001stability}. We have not detected Liapunov stability in this region just from a linear analysis. It could arise after studying the non-linear terms, but this is far from obvious. The bifurcation lines likely correspond to quasi-periodic Hamiltonian-Hopf bifurcations \cite{broer2007quasi, meyeroffin}. We do not present the analysis of these bifurcations in this section, as it is similar to the one we shall detail for type-III ellipsoids. 
\\

The procedure to study the linear stability of these ellipsoids follows the ideas of previous sections. We start by choosing the ellipsoid with $( S^2_L \times S^2_R )$-coordinates 
$( \mu^{+}_{N}( b_1^\ast, b_3^\ast, b_2^\ast ), \mu^{-}_{N}( b_3^\ast, b_1^\ast, b_2^\ast ))$, as the analysis for its adjoint is equivalent. Parameters $L$, $R$ satisfy 
\[ 
\begin{array}{rcl}
L &=& \sqrt{
N^{R}_{+}( b_1^\ast, b_3^\ast, b_2^\ast )^2 +
N^{R}_{+}( b_3^\ast, b_1^\ast, b_2^\ast )^2}, \\[1ex]
R &=& \sqrt{
N^{R}_{-}( b_1^\ast, b_3^\ast, b_2^\ast )^2 +
N^{R}_{-}( b_3^\ast, b_1^\ast, b_2^\ast )^2}, 
\end{array}
\]
where $N^R_{\pm}( x, y, z )$ are given in (\ref{eq:Gs}).

\begin{remark}
We have checked that the coordinates of the type-I ellipsoids make sense in the domain ${\mathcal B}_{\rm I}$, that is, that according to (\ref{eq:Gs}) and Table \ref{table:nonlin}, the terms inside the square roots of $N^R_{\pm}$ are non-negative for all $b^\ast$ in ${\mathcal B}_{\rm I}$. For doing it we take into account that $C_i \ge 0$ as well as the other restrictions delimiting the set ${\mathcal B}_{\rm I}$. The computations involved discussing that some rational functions cannot be negative imposing additional restrictions. They appear in the {\sc Mathematica} file. Similarly, we have proved that the coordinates 
of the ellipsoids of types II and III are right in ${\mathcal B}_{\rm II}$ and ${\mathcal B}_{\rm III}$, respectively.
\end{remark}

The symplectic coordinates introduced for this case are
\begin{equation}
\begin{array}{lcl}
   b_i &=& b_i^\ast + {\bar b_i}, \quad
   c_i = {\bar c_i}, \\[0.9ex]
   m_1 &=& -q_1 \sqrt{L - \frac{q_1^2 + p_1^2}{4}}, \quad
   m_2 = L -\frac{q_1^2 + p_1^2}{2}, \quad
   m_3 = p_1 \sqrt{L - \frac{q_1^2 + p_1^2}{4}}, \\[1.5ex]
   m_4 &=& -q_2 \sqrt{R - \frac{q_2^2 + p_2^2}{4}}, \quad
   m_5 = R - \frac{q_2^2 + p_2^2}{2},\quad
   m_6 = p_2 \sqrt{R - \frac{q_2^2 + p_2^2}{4}}.
   \end{array}
   \label{chIrr}
 \end{equation}
Unlike the previous cases, after the expansion of the Hamiltonian as a Taylor series up to degree two around the origin, we have not performed all the computations in a symbolic way. More precisely, we have not carried out the detailed numerical sweep of \cite{fasso2001stability}, in the sense that we have not taken a very fine grid and have done the linear analysis in that grid, but simply we have picked different points in the regions encountered by Fass\`o and Lewis \cite{fasso2001stability}. Our results agree with the ones obtained by them, regardless of the fact that short of spectral stability we go a bit further and obtain linear stability. The reader can see Figs. 3(a) and 4(a) in \cite{fasso2001stability} for the stability regions. 
\\

The linear analysis has been performed by following Appendix \ref{Linearization}, in case that the equilibria were elliptic points. On the boundary of the region $( b_1^\ast = 2\, b_2^\ast - b_3^\ast )$ we find either linearly stable points with linearisation 
\[
 -\frac{\omega_1}{2}( x_1^2 + y_1^ 2) - \frac{\omega_2}{2}( x_2^2 + y_2^2 ) + \frac{\omega_3}{2}( x_3^2 + y_3^2 ) + \frac{\omega_4}{2}( x_4^2 + y_4^2 ),
 \]
with $\omega_i > 0$ for $i = 1, \ldots, 4$ (upper part of the line) or unstable of the type centre $\times$ centre $\times$ focus (lower part of the line).
\\

In the interior of region ${\mathcal B}_{\rm I}$ there are stable points with linearisation
\begin{equation}
\label{linIa}
 -\frac{\omega_1}{2}( x_1^2 + y_1^2 ) + \frac{\omega_2}{2}( x_2^2 + y_2^2 ) + \frac{\omega_3}{2}( x_3^2 + y_3^2 ) + \frac{\omega_4}{2}( x_4^2 + y_4^2 ),
\end{equation}
 or
\begin{equation}
\label{linIb}
 \frac{\omega_1}{2}( x_1^2 + y_1^2 ) - \frac{\omega_2}{2}( x_2^2 + y_2^2 ) - \frac{\omega_3}{2}( x_3^2 + y_3^2 ) + \frac{\omega_4}{2}( x_4^2 + y_4^2 ),
 \end{equation}
with $\omega_i > 0$ for $i = 1, \ldots, 4$ or unstable of the type centre $\times$ centre $\times$
focus. 
\\

Type-I irrotational ellipsoids satisfy
\begin{equation}
\label{irrocurve}
b_1^{\ast 2} + b_2^{\ast 2} + b_1^{\ast 4} b_2^{\ast 4} - 3\, b_1^{\ast 2} b_2^{\ast 6} = 0
\end{equation}
and are represented by the curve shown in Fig. \ref{fig:I}. The irrotational curve crosses the subregions of stability and instability. The effect of passing through the irrotational curve, both in the linearly stable and in the unstable regimes of region ${\mathcal B}_{\rm I}$, is that on the left-hand side of the irrotational curve the minus sign in front of one of the $\omega_i$ becomes a positive sign when passing to the right-hand side of the curve and the term is zero on the line. This happens regardless of the nature of the point (either with linearisation centre $\times$ centre $\times$ centre $\times$ centre or centre $\times$ centre $\times$ focus). This effect was already observed for the $S_2$-ellipsoids. Here there is no Liapunov stability, at  from the linear analysis, though. 
\\

The transition from the linearly-stable parts of the irrotational curve to the unstable ones is likely to be made through two Hamiltonian-Hopf bifurcations. These bifurcations are no longer curves in the parametric plane but isolated points on the curve (\ref{irrocurve}). To obtain the values of these points we impose the condition on the eigenvalues to be in the right resonance relation, that is, the $1$:$-1$, with non-null nilpotent part. One passes from linear stability with quadratic Hamiltonian in normal form given by (\ref{linIa}) or by (\ref{linIb}) to instability, where the unstable character of the points is manifested by the appearance of a complex quadruplet of eigenvalues while one of the imaginary pairs remain imaginary.  
\\

Now we focus our study on the irrotational regime. The approach is analytical. The Hamiltonian system has three degrees of freedom and our aim is to deal with the changes between stability and instability behaviour. We detail how to obtain the normal-form Hamiltonian in case of linear stability. The analysis in the unstable case is similar but with the transformation to normal form dealing with the focus character of the unstable degrees of freedom, i.e. the ones corresponding to the quadruplet related to the eigenvalues $\pm a \pm \imath  b$ (with $a, b > 0$). This requires a different approach (see for instance \cite{LaubMeyer1974}) that we do not handle here.
\\

Let $A \approx ( 0.58419, 0.07787 )$, $B \approx ( 0.63527, 0.24613 )$ be the points on the parametric plane $b_2^\ast/b_1^\ast$--$b_3^\ast/b_1^\ast$ corresponding, respectively, with the yellow and blue points in Fig. \ref{fig:I}. The main result in this section is the following.

\begin{thm}
\label{ThmtypeI}
     Type-I irrotational ellipsoids are linearly stable between the points $A$ and $B$ of the parametric plane. 
     They are unstable elsewhere.
\end{thm}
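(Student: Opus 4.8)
The plan is to reduce to the three-degree-of-freedom irrotational setting, compute the spectrum of the linearisation along the one-parameter irrotational curve (\ref{irrocurve}), and pin down the two values of the parameter at which a purely imaginary pair of eigenvalues collides in a $1$:$-1$ resonance and leaves the imaginary axis.

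First I would set $\eta_r = 0$ (or $\eta_l = 0$), so that the reduced space becomes $P_L = \mathcal{B} \times \mathbb{R}^2 \times S^2_L$ and the system has three degrees of freedom. Using the symplectic coordinates (\ref{chIrr}) restricted to the surviving sphere, I would shift the type-I irrotational equilibrium to the origin $u = 0$ and Taylor-expand $H$ to second order, obtaining the quadratic part $H_2( u ) = \frac{1}{2}\, u^T ( -\mathcal{J}_6 \mathcal{L} )\, u$ with a $( 6 \times 6 )$ linearisation matrix $\mathcal{L}$. Because we move along the curve (\ref{irrocurve}), I can parametrise everything by a single quantity, e.g. $b_2^\ast/b_1^\ast$, with $b_3^\ast/b_1^\ast$ determined implicitly by the irrotational condition; the entries of $\mathcal{L}$ — and hence its characteristic polynomial — then become functions of this one parameter. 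The closed-form expressions for $C_1$ and $C_2$ from Appendix \ref{C1C2} are what make this symbolic reduction feasible.

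Second, since $\mathcal{L}$ is Hamiltonian its characteristic polynomial is even, so writing $\lambda^2 = -\nu$ reduces the question to a cubic in $\nu$. Linear stability is equivalent to this cubic having three \emph{positive real} roots together with semisimplicity of $\mathcal{L}$, the eigenvalues then being $\pm \imath \sqrt{\nu_k}$. Instability occurs exactly when a pair of roots collides and becomes complex, producing a complex quadruplet $\pm a \pm \imath b$ (a focus) in $\lambda$, or when a root turns negative (a saddle). I would therefore track the discriminant of the cubic (equivalently, the resultant detecting coincident $\nu$-roots) as a function of $b_2^\ast/b_1^\ast$: its sign distinguishes the three-positive-roots regime (stability) from the regime with a complex conjugate pair of $\nu$-roots (instability). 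The two transition values are precisely the points $A$ and $B$, characterised by the onset of the $1$:$-1$ resonance with nonzero nilpotent part. To upgrade spectral to \emph{linear} stability strictly between $A$ and $B$, I would then run Markeev's algorithm of Appendix \ref{Linearization}: the existence of a real, nonsingular transformation matrix bringing $H_2$ to a diagonal sum such as (\ref{linIa}) or (\ref{linIb}) with all $\omega_i > 0$ certifies that $\mathcal{L}$ is semisimple with purely imaginary spectrum, hence linearly stable. Note that the indefiniteness of these diagonal forms is why only linear, and not Liapunov, stability can be asserted here.

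The hard part will be locating $A$ and $B$ and certifying the sign pattern between them. The discriminant of the cubic is an enormous expression in $b^\ast$ (through the elliptic-integral-derived $C_1$, $C_2$), so its exact roots cannot be found in closed form; the bifurcation values $A \approx ( 0.58419, 0.07787 )$ and $B \approx ( 0.63527, 0.24613 )$ must be obtained by high-precision numerical root-finding. One must further verify numerically that no additional root collision or sign change of any $\nu_k$ occurs on the open arc between them, so that the spectrum genuinely stays on the imaginary axis throughout, and that the nilpotent part at $A$ and $B$ is truly nonzero, establishing the non-degeneracy of the Hamiltonian-Hopf points. These are exactly the steps where, as elsewhere in this work, the argument is closed by numerics of very high precision rather than in closed form.
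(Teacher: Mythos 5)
Your proposal is correct and follows essentially the same route as the paper: reduce to the three-degree-of-freedom irrotational system, linearise at the shifted equilibrium, use Markeev's algorithm of Appendix \ref{Linearization} to certify linear (not merely spectral) stability on the arc between $A$ and $B$, and locate $A$ and $B$ by high-precision numerics as the $1$:$-1$ non-semisimple resonance points, your discriminant-of-the-cubic criterion being equivalent to the paper's direct imposition of that resonance. The only difference is cosmetic: the paper first applies a rotation through an angle $\gamma$ chosen to kill the $\bar b_1 \bar p_2$ cross term, which gives the $(6 \times 6)$ linearisation matrix the sparse block structure that makes the closed-form frequencies manageable, whereas your plan works directly with the unrotated coordinates (\ref{chIrr}).
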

\begin{proof}
As already said on the irrotational curve we work with three degrees of freedom. One of the two spheres 
is reduced to a point. We introduce a rotation matrix 
\[ {\mathcal R}( \gamma ) = \left(
\begin{array}{ccc}
   \cos \gamma  & 0 & \sin \gamma \\
   0 & 1 & 0 \\
   \sin \gamma & 0 & \cos \gamma
\end{array} \right) \]
with the aim of using the angle $\gamma$ to get a simpler expression of the quadratic Hamiltonian
and related linearisation matrix. After setting $q_1 = p_1 = 0$ the transformation (\ref{chIrr}) 
results in
\[
\begin{array}{l}
m_1 = m_2 = m_3 = 0,\\[1ex]
m_4 = -q_2 \sqrt{R - \frac{q_2^2 + p_2^2}{4}}, \quad
m_5 = R - \frac{q_2^2 + p_2^2}{2}, \quad
m_6 = p_2 \sqrt{R - \frac{q_2^2 + p_2^2}{4}}.
\end{array}
\]
Applying the rotation matrix ${\mathcal R}( \gamma )$ to $\eta_r$, and using the same name for the $m_i$, we get the symplectic change
\[
\begin{array}{ccl}
m_4 &=& -q_2 \sqrt{R - \frac{q_2^2 + p_2^2}{4}} \cos \gamma + p_2 \sqrt{R - \frac{q_2^2 + p_2^2}{4}} \sin \gamma, \\[1ex]
m_5 &=& R - \frac{q_2^2 + p_2^2}{2}, \\[1ex]
m_6 &=& p_2 \sqrt{R - \frac{q_2^2 + p_2^2}{4}} \cos \gamma + q_2 \sqrt{R - \frac{q_2^2 + p_2^2}{4}} \sin \gamma. 
\end{array}
\]

Now we evaluate the coordinates $( q_2, p_2 )$ at the equilibrium, taking into account that $R = ( q_2^{\ast 2} + p_2^{\ast 2} )/2$. Then, the following relations are obtained:
\[
 q_2^\ast = \frac{\sqrt{2}( -m_4^\ast \cos \gamma + m_6^\ast \sin \gamma )}{( m_4^{\ast 2} + m_6^{\ast 2} )^{1/4}}, \quad
 p_2^\ast = \frac{\sqrt{2}( m_6^\ast \cos \gamma + m_4^\ast \sin \gamma )}{( m_4^{\ast 2} + 
 m_6^{\ast 2} )^{1/4}},
\]
where $m_i^\ast$ are given at the equilibrium. These expressions in terms of $b^\ast$ appear explicitly in the {\sc Mathematica} file.
\\

The final transformation is $( q_2, p_2 ) \rightarrow ( \bar q_2, \bar p_2 )$, where $q_2 = q_2^\ast + \bar q_2$ and $p = p_2^\ast + \bar p_2$. We apply it to Hamiltonian (\ref{Hamiltonian}) and expand up to terms of degree 2. Then, following \cite{fasso2001stability} we select $\gamma$ such that the term containing $\bar b_1 \bar p_2$ vanishes. 
\\

We arrive at
\[ \gamma = \arccos \left( \frac{m^\ast_6}{\sqrt{m_4^{\ast 2} + m_6^{\ast 2}}} \right). 
\]

The next step follows the footsteps of the study of $S_2$ and $S_3$. We apply Markeev's method described in Appendix \ref{Linearization} to bring the quadratic form $-\frac{1}{2}\, u^T \cdot {\mathcal J}_6 {\mathcal L}_I \cdot u$, with $u = ( \bar b_1, \bar b_2, \bar q_2, \bar c_1, \bar c_2, \bar p_2 )$, to normal form. The linearisation matrix is
 \bas
   {\mathcal L}_I = \left( 
   \begin {array}{cccccc} 
   0 & 0 & 0 & \ell_{1,4} & \ell_{1,5} & 0
	\\ \noalign{\medskip}
   0 & 0 & 0 & \ell_{1,5} & \ell_{2,5} & 0
    \\ \noalign{\medskip}
   0 & 0 & 0 & 0 & 0 & \ell_{3,6}
    \\ \noalign{\medskip}
   \ell_{4,1} & \ell_{4,2} & \ell_{4,3} & 0 & 0 & 0
    \\ \noalign{\medskip}
   \ell_{4,2} & \ell_{5,2} & \ell_{5,3} & 0 & 0 & 0
	\\ \noalign{\medskip}
   \ell_{4,3} & \ell_{5,3} & 0 & 0 & 0 & 0
   \end{array}
	\right),
	\eas 
where the coefficients $\ell_{i,j}$ are functions of $b^\ast$. Besides, the frequencies $\omega_i$ are expressed in terms of $\ell_{i,j}$, similarly to what we showed for the type-$S$ ellipsoids, much as through more involved expressions. The $\ell_{i,j}$ and $\omega_i$ are provided in the {\sc Mathematica} file. 
\\

Moving along the curve (\ref{irrocurve}) by means of different numerical samples, we observe basically two behaviours. Either the eigenvalues of the linearisation matrix are pure imaginary and the eigenvectors span ${\mathbb R}^6$ or there is a pair of pure imaginary eigenvalues and a quadruplet $\pm a \pm \imath b$, with $a, b > 0$. Imposing the frequencies to be in $1$:$-1$ non-semisimple resonance, we obtain two points in the parametric plane with coordinates $A \approx ( 0.58419, 0.07787 )$, $B \approx ( 0.63527, 0.24613 )$ such that for $b^\ast$ in (\ref{irrocurve}) without including $A$ and $B$ (red part of the curve in Fig. \ref{fig:I}), the quadratic normal-form Hamiltonian in the rectangular coordinates $z$ is
\[
H_2( z ) = -\frac{\omega_1}{2}( x_1^2 + y_1^2 ) + \frac{\omega_2}{2}( x_2^2 + y_2^2 ) + \frac{\omega_3}{2}( x_3^2 + y_3^2 ),
\]
with $\omega_i > 0$. Thus, on the stable part of the irrotational curve we obtain linear stability. On the curve (\ref{irrocurve}) outside the segment $( A, B )$ (brown parts of the curve plotted Fig. \ref{fig:I}) we get instability through the behaviour explained in the paragraphs previous to this theorem. Thus, $A$, $B$ are likely to correspond to two points where Hamiltonian-Hopf bifurcations take place. In fact, thinking of the Hamiltonian system with four degrees of freedom, points $A$ and $B$ belong to two curves in the parametric plane where Hamiltonian-Hopf bifurcations occur. They correspond to the two main curves in Fig. 3(a) of \cite{fasso2001stability}, accounting for the transition between spectral stability and instability.
\end{proof}

The rest of lines in the parametric plane exhibiting changes in stability have not been tackled in detail, but our numerical approach suggests that they are related to Hamiltonian-Hopf bifurcations. See also Figs. 3(a), 4(a) in \cite{fasso2001stability}.

%%%%%%%%%%%%%%%%%%%%%%%%%%%%%%%%%
\section{Quasi-periodic saddle-centre bifurcation of type-II ellipsoids}
\label{TypeII}
%%%%%%%%%%%%%%%%%%%%%%%%%%%%%%%%%

This section is devoted to the analysis of the stability and bifurcations of type-II ellipsoids. As it occurred with type-I ellipsoids, we give a numerical description of the different regimes appearing in $\mathcal{B}_{\rm II}$ and study one of the two types of bifurcations analytically. The other one is left for the section related to type-III ellipsoids. Recall that
\[
{\mathcal B}_{\rm II} = \Big\{ b \in {\mathcal B} \, : \, b_1^\ast \geq 2\, b_2^\ast + b_3^\ast, D( b_1^\ast, b_3^\ast, b_2^\ast ) < 0 \Big\}, 
\]
with $D$ given in (\ref{eq:Gs}). This region is represented in Fig. \ref{fig:II}.
\\

\begin{figure}[htb]
    \centering
    \includegraphics[width=0.6\textwidth]{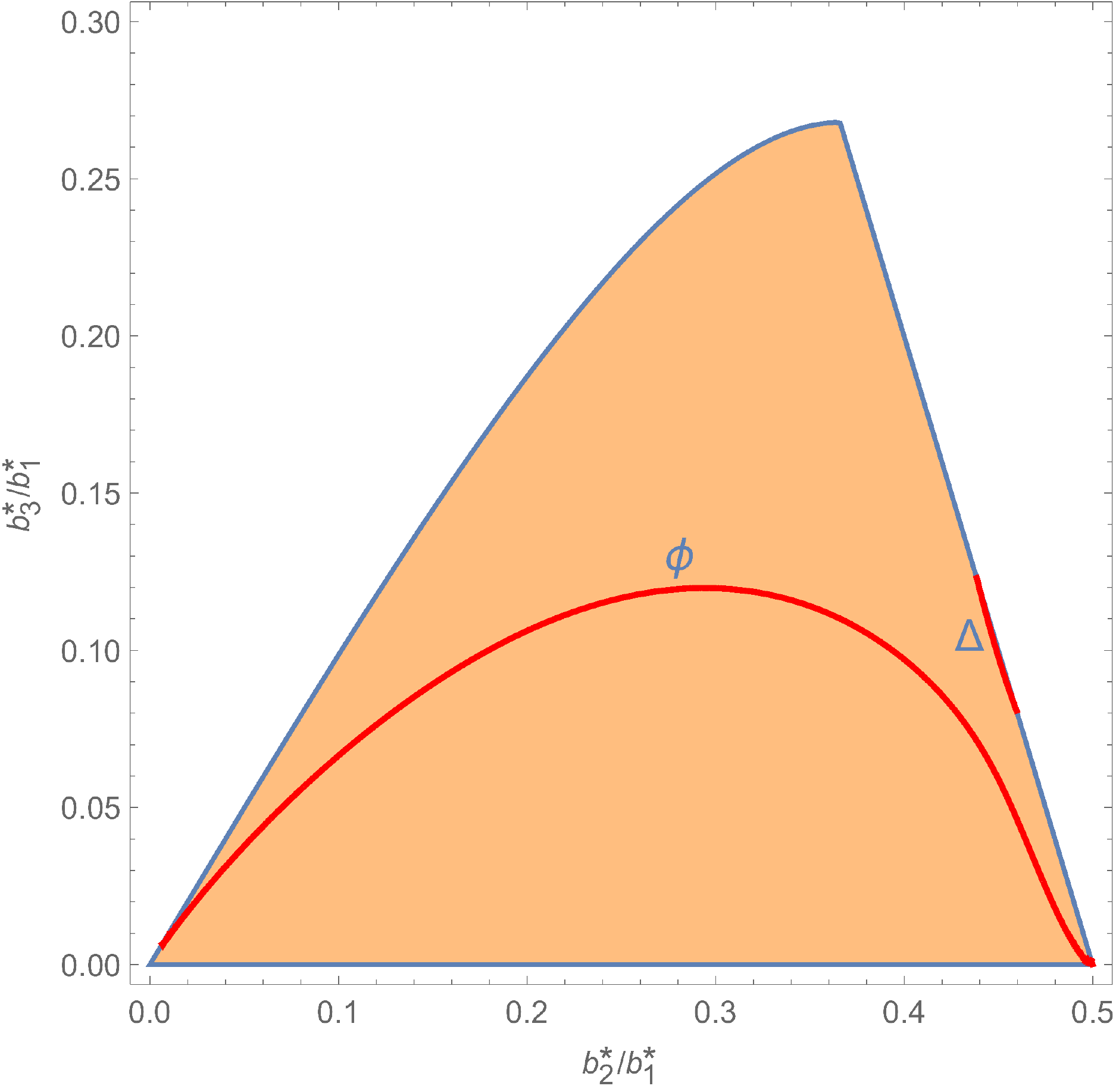}
    \caption{${\mathcal B}_{{\rm{II}}}$: Region of existence of the type-II Riemann ellipsoids. The red lines $\Phi$ and $\Delta$ correspond to saddle-centre bifurcations}
    \label{fig:II}
\end{figure}

We select the ellipsoid with $( S^2_L \times S^2_R )$-coordinates $( \mu^{+}_{N}( b_1^\ast, b_3^\ast, b_2^\ast ),$ $\mu^{-}_{N}( b_3^\ast, b_1^\ast, b_2^\ast ))$. The analysis for its adjoint essentially follows suit. Now $L$, $R$ are related to $b^\ast$ by 
\[ 
\begin{array}{rcl}
L &=& \sqrt{
N^{R}_{+}( b_1^\ast, b_3^\ast, b_2^\ast )^2 +
N^{R}_{-}( b_3^\ast, b_1^\ast, b_2^\ast )^2}, \\[1.3ex]
R &=& \sqrt{
N^{R}_{-}( b_1^\ast, b_3^\ast, b_2^\ast )^2 +
N^{R}_{+}( b_3^\ast, b_1^\ast, b_2^\ast )^2}. \\
\end{array}
\\
\]

The symplectic transformation suited for this case is
\begin{equation}
\label{changeTypeII}
   \begin{array}{lcl}
   b_i &=& b_i^\ast + {\bar b_i}, \quad
   c_i = {\bar c_i}, \\[1ex]
   m_1 &=& -q_1 \sqrt{L - \frac{q_1^2 + p_1^2}{4}}, \quad
   m_2 = L - \frac{q_1^2 + p_1^2}{2},\quad
   m_3 = p_1 \sqrt{L - \frac{q_1^2 + p_1^2}{4}}, \\[1ex]
   m_4 &=& -q_2 \sqrt{R - \frac{q_2^2 + p_2^2}{4}}, \quad
   m_5 = R - \frac{q_2^2 + p_2^2}{2}, \quad
   m_6 = p_2 \sqrt{R - \frac{q_2^2 + p_2^2}{4}},
   \end{array}
   \end{equation}
with 
\[ q_i = q_i^\ast + \bar q_i, \quad p_i = p_i^\ast + \bar p_i, \]\smallskip\par
\noindent and such that
\[
\begin{array}{lcllcl}
   L &=& \frac{1}{2}( p_1^{\ast 2} + q_1^{\ast 2} ), &
   R &=& \frac{1}{2}( p_2^{\ast 2} + q_2^{\ast 2} ), \\[1.5ex]
   q_1^\ast &=& \frac{-\sqrt{2} \, m_1^\ast}{( m_1^{\ast 2} + m_3^{\ast 2} )^{1/4}}, &
   q_2^\ast &=& \frac{-\sqrt{2} \, m_4^\ast}{( m_4^{\ast 2} + m_6^{\ast 2} )^{1/4}}, \\[1.5ex]
   p_1^\ast &=& \frac{\sqrt{2} \, m_3^\ast}{( m_1^{\ast 2} + m_3^{\ast 2} )^{1/4}}, &
   p_2^\ast &=& \frac{\sqrt{2} \, m_6^\ast}{( m_4^{\ast 2} + m_6^{\ast 2} )^{1/4}}.\\
\end{array}
\]
\smallskip\par

We notice that the specific values of $L$, $R$, $q_i^\ast$, $p_i^\ast$, $m_i^\ast$ are related to each other through the previous formulae. All of them are ultimately explicitly written as functions of the parameters  $b^\ast$. 
\\

By picking some samples in the region of the parametric plane denoted by ${\mathcal B}_{\rm II}$, we have numerically determined that the unstable ellipsoids of type II have either focus $\times$ focus or centre $\times$ centre $\times$ focus linearisation in the upper part of their region of existence. At some point, the focus $\times$ focus equilibria changes to linearisation of type centre $\times$ centre $\times$ focus. Eventually, a Hamiltonian-Hopf bifurcation occurs and they become linearly stable (centre $\times$ centre $\times$ centre $\times$ centre) with two positive signs in front of the $\omega_i > 0$ and two negative ones. 
\\
    
On the boundary line $b_1^\ast = 2\, b_2^\ast + b_3^\ast$ the equilibria are unstable of focus $\times$ saddle $\times$ saddle or centre $\times$ centre $\times$ centre $\times$ saddle type. At some point they become linearly stable with linearisation of centre $\times$ centre $\times$ centre $\times$ centre type, the quadratic normal-form Hamiltonian being indefinite with with two positive signs in front of the $\omega_i > 0$ and two negative ones. They are linearly stable in a very narrow strip and then change stability to become unstable with linearisation centre $\times$ centre $\times$ centre $\times$ saddle. This change is in correspondence with a saddle-centre bifurcation, as we shall show with detail in Theorem \ref{saddle_centre}. This exploration is compatible with the findings in \cite{fasso2001stability}, although Fass\`o and Lewis refer to spectral stability, it is indeed linear stability, which is a bit stronger.
\\
    
It is likely that there are two kinds of bifurcations involving type-II ellipsoids: on the one hand, Hamiltonian-Hopf bifurcations, where a linearly-stable equilibrium loses its stable character and two of the four pure imaginary eigenvalues change to become a quadruplet of complex eigenvalues. On the other hand, saddle-centre bifurcations, where a pair of pure imaginary eigenvalues becomes real. To prove that such a bifurcation takes place one has to compute higher-order terms because a mere linear analysis is not enough, as we show in Theorem \ref{saddle_centre} below. Besides, there is a transition line between focus $\times$ focus to centre $\times$ centre $\times$ focus regime, but we do not pay attention to it.
\\

In Fig. \ref{fig:II} we represent in red the curves corresponding to the saddle-centre bifurcations. Here we do not plot the Hamiltonian-Hopf bifurcations because we leave the study of this bifurcation for type-III ellipsoids. They can be seen in Fig. 4 of \cite{fasso2001stability}. An interesting degenerate case is the point in parametric plane where the Hamiltonian-Hopf and the saddle-centre bifurcations meet. 
\\

Now we focus on the quasi-periodic saddle-centre bifurcation. We follow the ideas of 
\cite{BroerHuitemaSevryuk, HanssmannCentreSaddle}, adapting them to our setting. The main result of the section is given next. 
\\

\begin{thm}
\label{saddle_centre}
    Type-II ellipsoids undergo a quasi-periodic saddle-centre bifurcation. Additionally there is a degenerate case corresponding to a tangency between two curves, one regarding a saddle-centre bifurcation and the other one regarding a Hamiltonian-Hopf bifurcation.
\end{thm}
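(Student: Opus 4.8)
The plan is to run the same machinery deployed for the pitchfork in Theorem~\ref{S2Theorem}, with the normalisation pushed to the order at which a fold first becomes visible. First I would take the type-II ellipsoid with the parameters $L$, $R$ and $q_i^\ast$, $p_i^\ast$ fixed as above, apply the symplectic shift~\eqref{changeTypeII} to bring the equilibrium to the origin, Taylor-expand the reduced Hamiltonian~\eqref{Hamiltonian}, and extract the quadratic part $H_2$. Running Markeev's algorithm (Appendix~\ref{Linearization}) on $H_2$ should produce three genuine elliptic directions with frequencies $\omega_1,\omega_2,\omega_3>0$ (of fixed signs) and a fourth, bifurcating direction whose frequency $\omega_4$ vanishes precisely on the red curves $\Phi$ and $\Delta$ of Fig.~\ref{fig:II}. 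As in item~ii of Theorem~\ref{S2Theorem}, I would build the linear transformation in the versal (Arnold) form, rescaling $x_4\to\sqrt{\omega_4}\,x_4$ and $y_4\to y_4/\sqrt{\omega_4}$, so that the transformation matrix stays real and non-singular on both sides of, and on, the bifurcation line, leaving the critical block in the form $-\frac12(\omega_4^2 x_4^2+y_4^2)$, with $\omega_4$ real, purely imaginary, or zero as one crosses $\Phi$, $\Delta$.

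Next I would pass to the complex/real coordinates~\eqref{complex} and run two steps of the Lie-transform method~\cite{Deprit}, normalising with respect to $H_2$ through quartic order. The decisive difference from the $S$-type ellipsoids is that type-II ellipsoids carry no reversing symmetry, so the cubic part of the normal form no longer vanishes. Reducing to the actions $I_i=\imath X_i Y_i$ and the transverse pair $(X_4,Y_4)$, I expect a normal form of shape $H=\omega\cdot I-\frac12 Y_4^2+\tilde c(I)\,X_4^2+\kappa\,X_4^3+\cdots$, whose transverse one-degree-of-freedom potential is, after the standard shift, the unfolded fold of Han{\ss}mann's quasi-periodic centre-saddle normal form~\cite{HanssmannCentreSaddle,hanssmann2006local,BroerHuitemaSevryuk}. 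Here $\tilde c(I)$ collects $\omega_4^2$ together with the action-dependent cross terms $X_4^2 I_j$ produced at quartic order, and vanishes at $I=0$ exactly on $\Phi$, $\Delta$, while the cubic coefficient $\kappa$ plays the role that $Q_4$ played in the pitchfork: it is the quantity whose non-vanishing makes the fold non-degenerate and forces the colliding elliptic and hyperbolic $3$-tori to exchange through the central torus as $\tilde c$ changes sign.

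The verification then splits into three checks, all to be carried out on a fine grid along $\Phi$ and $\Delta$ with the resonance points (where the Lie denominators vanish) excised exactly as in Theorem~\ref{S2Theorem}. I would (i) evaluate $\kappa$ in closed form in $b^\ast$ and confirm, with high precision, that it stays bounded away from zero on the two curves; (ii) establish the Broer--Huitema--Takens non-degeneracy needed for persistence, namely that the map sending the actions and the detuning to the internal frequencies $(\tilde\omega_1,\tilde\omega_2,\tilde\omega_3)$ and the normal coefficient $\tilde c$ is a submersion along the bifurcation set, by checking that the relevant Jacobian determinant does not vanish; and (iii) impose Kolmogorov non-degeneracy so that the families of elliptic and hyperbolic $3$-tori created at the fold are surrounded by Diophantine Lagrangian $4$-tori. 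Together these reproduce the hypotheses of the quasi-periodic centre-saddle theorem and establish the first assertion.

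For the degenerate case I would describe both colliding curves as analytic zero sets in the $b_2^\ast/b_1^\ast$--$b_3^\ast/b_1^\ast$ plane: the saddle-centre locus as $\{\omega_4=0\}$, and the Hamiltonian-Hopf locus as the set where a different pair of frequencies enters a $1{:}{-}1$ non-semisimple resonance with non-zero nilpotent part, the condition made explicit in Section~\ref{TypeIII}. Locating their common point and comparing the two gradients should show that the curves share a tangent line there rather than crossing transversally, which is the announced tangency of higher codimension. I expect the genuine obstacle to be computational rather than conceptual: the type-II formulae are enormous, so the closed-form extraction of $\kappa$ and of the submersion determinant, and above all the proofs that $\kappa\neq0$ and that the two curves are mutually tangent, can only be closed by careful high-precision numerics along $\Phi$ and $\Delta$; as in the $S_2$ analysis, by continuity these conclusions then propagate to a thin strip about the bifurcation curves.
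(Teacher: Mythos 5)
Your overall scaffolding (shift to the equilibrium via \eqref{changeTypeII}, Markeev's algorithm with the versal rescaling $x_4\to\sqrt{\omega_4}\,x_4$, $y_4\to y_4/\sqrt{\omega_4}$, two Lie-transform steps, numerical non-degeneracy checks along $\Phi$ and $\Delta$ with resonance points excised) matches the paper's. But there is a genuine gap at the heart of the argument: your normal-form ansatz $H=\omega\cdot I-\tfrac12 Y_4^2+\tilde c(I)\,X_4^2+\kappa\,X_4^3+\cdots$ omits the cubic terms that are \emph{linear} in $X_4$, namely $({\mathsf C}_1I_1+{\mathsf C}_2I_2+{\mathsf C}_3I_3)X_4$, and these are exactly what the loss of reversibility produces and what the paper's proof hinges on. With only $\tilde c(I)X_4^2+\kappa X_4^3$ present, the origin $X_4=Y_4=0$ is a critical point of the transverse potential for \emph{every} value of the actions, so as $\tilde c$ changes sign the two critical tori cross and exchange stability --- a transcritical scenario, which is in fact how you describe it (``exchange through the central torus''). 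That is not a saddle-centre. Equivalently, shifting $\tilde c\,X_4^2+\kappa X_4^3$ to kill the quadratic term leaves the linear coefficient $-\tilde c^{\,2}/(3\kappa)$, which never changes sign; your family is therefore not a versal unfolding of the fold, and no pair of tori is ever annihilated in it.

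The paper's route keeps the full cubic normal form $H_3={\mathsf C}_1X_1Y_1X_4+{\mathsf C}_2X_2Y_2X_4+{\mathsf C}_3X_3Y_3X_4+{\mathsf C}_4X_4^3$ and then removes the $X_4^2$ term by an action-dependent shift $X_4=\bar X_4+X^\ast_{40}(\bar I)$, with the compensating correction $\psi_i=\bar\psi_i+\bar Y_4\,\partial X^\ast_{40}/\partial\bar I_i$ of the conjugate angles so that the change stays symplectic --- a point your ``standard shift'' glosses over. This yields $H^4=F^\ast_1(\bar I)-\tfrac12\bar Y_4^2+F^\ast_2(\bar I)\bar X_4+F^\ast_4(\bar I)\bar X_4^3+F_5\bar X_4^4$, and the unfolding parameter of the fold is the coefficient $F^\ast_2(\bar I)$ of the \emph{linear} term: an elliptic/hyperbolic pair of tori exists when $F^\ast_2F^\ast_4<0$, they merge into a parabolic torus when $F^\ast_2=0$, and they disappear when $F^\ast_2F^\ast_4>0$. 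Accordingly, the two non-degeneracy checks of Theorem 4.4 of \cite{hanssmann2006local} are ${\mathsf C}_4\neq0$ (your $\kappa$, which you identify correctly) and $D\tilde c(0)=-\imath({\mathsf C}_1,{\mathsf C}_2,{\mathsf C}_3)\neq(0,0,0)$ with $\tilde c=F^\ast_2$, i.e.\ the submersion is carried by the action-gradient of the \emph{linear} coefficient, not of the quadratic one as in the pitchfork of Theorem \ref{S2Theorem}; transplanting the $S_2$ machinery, as you do, attaches the right kind of checks to the wrong coefficients and would certify a bifurcation that does not occur here. Finally, for the degenerate point, the paper does not compare gradients of the two curves: it pins the point by imposing a zero eigenvalue of algebraic multiplicity four ($\omega_3=\omega_4=0$, eigenvector matrix of rank five) and then classifies the linearisations in the adjacent regions; your tangency-by-gradients plan is plausible but is not the argument given, and by itself it does not produce the local normal form at the coalescence.
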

\begin{proof}
We begin our proof in a similar way as we initiated the proof of Theorem \ref{ThmtypeI}. We introduce two rotation matrices ${\mathcal R}( \gamma_1 )$ and ${\mathcal R}( \gamma_2 )$ to construct a symplectic transformation that allows us to get a simpler expression of the quadratic Hamiltonian in normal form. 
\\

Combining the initial change of coordinates (\ref{changeTypeII}) with the two rotation matrices we obtain the symplectic change
\[
   \begin{array}{lcl}
   m_1 &=& -q_1 \sqrt{L - \frac{q_1^2 + p_1^2}{4}} \cos \gamma_1 + p_1 \sqrt{L- \frac{q_1^2 + p_1^2}{4}} \sin \gamma_1, \quad
   m_2 = L -\frac{q_1^2 + p_1^2}{2}, \\[1.7ex]
   m_3 &=& p_1 \sqrt{L - \frac{q_1^2 + p_1^2}{4}} \cos \gamma_1 + q_1 \sqrt{L - \frac{q_1^2 + p_1^2}{4}} \sin \gamma_1, \\[1.7ex]
   m_4 &=& -q_2 \sqrt{R - \frac{q_2^2 + p_2^2}{4}} \cos \gamma_2 + p_2 \sqrt{R - \frac{q_2^2 + p_2^2}{4}} \sin \gamma_2, \quad
   m_5 = R - \frac{q_2^2 + p_2^2}{2}, \\[1.7ex]
   m_6 &=& p_2 \sqrt{R - \frac{q_2^2 + p_2^2}{4}} \cos \gamma_2 + q_2 \sqrt{R - \frac{q_2^2 + p_2^2}{4}} \sin \gamma_2.
   \end{array}
   \]
The angles $\gamma_i$ are introduced with the goal of getting some zero entries in the linearisation matrix that we are going to build. 
\\
   
The relations at the equilibrium are
\[
\begin{array}{lcllcl}
 q_1^\ast &=& \displaystyle \frac{\sqrt{2}( -m_1^\ast \cos \gamma_1 + m_3^\ast \sin \gamma_1)}{( m_1^{\ast 2} + m_3^{\ast 2} )^{1/4}}, \quad
 & p_1^\ast &=& \displaystyle \frac{\sqrt{2}( m_3^\ast \cos \gamma_1 + m_1^\ast \sin \gamma_1)}{( m_1^{\ast 2} + m_3^{\ast 2} )^{1/4}}, \\[2ex]
 q_2^\ast &=& \displaystyle \frac{\sqrt{2}( -m_4^\ast \cos \gamma_2 + m_6^\ast \sin \gamma_2)}{( m_4^{\ast 2} + m_6^{\ast 2} )^{1/4}}, \quad
 & p_2^\ast &=& \displaystyle \frac{\sqrt{2}( m_6^\ast \cos \gamma_2 + m_4^\ast \sin \gamma_2)}{( m_4^{\ast 2} + m_6^{\ast 2} )^{1/4}}.
 \end{array}
\]

We select $\gamma_1$ and $\gamma_2$ so that the terms of the transformed quadratic Hamiltonian containing $\bar b_1 \bar p_1, \bar b_2 \bar p_2, \bar b_1 \bar p_2, \bar b_2 \bar p_1$ are zero. We end up with
\[
\gamma_1 = \arccos \left(\frac{m^\ast_3}{\sqrt{m_1^{\ast 2} + m_3^{\ast 2}}} \right), \quad
\gamma_2 = -\arccos \left(\frac{m^\ast_6}{\sqrt{m_4^{\ast 2} + m_6^{\ast 2}}} \right). \\ \]

Next, we apply Markeev's procedure (see Appendix \ref{Linearization}) to obtain the corresponding diagonal linear normal form in rectangular coordinates, say $z$. The linearisation matrix is 
\bas
   {\mathcal L} = \left( \begin{array}{cccccccc} 
   0 & 0 & 0 & 0 & \ell_{1,5} & \ell_{1,6} & 0 & 0
	\\ \noalign{\medskip}
   0 & 0 & 0 & 0 & \ell_{1,6} & \ell_{2,6} & 0 & 0
    \\ \noalign{\medskip}
   0 & 0 & 0 & 0 & 0 & 0 & \ell_{3,7} & \ell_{3,8}
    \\ \noalign{\medskip}
   0 & 0 & 0 & 0 & 0 & 0 & \ell_{3,8} & \ell_{4,8}
    \\ \noalign{\medskip}
   \ell_{5,1} & \ell_{5,2} & \ell_{5,3} & \ell_{5,4} & 0 & 0 & 0 & 0
	\\ \noalign{\medskip}
   \ell_{5,2} & \ell_{6,2} & \ell_{6,3} & \ell_{6,4} & 0 & 0 & 0 & 0
	\\ \noalign{\medskip}
   \ell_{5,3} & \ell_{6,3} & \ell_{7,3} & \ell_{7,4} & 0 & 0 & 0 & 0
	\\ \noalign{\medskip}
   \ell_{5,4} & \ell_{6,4} & \ell_{7,4} & \ell_{8,4} & 0 & 0 & 0 & 0 \end{array}
	\right),
	\eas 
where the coefficients $\ell_{i,j}$ are given in terms of $b^\ast$. The relationship between both kinds of parameters is much more involved than in case of $S$-ellipsoids. Nevertheless, it has to be expected since the linearisation matrix has less zero blocks than the ones appearing in Appendix \ref{Linearization}. Anyway, we have succeeded in obtaining closed formulae and they are supplied in the {\sc Mathematica} file. The expressions of the $\omega_i$ in terms of $\ell_{i,j}$ are also cumbersome but they can be computed explicitly. The reason is that they are obtained from the roots of the characteristic equation of an $( 8 \times 8 )$-matrix, but this equation contains only even powers in the unknown, say $\lambda$. Thus, it is indeed a polynomial equation of degree four whose roots are derived in closed form. We have achieved this, arriving at formulae of quite big sizes but still manageable to work with them. The related eigenvectors are also provided. The values of $\omega_i$ as functions of $\ell_{i,j}$ are also presented in the {\sc Mathematica} file.
\\

As in the previous cases we apply the procedure due to Markeev and delineated in Appendix \ref{Linearization}. A last step is needed to make the approach valid when the frequency $\omega_4$ vanishes. We set $x_4 \rightarrow \sqrt{\omega_4} x_4$, $y_4 \rightarrow y_4/\sqrt{\omega_4}$, exactly as we did for the $S_2$-ellipsoids in the co-parallel regime. We arrive at the quadratic Hamiltonian function in normal form:
\[ H_2( z ) = -\frac{\omega_1}{2} ( x_1^2 + y_1^2 ) + \frac{\omega_2}{2} ( x_2^2 + y_2^2 ) + 
  \frac{\omega_3}{2} (x_3^2 + y_3^2) - \frac{1}{2} ( \omega_4^2 x_4^2 + y_4^2 ),
\]
with $\omega_i > 0$, $i = 1, 2, 3$, while $\omega_4$ can be positive, zero or pure imaginary such that $\omega_4 = \imath \bar \omega_4$ with $\bar \omega_4 < 0$. The corresponding transformation matrix $\mathcal T$, the one in charge of bringing $H_2$ to normal form, is symplectic and has real entries for $\omega_4 \ge 0$ or pure imaginary. Moreover, $\mathcal T$ depends smoothly on $\omega_4$, thus it is a versal normal form. Up to this step the form of the quadratic Hamiltonian in normal form is the same as the one occurring in the co-parallel regime of the $S_2$-ellipsoids, excepting one sign. However the bifurcation is going to be different, but this will be concluded after analysing the higher-order terms. 
\\
 
As we have seen, the responsible for the bifurcation is the frequency $\omega_4$. Expressing $\omega_4$ as an explicit function of $b_2^\ast/b_1^\ast$ and $b_3^\ast/b_1^\ast$ is hard. Nevertheless, taking into account that the determinant of the linearisation matrix ${\mathcal L}$ is equal to the product of its eigenvalues, so $\vert {\mathcal L} \vert = ( \omega_1 \omega_2 \omega_3 \omega_4 )^2$. Then $\omega_4 = 0$ implies $\vert {\mathcal L} \vert = 0$. This fact allows us to determine an analytical expression of the bifurcation curves subsequent to removing spurious terms. The relevant factor corresponding to the bifurcation lines in the parametric plot is given by a compact formula, and is provided in the {\sc Mathematica} file. It has been depicted in Fig. \ref{fig:II} (red lines $\Phi$ and $\Delta$). Bifurcation line $\Phi$ corresponds to the lower arch in region ${\mathcal B}_{\rm II}$ in \cite{fasso2001stability} (see Fig. 4 (b), (c) and (d)). Line $\Delta$ also appears in Fig. 4 (d) of \cite{fasso2001stability}, although it requires some clarification that we do below.
\\

After applying the linear transformation obtained above to terms of degree three and four we apply a Lie transformation \cite{Deprit} to compute the corresponding normal form up to terms of degree four in rectangular coordinates. We want to see that the requested non-degeneracy conditions needed to prove that a quasi-periodic saddle-centre bifurcation takes place are fulfilled. To achieve this, we pass to complex variables by means of the change (\ref{complex}). 
\\

The linear normal form in complex/real variables, that is, in $Z$ defined for the $S_2$-ellipsoids, has as Hamiltonian function
\[
H_2( Z ) = -\imath \omega_1 X_1 Y_1 + \imath \omega_2 X_2 Y_2 + \imath \omega_3 X_3 Y_3 -\mbox{$\frac{1}{2}$} ( \omega_4^2 X_4^2 + Y_4^2 ).
\\
\]

Due to the structure of the zeroth-order Hamiltonian $H_2$ and taking into account that due to the lack symmetries if compared to $S_2$-ellipsoids, some terms of degree three have to be retained in the transformed Hamiltonian, we impose the first-order normal-form Hamiltonian, which is composed by homogeneous polynomials of degree three in complex/real coordinates, to be of the form
\[ H_3( Z ) = {\mathsf C}_1 X_1 Y_1 X_4 + {\mathsf C}_2 X_2 Y_2 X_4 + {\mathsf C}_3 X_3 Y_3 X_4 + {\mathsf C}_4 X_4^3,
\]
with ${\mathsf C}_i$ some parameters (real or complex) that have to be determined. Notice that we use the same name for the transformed and untransformed coordinates. The reason for the monomials chosen to get $H_3$ is due to the form $H_2$ has and in particular due to the nilpotent part of $H_2$ for $\omega_4 = 0$. The same will happen for higher-order terms. Introducing the generating function $\mathcal{W}_1$ as a homogeneous polynomial in $Z$ of degree three with undetermined coefficients, we impose that the related homological equation be satisfied. This leads to a system of linear equations whose unknowns are the coefficients of $\mathcal{W}_1$ and the ${\mathsf C}_i$. This is an underdetermined system with 120 linear equations and 124 unknowns that has been solved. Coefficients ${\mathsf C}_i$ depend explicitly on the $\ell_{i,j}$, $\omega_i$ and $b^\ast$.
\\

Passing from the $X_i/Y_i$ to the actions $I_i$, $i = 1, 2, 3$, the truncated normal-form Hamiltonian at first order, 
that is, $H_2 + H_3$ reads as
\[
\begin{array}{lcl}
  H^3( I, X_4, Y_4 ) &=& -\omega_1 I_1 + \omega_2 I_2 + \omega_3 I_3 -\mbox{$\frac{1}{2}$} ( \omega_4^2 X_4^2 + Y_4^2 ) \\[1ex]
  && - \, \imath ( {\mathsf C}_1 I_1 + {\mathsf C}_2 I_2 + {\mathsf C}_3 I_3 ) X_4 + {\mathsf C}_4 X_4^3.
  \end{array}
\]

In the process of getting the suitable normal form for the saddle-centre bifurcation we make a shift in $X_4$ 
\[
X_4 = \bar X_4 + X_{40}, \quad
Y_4 = \bar Y_4,
\]
where 
\[
X_{40} = \frac{\omega_4^2}{6 {\mathsf C}_4}
\]
and apply it to $H^3$. In this way we absorb the term in $\bar X_4^2$, providing the denominator of 
$X_{40}$ does not vanish at the bifurcation curve, that is, ${\mathsf C}_4 \neq 0$, arriving at a suitable pattern for proving the existence of a saddle-centre bifurcation. We get
\[
 H^3( I, \bar X_4, \bar Y_4 ) = -\omega_1 I_1 + \omega_2 I_2 + \omega_3 I_3 - \mbox{$\frac{1}{2}$} \bar Y_4^2 + {\mathsf C}_5( I ) + {\mathsf C}_6( I ) \bar X_4  + {\mathsf C}_4 \bar X_4^3,
\]
where ${\mathsf C}_5$ and ${\mathsf C}_6$ depend linearly on the actions and on the parameters of the problem.
\\

In a bid to get the persistence of KAM tori associated to the bifurcation, the normal form $H^3$
is still too degenerate and the related Hessian that we have to check is zero. For this reason we have to compute the order two (second step of the normal form procedure) in order to incorporate a quadratic dependence in the actions and obtain the required rank (three) to prove this persistence.
\\

The second-order normal form, i.e. the terms of degree four in complex/real variables $Z$, 
is of the form
\[
\begin{array}{lcl}
H_4( Z ) &=& Q_1 ( X_1 Y_1 )^2 + Q_2 ( X_2 Y_2 )^2 + Q_3 ( X_3 Y_3 )^2 + Q_4 X_4^4 + Q_5 X_1 Y_1 X_2 Y_2 \\[1ex]
&& + \, Q_6 X_1 Y_1 X_3 Y_3 + Q_7 X_2 Y_2 X_3 Y_3 + Q_8 X_1 Y_1 X_4^2 + Q_9 X_2 Y_2 X_4^2 \\[1ex]
&& + \, Q_{10} X_3 Y_3 X_4^2,
\end{array}
\] 
where the $Q_i$ coefficients are determined, together with the ones of the generating function ${\mathcal W}_2$. This is achieved by solving a linear system of 330 equations and 340 unknowns. After some simplifications and arrangements done with the aim of controlling that the denominators of the monomials forming the generating function do not vanish when $\omega_4 = 0$, we have ended up with concrete expressions for $Q_i$ and ${\mathcal W}_2$, which in turn are explicit functions of $b^\ast$.
\\

Now we consider the truncated normal form at degree four in the (transformed) coordinates $I$, $X_4$, $Y_4$. Hamiltonian $H^4 = H_2 + H_3 + \frac{1}{2} H_4$ reads as
\[ 
H^4( I, X_4, Y_4 ) = F_1( I ) - \mbox{$\frac{1}{2}$} Y_4^2 + F_2( I ) X_4 + F_3( I ) X_4^2 + F_4 X_4^3 + F_5 X_4^4, 
\]
with $F_i( I )$, $i = 1, 2, 3$, polynomials in $I$ that depend on $b^\ast$ whereas $F_4$, $F_5$ are functions of $b^\ast$. Specifically 
\[
\begin{array}{rcl}
F_1( I ) &=& -\omega_1 I_1 + \omega_2 I_2 + \omega_3 I_3 \\[0.8ex] && - \, \mbox{$\frac{1}{2}$}( Q_1 I_1^2 + Q_2 I_2^2 + Q_3 I_3^2 + Q_5 I_1 I_2 - Q_6 I_1 I_3 + Q_7 I_2 I_3 ), \\[1ex]
F_2( I ) &=& -\imath ( {\mathsf C}_1 I_1 + {\mathsf C}_2 I_2 + {\mathsf C}_3 I_3 ), \\[1ex]
F_3( I ) &=& 
-\mbox{$\frac{1}{2}$} \omega_4^2 - \mbox{$\frac{\imath}{2}$} ( Q_8 I_1 + Q_9 I_2 + Q_{10} I_3 ), \\[1ex]
F_4 &=& {\mathsf C}_4, \\[1ex]
F_5 &=& \mbox{$\frac{1}{2}$} Q_4. 
\end{array}
\]
Hamiltonian $H_3$ is taken above without doing the shift. 
\\

We can eliminate the term depending on $X_4^2$ as before, but this time it is a bit more involved. Calling $\psi_i$ the angles conjugate to $I_i$ we introduce the transformation 
\[
\begin{array}{rcl}
X_4 &=& \bar X_4 + X^\ast_{40}( \bar I ), \quad Y_4 = \bar Y_4, \\[1ex]
\psi_i &=& \displaystyle \bar \psi_i + \bar Y_4 \frac{\partial {X^\ast_{40}( \bar I )}}{\partial \bar I_i}, \quad I_i = \bar I_i.
\end{array}
\]
The modification done on $\psi_i$ is due to the fact that $X^\ast_{40}$ depends on the $\bar I$. Then, the change is symplectic. An additional detail is that when we solve the equation for determining $X^\ast_{40}$, there are two possible solutions (it is obtained by solving a second-degree equation) and the right choice depends on the sign of the coefficient of $X_4^3$ in the Taylor expansion for the specific values of $b^\ast$. We remark that $X^\ast_{40}( 0 ) = 0$ when $\omega_4 = 0$.
\\

The resulting (truncated) normal-form Hamiltonian becomes
\[ 
H^4( \bar I, \bar X_4, \bar Y_4 ) = F^\ast_1( \bar I ) - \mbox{$\frac{1}{2}$} \bar Y_4^2 + F^\ast_2( \bar I ) \bar X_4 + F^\ast_4( \bar I ) \bar X_4^3 + F_5 \bar X_4^4, 
\]
with $F^\ast_1$, $F^\ast_2$, $F^\ast_4$ functions of $\bar I$. The term $F_5 \bar X_4^4$ can be considered of higher order for $\bar X_4$ small enough.
\\

At this point we examine the possible resonances introduced in the Lie transformation process. More precisely we have checked whether the denominators of the terms of the generating functions vanish when $\omega_4 = 0$. Focusing on the line $\Phi$, being the approach the same for $\Delta$, we have found three fourth-order resonances, namely  
\[ -\omega_1 + 3\, \omega_3, \quad -2\, \omega_1 + \omega_2 + \omega_3, \quad \omega_1 - \omega_2 + 2\, \omega_3. 
\]
These values are removed from our study and are represented in Fig. \ref{fig:IIReso}. Concretely, we 
have to discard from the parametric plane the points $( b_2^\ast/b_1^\ast, b_3^\ast/b_1^\ast )$ such that the linear combinations of the frequencies given above become zero. We obtain: $-\omega_1 + 3\, \omega_3 = 0$ for $b_2^\ast/b_1^\ast \approx 0.144, 0.272, 0.438$; $-2\, \omega_1 + \omega_2 + \omega_3 = 0$ for $b_2^\ast/b_1^\ast \approx 0.240, 0.468, 0.485$; $\omega_1 - \omega_2 + 2\, \omega_3 = 0$ for $b_2^\ast/b_1^\ast \approx 0.177, 0349, 0.379$. The corresponding values of $b_3^\ast/b_1^\ast$ are obtained after solving $\omega_4 = 0$. By a continuity argument we also remove small neighbourhoods of these points, because some denominators of the formulae become very small. These roots and their neighbourhoods have to be discarded from our analysis.
\\

\begin{figure}[ht]
    \centering
    \includegraphics[width=0.6\textwidth]{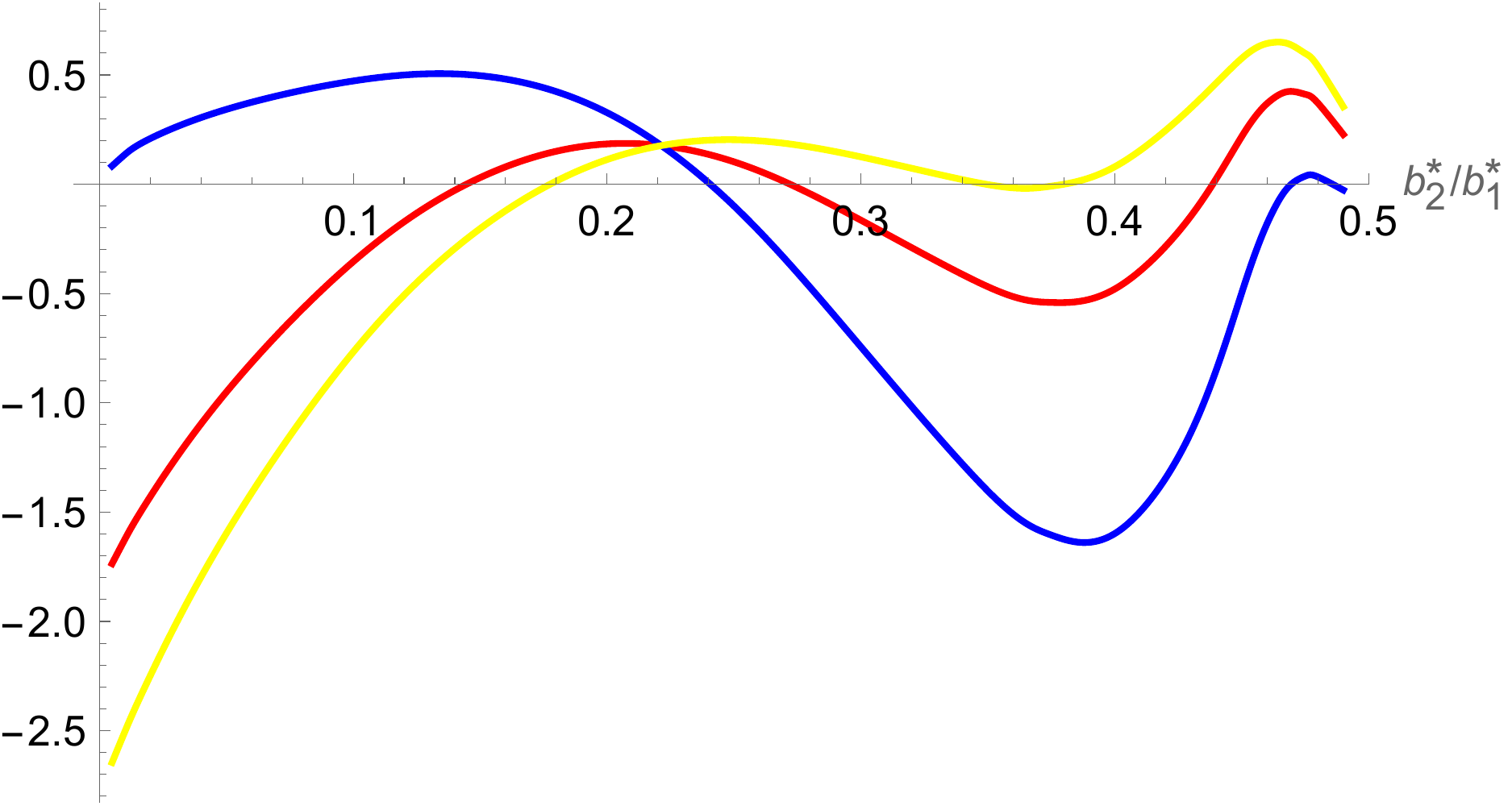}
    \caption{Resonances appearing in the computation of the normal form to determine the quasi-periodic saddle-centre bifurcation for type-II ellipsoids. The red line corresponds to $-\omega_1 + 3\, \omega_3$, the blue one to $-2\, \omega_1 + \omega_2 + \omega_3$ and the yellow one to $\omega_1 - \omega_2 + 2\, \omega_3$}
    \label{fig:IIReso}
\end{figure}

We deal now with the conditions that $H^4$ has to fulfill in a bid to establish the occurrence of the saddle-centre bifurcation. We apply Theorem 4.4 of \cite{hanssmann2006local}. We need to study the behaviour of $F^\ast_2$ and $F^\ast_4$, respective coefficients of $\bar X_4$ and $\bar X_4^3$. In particular we have to prove that $F^\ast_2$ vanishes for $\bar I = 0$, $\omega_4 = 0$ but $F^\ast_4( 0 ) \neq 0$ for $\omega_4 = 0$. On the one hand, as for $\omega_4 = \bar I = 0$ we know that the coefficient $X^\ast_{40}$ vanishes, one has that $F^\ast_2( 0 ) = F_2( 0 ) = 0$. On the other hand we observe that the coefficient $F^\ast_4( 0 )$ (with $\omega_4 = 0$) is equal to the coefficient ${\mathsf C}_4$, which is given in an explicit way on the bifurcation line $\Phi$ in terms of $b^\ast$ and placed in the {\sc Mathematica} file. However, in a bid to check that it does not vanish on the line $\Phi$ we need to proceed numerically though with very high precision in the computations. Thus, we check how it evolves along the bifurcation curve $\Phi$. We depict in Fig. \ref{fig:cof3} the variation of this coefficient when $b^\ast$ is in $\Phi$. 
\begin{figure}
    \centering
  \includegraphics[width=0.6\textwidth]{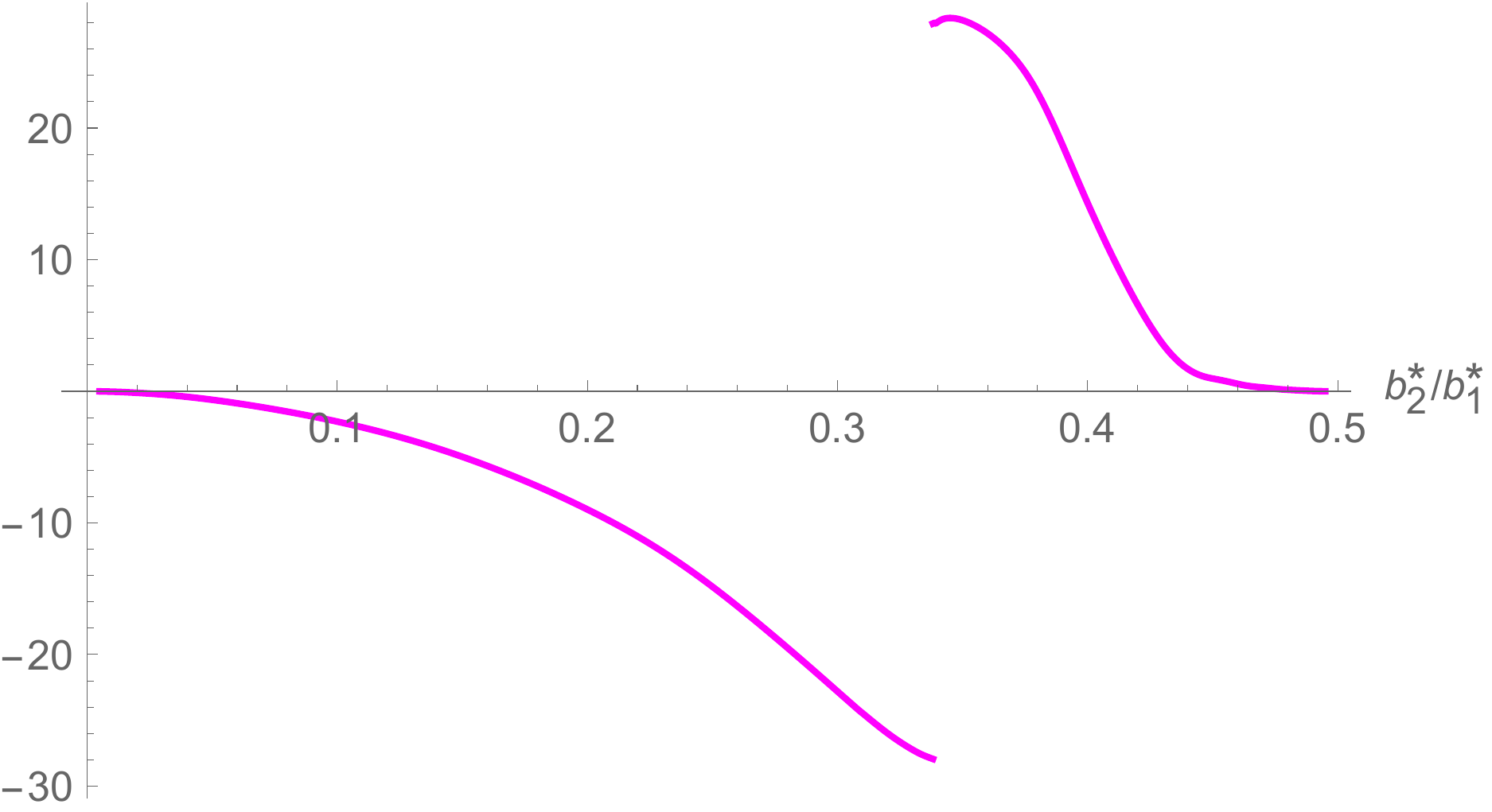}
    \caption{Coefficient ${\mathsf C}_4$ with $\omega_4 = 0$ evaluated along the line $\Phi$. It never touches the horizontal axis. In fact, when $b_2^\ast/b_1^\ast \approx 0$, the coefficient is $-0.0078$ and it goes on decreasing up to the point $b_2^\ast/b_1^\ast \approx 0.3381$ where a discontinuity occurs. Its corresponding $b_3^\ast/b_1^\ast$ jumps from approximately $-27.9366$ to $27.9366$. Then ${\mathsf C}_4$ becomes positive and remains above the horizontal axis. When $b_2^\ast/b_1^\ast \approx 1/2$, ${\mathsf C}_4$ takes its minimum value, $0.0038$}
    \label{fig:cof3}
\end{figure}
\\

The values of $b^\ast$ related to the resonances presented above and small balls around them are not taken into account for our analysis. The value $b_2^\ast/b_1^\ast \approx 0.3381$ (see Fig. \ref{fig:cof3}) has nothing to do with the resonances. In fact, it is related to the manner the linear normal form has been built. In spite of that, the linear change is properly defined for this value. The corresponding ratio $b_3^\ast/b_1^\ast$ (obtained imposing that the point $b^\ast$ belongs to $\Phi$) is approximately $0.1163$ though it does not affect the overall study since this coefficient does not vanish along the bifurcation curve.  
\\

In a final step we prove the persistence of the KAM tori related to the bifurcation. We introduce 
\[ 
\tilde{c}( I ) = F^\ast_2( \bar I ), \quad  
\tilde{\omega}_i( I ) = \displaystyle \frac{\partial F^\ast_1( \bar I )}{\partial \bar I_i}, 
\]
and define the map 
\[ 
\xi: \bar I \rightarrow \left( \tilde{c}( \bar I ), \tilde{\omega}_1( \bar I ), \tilde{\omega}_2( \bar I ), \tilde{\omega}_3( \bar I ) \right). 
\] 
We prove that it is a submersion at $\omega_4 = 0$, i.e. that $\xi$ is differentiable with its differential being surjective everywhere. We get $D \tilde{c}( 0 ) = -\imath ( {\mathsf C}_1, {\mathsf C}_2, {\mathsf C}_3 ) \neq ( 0, 0, 0 )$ along the curve $\Phi$, where we have discarded the resonance values but not the point $( b_2^\ast/b_1^\ast, b_3^\ast/b_1^\ast ) \approx ( 0.3381, 0.1163 )$. More precisely, evaluating ere the norm of $D \tilde{c}( 0 )$ along the bifurcation curve we have noticed that it is always positive, and tends to zero when approaching the right-end point of the curve, where it takes its minimum value, around $0.0089$. Analogously to the analysis made in Section \ref{SectionS2} we form the ($3 \times 3$)-matrix $\mathtt M$ where its first row is $( \tilde{\omega}_1( \bar I ), \tilde{\omega}_2( \bar I ), \tilde{\omega}_3( \bar I ))$, its second row is the partial derivative of the first one with respect to $\bar I_1$ and its third row is the partial derivative with respect to $\bar I_2$. The determinant of $\mathtt M$ evaluated at $\bar I = \omega_4 = 0$ is different from zero along the bifurcation curve $\Phi$, excepting a discrete set of points which have nothing to do with the resonances dealt with above. For those points where $| M | = 0$ we form the matrices but instead of deriving with respect to $I_1$, $I_2$, we do it with respect to $I_1$, $I_3$ or to $I_1$, $I_2$. We have checked that at least one of these two determinants does not vanish
on the points where $M$ is singular, so we conclude that the rank of $M$ (or of one of the other possibles matrices) is three. Of course, the values already removed where the analysis cannot be applied. Hence, it is concluded that the map introduced above is a submersion. The calculations are given in the {\sc Mathematica} file. This allows us to conclude the persistence of the invariant tori that interplay in the bifurcation. 
\\

As the sign of $\bar Y_4^2$ in $H^4$ is always negative, the KAM tori arise when the product $F_2^\ast(\bar I ) F_4^\ast( \bar I )$ is negative. Actually, this product is a function with non-trivial dependence on $\bar I$, as it contains rational terms and square roots, hence it can take positive, negative or complex values, according to the relative values among the $\bar I_i$. Thus, when $F_2^\ast(\bar I ) F_4^\ast( \bar I ) < 0$ there is a family of elliptic $3$-tori and another one of hyperbolic type and dimension $2$. These two families merge to become one on the bifurcation curve $\Phi$. Then, $F_2^\ast( \bar I ) = 0$ and the resulting tori become parabolic. The tori disappear when $F_2^\ast( \bar I ) F_4^\ast( \bar I ) > 0$. The parabolic $3$-tori also persist under perturbation.
\\
   
To conclude the proof of the theorem we have to analyse the point in the parametric plane where two types of bifurcations take place. The corresponding normal-form Hamiltonian has  been also obtained analytically. This tangency point appears in Fig. \ref{fig:Degenerate}. We have deferred the study of one of the Hamiltonian-Hopf bifurcations that take place in the cases of ellipsoids of types I, II and III to the next section. Here we only show how the two bifurcations are experienced by the Riemann ellipsoids in a single point of the parametric plane. This is indeed the only point where we have observed two bifurcations. 
  \begin{figure}[ht]
    \centering
    \includegraphics[width=0.6\textwidth]{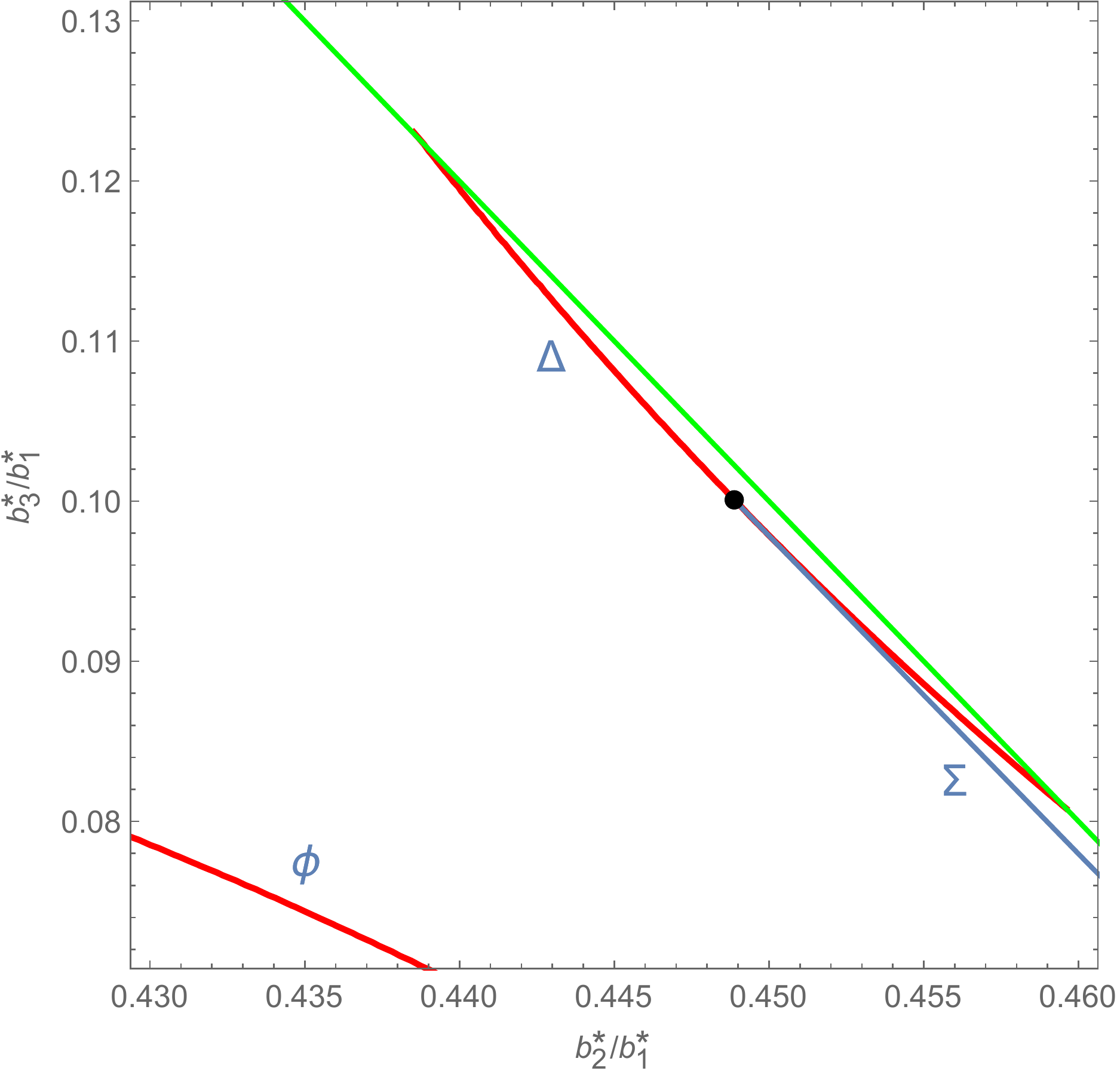}
    \caption{Tangency point between the Hamiltonian-Hopf bifurcation (in blue) and the saddle-centre bifurcation (in red). The tangency occurs at the black point. The green line corresponds to the boundary of ${\mathcal B}_{\rm II}$}
    \label{fig:Degenerate}
\end{figure}
\medskip\par

 In Section \ref{TypeIII} we shall explain how to determine the Hamiltonian-Hopf bifurcation lines, including the analysis of higher-order terms. For the moment we only say that in the curve $\Delta$ (see Fig. \ref{fig:Degenerate}) we have detected a point such that the linearised Hamiltonian system has also a behaviour related to a Hamiltonian-Hopf bifurcation. In fact, picking some points on the blue line $\Sigma$ (different from the black point) as well as other nearby points, we have checked that the transition in stability related to the crossing of the blue line corresponds to the occurrence of a quasi-periodic Hamiltonian-Hopf bifurcation. We name $\Sigma$ this bifurcation curve. 
\\

In particular, at the tangency point we observe that the four eigenvalues of the linearisation matrix of the Hamiltonian-Hopf bifurcation become zero. We follow a semi-numerical approach to determine the exact point where the two bifurcations take place at the same time. As we are looking for a highly-degenerate point we impose that the characteristic polynomial of the linearisation matrix has a zero eigenvalue with multiplicity $4$. This is because the four eigenvalues on the Hamiltonian-Hopf bifurcation are in $1$:$-1$ resonance and two of them are related to the fact that in the saddle-centre bifurcation they are always zero. Thus, we have that $\omega_3 = \omega_4 = 0$ while $\omega_1$, $\omega_2$ remain positive. The approximate coordinates in the parametric plane are $( 0.44890, 0.10002 )$. 
The matrix of the eigenvectors corresponding to the linearisation around this point has rank $5$.
\\

The analysis regarding linear stability of the different regions around the tangency point is as follows:

\begin{itemize}
{\small
\item [(a)] At the intersection of the two bifurcations the linearisation is centre $\times$ centre $\times$ degenerate $\times$ degenerate. 

\item [(b)] On the saddle-centr bifurcatione, to the left of the tangency point, the linearisation is centre $\times$ centre $\times$ saddle $\times$ degenerate. 

\item [(c)] On the saddle-centre bifurcation, to the right of the tangency point, the linearisation is centre $\times$ centre $\times$ centre $\times$ degenerate. 

\item [(d)] Below the tangency point we get centre $\times$ centre $\times$ focus. 

\item [(e)] On the Hamiltonian-Hopf bifurcation line, we have centre $\times$ centre $\times$ centre $\times$ centre, with the last two centres in $1$:$-1$ non-semisimple resonance. 

\item [(f)] Between the two bifurcation lines we find centre $\times$ centre $\times$ centre $\times$ 
centre. 

\item [(g)] Above the saddle-centre bifurcation line it is: centre $\times$ centre $\times$ centre $\times$ saddle.
}
\end{itemize}

In Figs. 4(c) and especially 4(d) of paper \cite{fasso2001stability} we can observe the 
two bifurcation lines very close to the boundary $b_1^\ast = 2\, b_2^\ast + b_3^\ast$. It appears 
that the (red) curve $\Delta$ corresponding to the saddle-centre bifurcation should be continued 
to meet again the boundary of ${\mathcal B}_{\rm II}$. Anyway, the fact that the little region 
between $\Delta$ and $\Sigma$ is spectrally (and linearly) stable is compatible with our conclusions
(item (f) above).
\\
 \end{proof}

\begin{remark}
In like manner the Hamiltonian pitchfork bifurcation studied in Section 
\ref{SectionS2}, when the KAM elliptic $3$-tori persist they are surrounded by invariant
$4$-tori. See also Remark \ref{remposiS2b}.
\\
\label{remposiIIa}    
\end{remark}

\begin{remark}
Up to our knowledge this is the first time that a degenerate situation produced by the coalescence of two quasi-periodic Hamiltonian bifurcations, one of saddle-centre type and the other one of Hamiltonian-Hopf type, is reported. A deeper analysis regarding the co-existence of invariant tori of various dimensions as well as their invariant manifolds could be carried out, leading to an interesting dynamics of these ellipsoids around the tangency point, but it is out of the scope of this paper.
\\
\label{remposiIIb}    
\end{remark}

%%%%%%%%%%%%%%%%%%%%%%%%%%%%%%%%%
\section{Quasi-periodic Hamiltonian-Hopf bifurcation of type-III ellipsoids}
\label{TypeIII}
%%%%%%%%%%%%%%%%%%%%%%%%%%%%%%%%%

This section is devoted to the study of the stability and bifurcations of type-III ellipsoids. 
Recall that their domain of existence is 
\[ {\mathcal B}_{\rm III} = \Big\{ b \in {\mathcal B} \, : \, b_1^\ast \geq b_2^\ast + 2\, b_3^\ast, G( b_1^\ast, b_2^\ast, b_3^\ast ) > 0 \Big\},
\]
where $G$ is given in (\ref{eq:Gs}), see Fig. \ref{fig:III}. Analogously to type-I ellipsoids there are only quasi-periodic Hamiltonian-Hopf bifurcations. We have made the complete analysis of one of these bifurcation lines analytically, excepting the checks on the non-degeneracy of higher-order terms
for the occurrence of the bifurcation, as well as the checks on the persistence of KAM tori, where we have given values along the curve in the parametric plane. As in the cases of the bifurcations studied previously, there are resonances of orders $3$ and $4$ that lead to some small balls in the parametric plane that are excluded from the analysis because the normal-form Hamiltonians are not well defined. Besides, to achieve the persistence of the invariant tori, following \cite{broer2007quasi} we have imposed on some frequencies Diophantine conditions, as we shall mention later on.
\\

\begin{figure}[ht]
    \centering
    \includegraphics[width=0.6\textwidth]{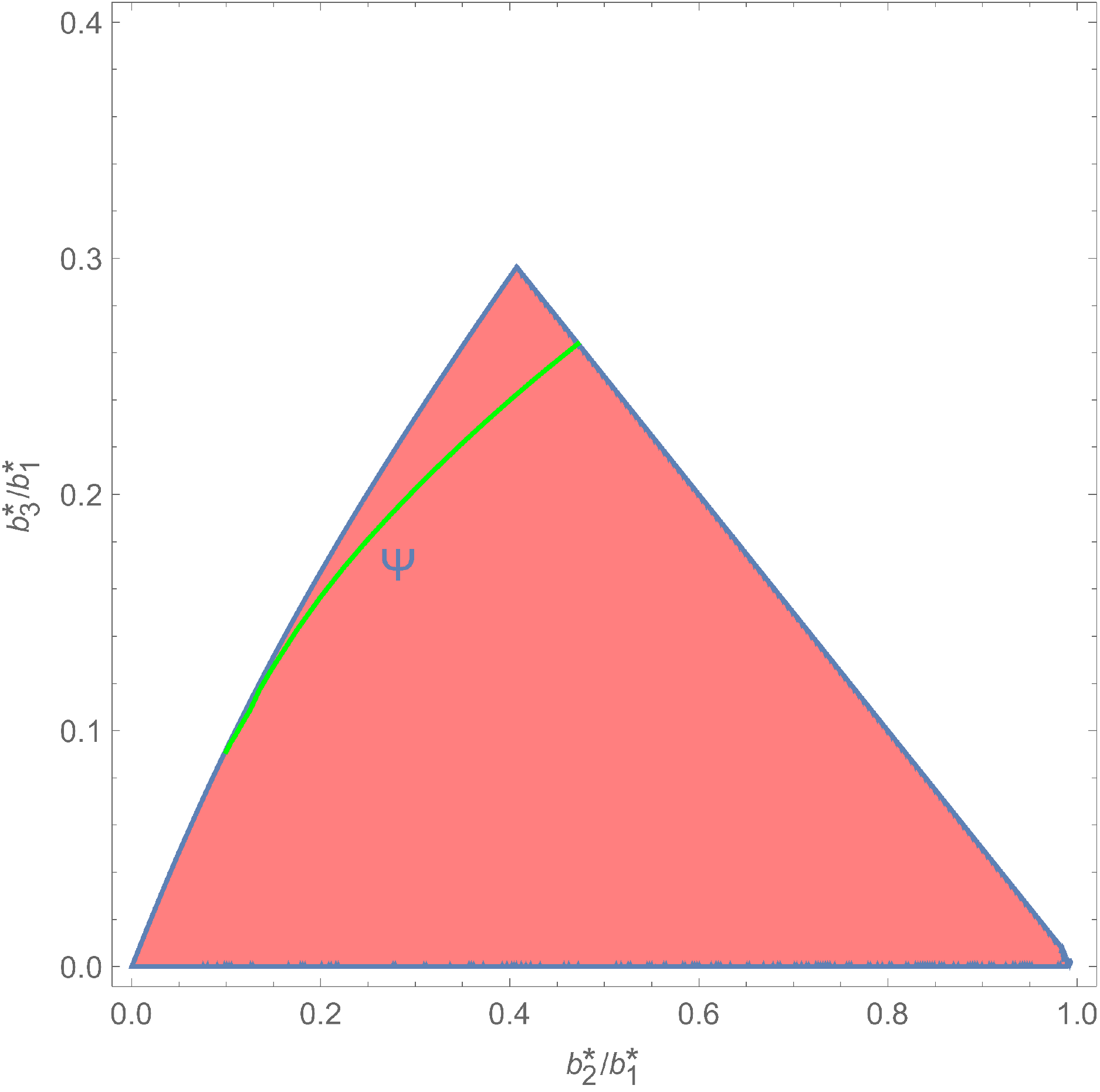}
    \caption{${\mathcal B}_{\rm{III}}$: Region of existence of the type-III Riemann ellipsoids. The green curve corresponds to the Hamiltonian-Hopf bifurcation we have chosen. This bifurcation line ends at the point $( 0, 0 )$ being tangent to the boundary $G = 0$ at this point}
    \label{fig:III}
\end{figure}

First of all we take the ellipsoid with $( S^2_L \times S^2_R )$-coordinates $( \mu^{+}_{N}( b_1^\ast, b_2^\ast, b_3^\ast ),$ $\mu^{-}_{N}( b_2^\ast, b_1^\ast, b_3^\ast ))$. The adjoint ellipsoid should be analysed equivalently. Parameters $L$ and $R$ satisfy 
\[ 
\begin{array}{rcl}
L &=& \sqrt{
N^{R}_{+}( b_1^\ast, b_2^\ast, b_3^\ast )^2 +
N^{R}_{-}( b_2^\ast, b_1^\ast, b_3^\ast )^2}, \\[1.5ex]
R &=& \sqrt{
N^{R}_{-}( b_1^\ast, b_2^\ast, b_3^\ast )^2 +
N^{R}_{+}( b_2^\ast, b_1^\ast, b_3^\ast )^2}. 
\end{array}
\]
We introduce the symplectic change of coordinates
\[
   \begin{array}{lcl}
   b_i &=& b_i^\ast + {\bar b_i}, \quad
   c_i = {\bar c_i}, \\[1ex]
   m_1 &=& p_1 \sqrt{L - \frac{q_1^2 + p_1^2}{4}}, \quad
   m_2 = -q_1 \sqrt{L - \frac{q_1^2 + p_1^2}{4}}, \quad
   m_3 = L -\frac{q_1^2 + p_1^2}{2}, \\[1.3ex]
   m_4 &=& p_2 \sqrt{R - \frac{q_2^2 + p_2^2}{4}}, \quad
   m_5 = -q_2\sqrt{R - \frac{q_2^2 + p_2^2}{4}}, \quad
   m_6 = R - \frac{q_2^2 + p_2^2}{2},
   \end{array}
   \]
   with 
\[
\begin{array}{l}
   q_i = q_i^\ast + \bar q_i, \quad
   p_i = p_i^\ast + \bar p_i, \\[1.5ex]
   L = \frac{1}{2}( p_1^{\ast 2} + q_1^{\ast 2} ), \quad
   R = \frac{1}{2}( p_2^{\ast 2} + q_2^{\ast 2} ), \\[1.5ex]
   q_1^\ast = \frac{-\sqrt{2} \, m_2^\ast}{( m_1^{\ast 2} + m_2^{\ast 2} )^{1/4}}, \quad
   q_2^\ast = \frac{-\sqrt{2} \, m_5^\ast}{( m_4^{\ast 2} + m_5^{\ast 2} )^{1/4}}, \\[1.5ex]
   p_1^\ast = \frac{\sqrt{2} \, m_1^\ast}{( m_1^{\ast 2} + m_2^{\ast 2} )^{1/4}}, \quad
   p_2^\ast = \frac{\sqrt{2} \, m_4^\ast}{( m_4^{\ast 2} + m_5^{\ast 2} )^{1/4}}.
   \end{array}
\\[1.2ex]
\]

We have checked by evaluating the normal form numerically in a sample of points that both on the boundary $b_1^\ast = b_2^\ast + 2\, b_3^\ast$ and in the interior of region ${\mathcal B}_{\rm{III}}$ there is either instability of the type centre $\times$ centre $\times$ focus or linear stability of the type centre $\times$ centre $\times$ centre $\times$ centre. In the latter case the linear normal form is indefinite with one negative frequency. We prove that the change of stability occurs through a Hamiltonian-Hopf bifurcation. A linearly-stable system loses its stable character and two of the four pure imaginary eigenvalues change to become a quadruplet of complex eigenvalues.
\\

We state the result regarding the bifurcation curve we have selected for our study. 
\begin{thm}
    Type-III ellipsoids undergo a quasi-periodic Hamiltonian-Hopf bifurcation on the curve $\Psi$.
\end{thm}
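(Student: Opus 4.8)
The plan is to follow the template established in the proofs of Theorems \ref{S2Theorem} and \ref{saddle_centre}, adapting each step to the $1$:$-1$ non-semisimple resonance that is the hallmark of a Hamiltonian-Hopf bifurcation. First I would insert the symplectic change displayed above to translate the chosen equilibrium to the origin, and Taylor-expand Hamiltonian (\ref{Hamiltonian}) to degree two. As in Theorem \ref{saddle_centre}, I would act on $\eta_l$ and $\eta_r$ with two rotation matrices ${\mathcal R}( \gamma_1 )$, ${\mathcal R}( \gamma_2 )$ and fix the angles $\gamma_1$, $\gamma_2$ so that the cross terms $\bar b_i \bar p_j$ in the quadratic Hamiltonian are eliminated, reducing the number of non-zero entries of the linearisation matrix. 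Markeev's procedure (Appendix \ref{Linearization}) then yields the linear normal form and the frequencies $\omega_i$ as explicit functions of the $\ell_{i,j}$, which are themselves explicit in $b^\ast$.

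The curve $\Psi$ is located not by a single frequency vanishing but by two imaginary pairs coalescing into a $1$:$-1$ resonance with non-zero nilpotent part. I would therefore impose this non-semisimple resonance condition on the spectrum to obtain $\Psi$ in closed form in the $b_2^\ast/b_1^\ast$--$b_3^\ast/b_1^\ast$ plane; equivalently, as in Theorem \ref{saddle_centre}, the coalescence forces a double root of the characteristic polynomial, so the locus can be extracted from $\vert {\mathcal L} \vert = ( \omega_1 \omega_2 \omega_3 \omega_4 )^2$ together with the vanishing of the appropriate discriminant. To keep all transformations valid across $\Psi$ I would use the versal normal form of Arnold \cite{Arnold} for the coalescing $4 \times 4$ block (the Hamiltonian-Hopf setting of van der Meer), so that the matrix ${\mathcal T}$ depends smoothly on the detuning and stays real and non-singular in a strip about $\Psi$. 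The outcome is a quadratic Hamiltonian with two spectator elliptic directions --- one entering with a negative sign, matching the indefinite but linearly stable normal form seen numerically --- and the two bifurcating directions packaged into the standard Hamiltonian-Hopf block.

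Next I would compute the non-linear normal form to degree four by two steps of the Lie transformation method \cite{Deprit}, after passing to complex/real coordinates $Z$ through a change analogous to (\ref{complex}). Because the discrete symmetries are weaker than for the $S$-ellipsoids, cubic monomials survive and must be retained in the first-order normal form, which complicates the homological equations; the quartic normal form then furnishes the coefficients of the detuning and, crucially, the fourth-order invariant whose sign decides whether the bifurcation is sharp or blunt. I would check the non-degeneracy hypotheses of the quasi-periodic Hamiltonian-Hopf theorem of \cite{hanssmann2006local}: that the detuning coefficient vanishes exactly on $\Psi$ (i.e. $\omega_4 = 0$ there), and that the controlling quartic coefficient --- the analogue of $Q_4$ in Theorem \ref{S2Theorem} and of ${\mathsf C}_4$ in Theorem \ref{saddle_centre} --- stays away from zero along $\Psi$. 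As before, this final test must be performed numerically, but with very high precision, by substituting the closed-form $b^\ast$ into the coefficient and sweeping a fine grid of points of $\Psi$.

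The main obstacle will be twofold. On the algebraic side, because two degrees of freedom collide simultaneously the versal unfolding and the degree-four normalisation are heavier than in the pitchfork and saddle-centre cases, and the retained cubic terms enlarge the linear systems one must solve for the generating functions. On the analytic side, proving the persistence of the invariant tori interplaying in the bifurcation is the subtler point: following \cite{broer2007quasi} I would impose Diophantine conditions on the internal and normal frequencies, excise the small balls surrounding the order-$3$ and order-$4$ resonances where the generating functions develop vanishing denominators, and verify that the relevant frequency-detuning map is a submersion along $\Psi$ --- concretely, that $D\tilde{c}( 0 ) \neq ( 0, 0, 0 )$ and that a suitable Jacobian of the internal frequencies has full rank off the resonance set. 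Assembling these verifications establishes the quasi-periodic Hamiltonian-Hopf bifurcation on $\Psi$.
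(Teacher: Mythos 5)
Your overall skeleton (translate the equilibrium, kill the $\bar b_i \bar p_j$ cross terms with two rotations, Markeev plus a versal correction for the linear normal form, two Lie-transform steps to degree four, numerical high-precision checks of the controlling quartic coefficient, resonance excision, Diophantine conditions for KAM persistence) is exactly the paper's strategy. However, in the steps that are specific to the Hamiltonian-Hopf mechanism you have conflated it with the zero-eigenvalue (pitchfork/saddle-centre) mechanism, and this conflation would make your construction fail. First, you assert that the non-degeneracy check includes "that the detuning coefficient vanishes exactly on $\Psi$ (i.e.\ $\omega_4 = 0$ there)" and you propose to locate $\Psi$ from $\vert {\mathcal L} \vert = ( \omega_1 \omega_2 \omega_3 \omega_4 )^2$. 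On a Hamiltonian-Hopf curve no eigenvalue vanishes: the paper has $\omega_3 = \omega_4 = N > 0$ on $\Psi$, and the detuning is $M \propto ( \omega_3 - \omega_4 )^2$, which is what vanishes there; consequently $\vert {\mathcal L} \vert \neq 0$ on $\Psi$ and that determinant criterion detects nothing. The paper instead imposes (i) $\omega_3 = \omega_4$ and (ii) vanishing of the determinant of the eigenvector matrix (to certify the non-semisimple part), and solves these numerically by a secant method ($23$ points along $\Psi$); your hope of a closed-form curve is also not realised there. Second, you claim cubic monomials survive and must be retained in the first-order normal form. That is the signature of the saddle-centre case (zero eigenvalue, resonant terms like $X_i Y_i X_4$); in the Hopf case the double frequency $N$ is non-zero, there are no resonant cubic terms away from the excised resonances, and the paper takes $H_3 = 0$ ($120$ equations, $120$ unknowns).

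Third, your persistence step carries over the wrong dimensional structure: you require $D\tilde{c}( 0 ) \neq ( 0, 0, 0 )$ and a full-rank Jacobian "of the internal frequencies", which presupposes three spectator actions. In the Hopf setting two degrees of freedom are consumed by the bifurcating block, so only $I = ( I_1, I_2 )$ remain; the paper's normal form lives on the invariants $I_1, I_2, S, U, V$, the map to check is $\xi : I \mapsto ( \tilde{c}( I ), \Omega( I ), \tilde{\omega}_1( I ), \tilde{\omega}_2( I ) )$ (note the extra frequency $\Omega$ attached to the invariant $S$, which you omit), and the non-degeneracy amounts to $( \tilde{c}( 0 ), D\tilde{c}( 0 )) \neq 0$, $( \Omega( 0 ), D\Omega( 0 )) \neq 0$ and $Q_1 Q_2 - Q_5^2/4 \neq 0$, so that a $3 \times 4$ matrix spans $\mathbb{R} \times \mathbb{R} \times \mathbb{R}^2$ (Theorem 4.27 of the cited monograph, not the pitchfork theorem you are implicitly patterning on). Finally, the sign of the quartic coefficient $Q_4$ does not merely need to be bounded away from zero: the paper shows it changes sign at the resonance $-\omega_1 + 2\,\omega_4 = 0$, splitting $\Psi$ into a supercritical and a subcritical piece with qualitatively different torus structure (including a subordinate saddle-centre bifurcation in the subcritical part), a phenomenon your plan would miss.
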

\begin{proof}
For the analysis of the Hamiltonian-Hopf bifurcation we introduce two rotation matrices with convenient angles $\gamma_1$ and $\gamma_2$ to simplify the Hamiltonian. At the equilibrium we get
\[
\begin{array}{lcllcl}
 q_1^\ast &=& \displaystyle \frac{-\sqrt{2}( m_2^\ast \cos \gamma_1 + m_1^\ast \sin \gamma_1 )}{( m_1^{\ast 2} + m_2^{\ast 2} )^{1/4}}, \quad
 & p_1^\ast &=& \displaystyle \frac{\sqrt{2}( m_1^\ast \cos \gamma_1 - m_2^\ast \sin \gamma_1 )}{( m_1^{\ast 2} + m_2^{\ast 2} )^{1/4}}, \\[1.8ex]
 q_2^\ast &=& \displaystyle \frac{-\sqrt{2}( m_5^\ast \cos \gamma_2 + m_4^\ast \sin \gamma_2 )}{( m_4^{\ast 2} + m_5^{\ast 2} )^{1/4}}, \quad
 & p_2^\ast &=& \displaystyle \frac{\sqrt{2}( m_4^\ast \cos \gamma_2 - m_5^\ast \sin \gamma_2 )}{( m_4^{\ast 2} + m_5^{\ast 2} )^{1/4}}.
 \end{array}
\]
The angles $\gamma_1$ and $\gamma_2$ are selected so that the terms factorised by $\bar b_1 \bar p_1, \bar b_2 \bar p_2, \bar b_1 \bar p_2, \bar b_2 \bar p_1$ are zero. It yields
\[
\gamma_1 = -\arccos \left( \frac{m^\ast_1}{\sqrt{m_1^{\ast 2} + m_2^{\ast 2}}} \right), \quad
\gamma_2 = -\arccos \left( \frac{m^\ast_4}{\sqrt{m_4^{\ast 2} + m_5^{\ast 2}}} \right). \\ \]

The linearisation matrix $\mathcal L$ takes the same block form as in the proof of the saddle-centre bifurcation of the previous section. The explicit entries have been obtained in terms of $b^\ast$ and are quite big. As well, the related frequencies $\omega_i$ can be derived in terms of the $\ell_{i,j}$ after solving the characteristic equation. Additionally the eigenvectors have been computed successfully. All this material is provided in the {\sc Mathematica} file.
The frequencies $\omega_i$ are obtained as usual from the eigenvalues of $\mathcal L$. Close enough to the bifurcation curve we intend to study we have $\omega_1, \omega_2 > 0$, whereas $\omega_3, \omega_4 > 0$ in the stable part of the bifurcation, $\omega_3 = \omega_4 > 0$ on the bifurcation curve and $\omega_3, \omega_4$ are complex of the form $\omega_3 = -a - \imath b$, $\omega_4 = a - \imath b$, with $a, b$ positive. This latter choice of the $\omega_i$ is done to make the process compatible to the way {\sc Mathematica} handles the eigenvalues of $\mathcal L$ when they are complex.
\\

To detect a bifurcation curve related to a Hamiltonian-Hopf bifurcation we proceed as follows. By observing that two of the four degrees of freedom have to be in $1$:$-1$ non-semisimple resonance we impose two conditions: (i) the frequencies $\omega_3$, $\omega_4$ are the same; (ii) the determinant of the matrix formed by the eigenvectors is zero. By doing so one might encounter other possible bifurcations but at least among them the Hamiltonian-Hopf bifurcations regarding frequencies $\omega_3$, $\omega_4$. In practice this is a long process and we have used a shortcut. We select a point on the bifurcation curve $\Psi$ by fixing $b_2^\ast/b_1^\ast$ and try to get the ratio $b_3^\ast/b_1^\ast$ such that the determinant formed by the eigenvectors of $\mathcal L$ vanishes. We use the secant method instead of the Newton-Raphson one in a bid to avoid the calculation of the Jacobian, since the function we use, i.e. the determinant of the eigenvectors, is very large. The convergence of the approach based on the secant method is satisfactory, as we get the desired points on the bifurcation with a few iterations. With this method we have obtained the line $\Psi$. More precisely, we have solved the equations with high accuracy, determining $23$ points along the curve, so that the resulting determinants on the bifurcation line are all upper bounded by $10^{-17}$. We have also checked that the frequencies $\omega_3$, $\omega_4$ on the bifurcation line are basically in $1$:$-1$ resonance. 
\\

Our goal now is getting a quadratic Hamiltonian function in normal form given by
\[
H_2( z ) = \frac{\omega_1}{2} ( x_1^2 + y_1^2 ) + \frac{\omega_2}{2}( x_2^2 + y_2^2 ) + 
 \frac{1}{2} ( x_3^2 + x_4^2 ) + \frac{M}{2}( y_3^2 + y_4^2 ) + N ( x_3 y_4 - x_4 y_3 ),
\]
for $z = ( x_1, x_2, x_3, x_4, y_1, y_2, y_3, y_4 )$ a set of rectangular coordinates, as well as the linear transformation that brings the Hamiltonian $H_2( u )$ with $u = 
( \bar b_1, \bar b_2, \bar q_1, \bar q_2$, $\bar c_1, \bar c_2, \bar p_1, \bar p_2 )$ to $H_2( z )$.

Notice that degrees of freedom $x_1/y_1$ and $x_2/y_2$ are uncoupled from the other ones, which are in fact the responsible of the bifurcation. The parameters $M$ and $N$ are functions of the frequencies $\omega_3$ and $\omega_4$. More specifically, when $\omega_3$ and $\omega_4$ are complex then,
 \[ 
 M = \mbox{$\frac{1}{4}$}( \omega_3 + \omega_4 )^2, \quad 
 N = \mbox{$\frac{1}{2}$}( -\omega_3 + \omega_4 ).
 \]
 In the rest of cases
\[ 
 M = \mbox{$\frac{1}{4}$}( \omega_3 - \omega_4 )^2, \quad 
 N = \mbox{$\frac{1}{2}$}( \omega_3 + \omega_4 ).
 \\
\]
With these choices of $M$ and $N$ we know that near the bifurcation curve one has $N > 0$ and $M$ is real but close to zero.

At this point, we adapt to our needs the procedure presented in \cite{Schmidt}. In particular, an important point that we request is that the eigenvectors used to compute the normal form make sense even for the degenerate case, that is, on the bifurcation. The reason is that on the bifurcation curve, the Hamiltonian function $H_2$ has non-null nilpotent part. Then, there is not a basis of eigenvectors, but the rank of the matrix containing the eigenvectors is six. Thus, it is still possible to use two eigenvectors out of four to build the transformation matrix. This considerably simplifies the construction of the linear normal form. 
\\

Thus, applying the procedure described in Appendix \ref{Linearization} we get the vectors $s_1$, $s_2$, $r_1$, $r_2$ together with the positive scalars $k_3$, $k_4$. We also obtain the vectors 
for $s_3$, $s_4$, $r_3$, $r_4$, noticing that $s_i$, $r_i$ are not independent when $\omega_3 = \omega_4$. Now we write down the symplectic matrix $\mathcal T$ responsible of the transformation as
\begin{equation}
{\mathcal T} = 
( -k_1 s_1 \,,\, -k_2 s_2 \,,\, t_3 \,,\, t_4 \,,\, k_1 r_1 \,,\, k_2 r_2 \,,\, t_7 \,,\, t_8 )^T.
\\
\label{eigenvectorsHH}
\end{equation}
such that 
\[
\begin{array}{rclrcl}
t_3 &=& ( 0, 0, 0, 0, \tau_{3,5}, \tau_{3,6}, \tau_{3,7}, \tau_{3,8} ), \,\, &
t_4 &=& ( \tau_{4,1}, \tau_{4,2}, \tau_{4,3}, \tau_{4,4}, 0, 0, 0, 0 ), \\[1ex]
t_7 &=& A_1 s_3 + A_2 s_4, \,\, &
t_8 &=& A_1 r_3 - A_2 r_4.
\end{array}
\\ \]
That is, four of the eight columns of $\mathcal T$ are built in the same way as in Markeev's procedure 
of Appendix \ref{Linearization} for dealing with elliptic points. Here we still need to determine the
entries $\tau_{i,j}$ as well as the coefficients $A_1$, $A_2$. We stress that we can take advantage of the block form of matrix ${\mathcal L}$ for the sake of setting four zero entries in the vectors $t_3$, $t_4$. 
\\

We determine the unknown quantities by imposing two conditions: (i) ${\mathcal L} \, {\mathcal T} = {\mathcal T} \, {\mathcal U}$, where ${\mathcal U}$ is the Hamiltonian matrix associated to $H_2 ( z )$; (ii) $\mathcal T$ is symplectic, then ${\mathcal T}^T \, {\mathcal J}_8 \, {\mathcal T} = {\mathcal J}_8$. The first condition simply says that ${\mathcal L}$ is transformed into $\mathcal U$ by means of $\mathcal T$.
\\

As we wish to get the transformation valid in the linearly stable and unstable regimes and on the bifurcation line we need to proceed carefully. Besides we require ${\mathcal T}$ to be real. 
We distinguish between being on the bifurcation line or outside but close to it. When we pick a point
$b^\ast$ that lies in the stable part of the parametric plane, we compute $\tau_{i,j}$ from condition (i)
and $A_1$, $A_2$ from (ii). Alternatively we can also get $A_2$ from $A_1$ observing that 
$A_2/A_1 = \sqrt{\bar n_3/\bar n_4}$ (with $\bar n_i$ introduced in Appendix \ref{Linearization}).
We remark that $A_1, A_2 > 0$ in the linearly stable part and complex conjugate in the unstable one. 
On the bifurcation curve we determine $\tau_{3,5}$, $\tau_{3,6}$, $\tau_{3,7}$, $\tau_{3,8}$, $\tau_{4,2}$, $\tau_{4,3}$ and $\tau_{4,4}$ from (i). Although now $\bar n_3 = \bar n_4 = 0$ we still have that
 $A_1 = A_2 > 0$ on the line $\Psi$. As a second step using condition (ii) we obtain $\tau_{4,1}$, $A_1$ (and $A_2$). Finally, when $b^\ast$ is in the unstable part of the parametric plane, we use the same approach as in the stable part but replacing $\omega_3$ by $-\omega_3$. This completes the symbolic construction of $\mathcal T$. The transition between the different regimes
 (stable to unstable through degeneracy) is such that the transformation matrix is smooth with respect to the parameters $b^\ast$. This is the versal normal form of the transformation matrix, see \cite{Arnold}. The final entries of ${\mathcal T}$ are determined in terms of $\ell_{i,j}$, $\omega_i$ and some of them very large. We have placed the calculations with the resulting coefficients in the {\sc Mathematica} file.
\\

Next we want to compute the normal form of the Hamiltonian corresponding to type-III ellipsoids in a neighbourhood of the Hamiltonian-Hopf bifurcation curve, with the aim of establishing the occurrence of the bifurcation. We need to reach terms of degree four in rectangular coordinates for the Hamiltonian normal form, thereby we need two steps in the Lie transformation approach. Indeed, the ultimate goal of this calculation is to check the non-degeneracy conditions needed to prove that a quasi-periodic Hamiltonian-Hopf bifurcation takes place in the Riemann ellipsoid problem. It is expected that analogous approaches apply for other bifurcation curves of the same type in other parts of the parametric plane, not only for type-III but also for types-I and II ellipsoids. We follow the ideas of \cite{Schmidt, hanssmann2006local, broer2007quasi} and references therein.
\\

We develop the Hamiltonian function up to terms of degree four in the $u$ coordinates. 
Then we apply the linear transformation built through the matrix $\mathcal T$, that 
is we write the Hamiltonian function in terms of the $z$ coordinates. We follow a similar approach to that of \cite{Schmidt}, but generalising it to four degrees of freedom. Then, we define the linear transformation to complex coordinates given by
\[
\begin{array}{lcllcl}
x_1 &=& \frac{1}{\sqrt{2}}( X_1 + \imath Y_1 ), &
x_2 &=& \frac{1}{\sqrt{2}}( X_2 + \imath Y_2 ), \\[2ex]
x_3 &=& \frac{1}{\sqrt{2}}( X_3 + X_4 ), &
x_4 &=& \frac{\imath}{\sqrt{2}}( -X_3 + X_4 ), \\[2ex]
y_1 &=& \frac{1}{\sqrt{2}}( \imath X_1 + Y_1 ), &
y_2 &=& \frac{1}{\sqrt{2}}( \imath X_2 + Y_2 ), \\[2ex]
y_3 &=& \frac{1}{\sqrt{2}}( Y_3 + Y_4 ), &
y_4 &=& \frac{\imath}{\sqrt{2}}( Y_3 - Y_4 ), \\[2ex]
\end{array}
\]

Calling $Z = ( X_1, X_2, X_3, X_4, Y_1, Y_2, Y_3, Y_4 )$ the quadratic part of the normal-form Hamiltonian becomes
\[
H_2( Z ) = \imath \omega_1 X_1 Y_1 + \imath \omega_2 X_2 Y_2 + X_3 X_4 + M Y_3 Y_4 + \imath N ( X_3 Y_3 - X_4 Y_4 ).
\]
On the bifurcation line $M = 0$ and $N = \omega_3 = \omega_4$. 
\\

We apply a Lie transformation \cite{Deprit} to compute the higher-order terms in the normal form up to order $2$, that is, two steps of the procedure. The Hamiltonian in normal form of degree three can be taken $0$. The associated generating function, ${\mathcal W}_1$, is determined after solving a linear system of equations whose unknowns are the coefficients of the monomials of ${\mathcal W}_1$. We have $120$ equations for $120$ unknowns. Then we keep on with the second step. The terms of order $2$ are quartic polynomials in the complex variables $Z$. Those that remain in the transformed Hamiltonian are functions of the first integrals 
\[ 
I_1 = \imath X_1 Y_1, \quad I_2 = \imath X_2 Y_2, \quad 
S = \mbox{$\frac{\imath}{2}$}( X_3 Y_3 - X_4 Y_4 ), \quad V = Y_3 Y_4,
\]
or written in the $z$ coordinates
\[
I_1 = \mbox{$\frac{1}{2}$}( x_1^2 + y_1^2 ), \quad I_2 = \mbox{$\frac{1}{2}$}( x_2^2 + y_2^2 ), 
\quad S = \mbox{$\frac{1}{2}$}( x_3 y_4 - x_4 y_3 ), \quad V = \mbox{$\frac{1}{2}$}( y_3^2 + y_4^2 ). \\[1.2ex] 
\]

We put the normal form in terms of the invariants $I_1$, $I_2$, $S$, $V$ and $U = X_3 X_4 = ( x_3^2 + x_4^2 )/2$. Naming $T = ( x_3 y_3 + x_4 y_4 )/2$ the invariants satisfy the constraint  
\[ S^2 + T^2 = U V. \]
Hamiltonian $H_2$ reads as
\[ H_2( I, S, U, V ) = \omega_1 I_1 + \omega_2 I_2 + U + M V + 2 N S, \qquad I = ( I_1, I_2 ). 
\\[1.2ex]
\]

Now we impose that the normal form $H_4$ be expressed as
\[ 
\begin{array}{rcl}
H_4( I, S, V ) &=& -Q_1 I_1^2 - Q_2 I_2^2 + Q_3 S^2 + Q_4 V^2- Q_5 I_1 I_2 
\\[1ex] && -\, \imath Q_6 I_1 S - \imath Q_7 I_1 V - \imath Q_8 I_2 S - \imath Q_9 I_2 V 
+ Q_{10} S V,
\end{array} 
\]
where the coefficients $Q_i$ have to be determined. In effect, it is the case that in absence of resonances between the $\omega_i$, the normal form at any order is always a polynomial in the invariants $I$, $S$ and $U$, see \cite{Ken, Meer, hanssmann2006local}.
\\

The associated homological equation is solved for the coefficients of the monomials of ${\mathcal W}_2$ and the $Q_i$. As in previous normal-form computations we arrive at a system with $330$ linear equations and $340$ unknowns. After some manipulations and simplifications, especially on the coefficients of the generating function, we get a solution for $H_4$ and ${\mathcal W}_2$ that makes sense on the bifurcation curve and in a neighbourhood of it, excepting the resonance values that we shall analyse later. The concrete expressions of these functions are provided in the {\sc Mathematica} file.
\\

Seeking possible null or small denominators in the monomials of the generating functions ${\mathcal W}_i$ is similar to the approach we have described for the pitchfork and saddle-centre bifurcations. However, since the denominators depend also on $M$ and $N$, we put them first in terms of $\omega_3$, $\omega_4$, set $\omega_3 = \omega_4$ and select those combinations between the $\omega_i$ that could become zero or very small at some points $b^\ast$ on the bifurcation curve. Plotting these curves using high precision calculations, we have found six situations such that the resonance curves cross the horizontal axis or are very close to it. It means that for these combinations the generating functions are not well defined on some small neighbourhoods of the points $b^\ast$ on the curve where zero or very small denominators arise. Thence we discard from the Hamiltonian-Hopf bifurcation analysis we perform, see Fig. \ref{fig:resoIII}, otherwise they could lead to erroneous conclusions. In principle, the non-linear normal form for these resonant cases would carry out the appearance of angle-terms and a different analysis would be accomplished. We get the following six resonances together with the approximate values $b_2^\ast/b_1^\ast$ where the small denominators arise in the generating functions: 
\[ 
\begin{array}{rcl}
-2\, \omega_1 + \omega_2 = 0 & \,\mbox{for}\, & b_2^\ast/b_1^\ast \approx 0.223, 0.227, \\[0.8ex]
- \omega_1 + 2\, \omega_4 = 0 & \,\mbox{for}\, & b_2^\ast/b_1^\ast \approx 0.164, 0.468, \\[0.8ex]
\omega_1 - \omega_2 + \omega_4 = 0 & \,\mbox{for}\, & b_2^\ast/b_1^\ast \approx 0.356, \\[0.8ex]
-\omega_1 + 3\, \omega_4 = 0 & \,\mbox{for}\, & b_2^\ast/b_1^\ast \approx 0.108, \\[0.8ex]
-2\, \omega_1 + \omega_2 + \omega_4 = 0 & \,\mbox{for}\, & b_2^\ast/b_1^\ast \approx 0.112, 0.405, \\[0.8ex]
-\omega_2 + 3\,\omega_4 = 0 & \,\mbox{for}\, & b_2^\ast/b_1^\ast \approx 0.250. \end{array}
\]
The related values of $b_3^\ast/b_1^\ast$ are obtained after imposing that $b^\ast$ belongs to the bifurcation curve.
\\
  
\begin{figure}[ht]
    \centering
    \includegraphics[width=0.6\textwidth]{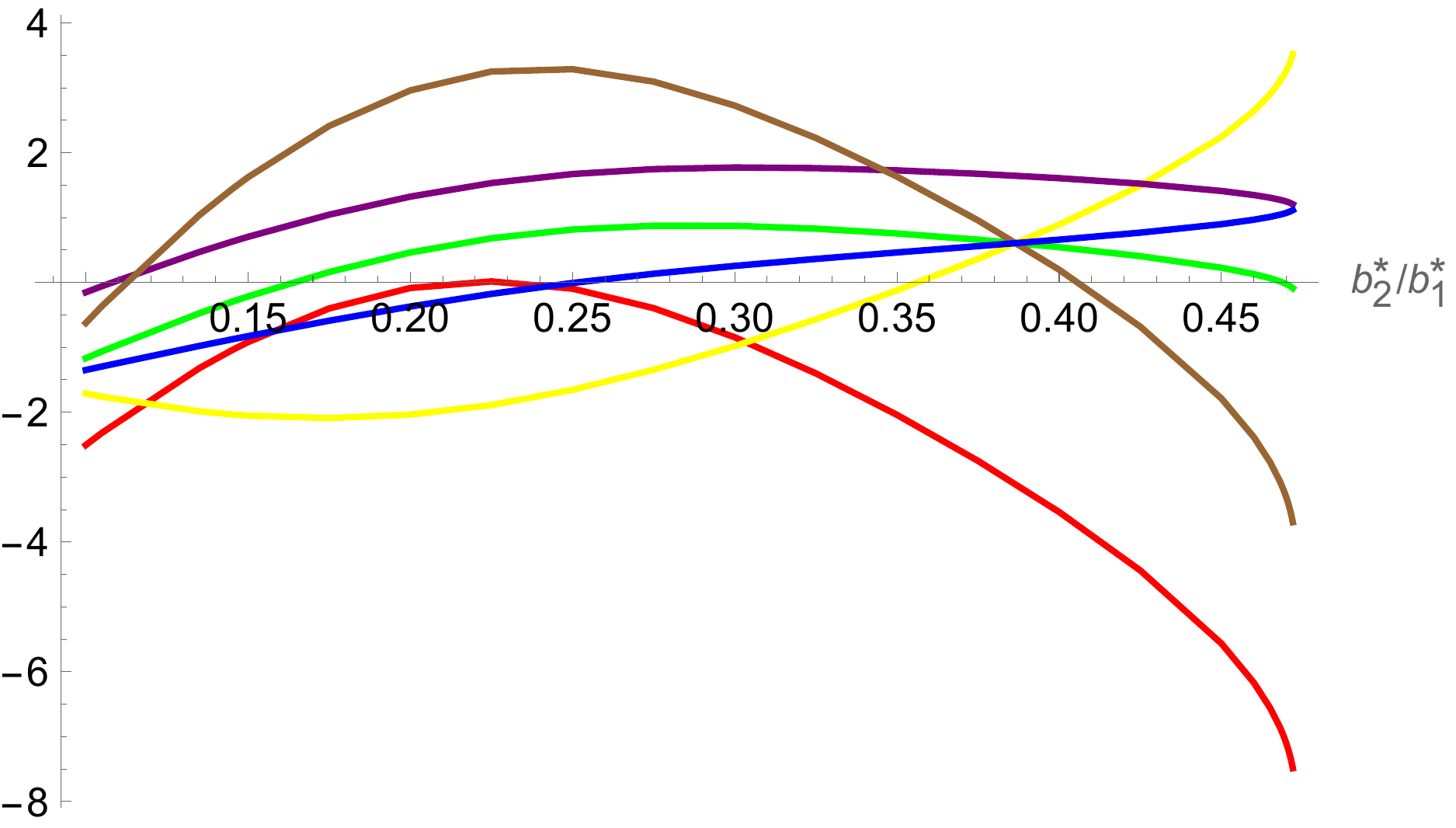}
    \caption{Resonances occurring in the computation of the normal form Hamiltonian on the Hamiltonian-Hopf bifurcation of type-III ellipsoids.
    Third-order resonances are
    $-2\, \omega_1 + \omega_2$ (red),  
    $- \omega_1 + 2\, \omega_4$ (green) and 
    $\omega_1 - \omega_2 + \omega_4$ (yellow).  
    Fourth-order resonances are
    $-\omega_1 + 3\, \omega_4$ (purple),  
    $-2\, \omega_1 + \omega_2 + \omega_4$ (brown) and
    $-\omega_2 + 3\,\omega_4$ (blue) 
    }
    \label{fig:resoIII}
\end{figure}

For the occurrence of the bifurcation our plan is to apply Theorem 4.27 of \cite{hanssmann2006local} to the normal-form Hamiltonian $H^4 = H_2 + \frac{1}{2} H_4$. Reorganising the terms conveniently we end up with
\[ 
\begin{array}{rcl}
H^4( I, S, U, V ) &=& \displaystyle 
    \omega_1 I_1 + \omega_2 I_2 - \mbox{$\frac{1}{2}$} ( Q_1 I_1^2 + Q_2 I_2^2 + Q_5 I_1 I_2 ) 
     \\[1.5ex]
&& \displaystyle + \, U + \left( 2 N - \mbox{$\frac{\imath}{2}$} ( Q_6 I_1 + Q_8 I_2 ) \right) S 
+ \left( M - \mbox{$\frac{\imath}{2}$} ( Q_7 I_1 + Q_9 I_2 ) \right) V \\[1.5ex]
&& \displaystyle  + \, \mbox{$\frac{1}{2}$} ( Q_3 S^2 + Q_4 V^2 + Q_{10} S V ).
\end{array} 
\\
\]

Terms factorised by $S^2$ and $S V$ can be brought to higher order by means of the uneven symplectic scaling proposed by Meyer and Schmidt \cite{Ken} slightly modified to take into account the degrees of freedom associated to $I$. The remaining terms of $H^4$ are relevant in the application of the Hamiltonian-Hopf bifurcation Theorem as it appears in \cite{hanssmann2006local}.
\\

First of all we observe that the factor of $U$ is positive. Then we need to examine the behaviour of the coefficients of $V$ and $V^2$ as they vary on the bifurcation line when $\omega_3 = \omega_4$. We introduce the functions $\tilde{c}( I )$ and $\Omega( I )$ respectively as the coefficients of $V$ and $S$ in the second row of $H^4$, $\tilde{b}( I )$ as the factor of $V^2$ and $F( I )$ as the part of $H^4$ independent of $S$, $U$, $V$, that is,

\[ 
\begin{array}{rcl}
\tilde{c}( I ) &=& M - \mbox{$\frac{\imath}{2}$} ( Q_7 I_1 + Q_9 I_2 ), \\[1ex]
\tilde{b}( I ) &=& \mbox{$\frac{1}{2}$} Q_4, \\[1ex]
F( I ) &=& \omega_1 I_1 + \omega_2 I_2 - \mbox{$\frac{1}{2}$} ( Q_1 I_1^2 + Q_2 I_2^2 + Q_5 I_1 I_2 ), \\[1ex]
\Omega( I ) &=& 2 N - \mbox{$\frac{\imath}{2}$} ( Q_6 I_1 + Q_8 I_2 ).
\end{array} \\
\] 
On the one hand we know that for $I = 0$ and $\omega_3 = \omega_4$, $\tilde{c}( 0 ) = 0$ and $\tilde{b}( 0 ) = Q_4/2$, and this is the term that needs to be controlled. In the real coordinates $z$ this term corresponds to the coefficient of $( y_3^2 + y_4^2 )^2/4$. Although we have the specific formula of $Q_4$ in terms of $b^\ast$, it is gigantic, so we have to perform a numerical check to analyse how this term evolves along the bifurcation curve $\Psi$. We stress that the intermediate steps previous to the numerical check are done symbolically, without replacements of the parameters by specific values. In Fig. \ref{fig:Q412} we depict the variation of $Q_4$ versus the ratio $b_2^\ast/b_1^\ast$ when $b^\ast$ takes values on the bifurcation curve. 
\begin{figure}[ht]
    \centering
    \includegraphics[width=0.475\textwidth]{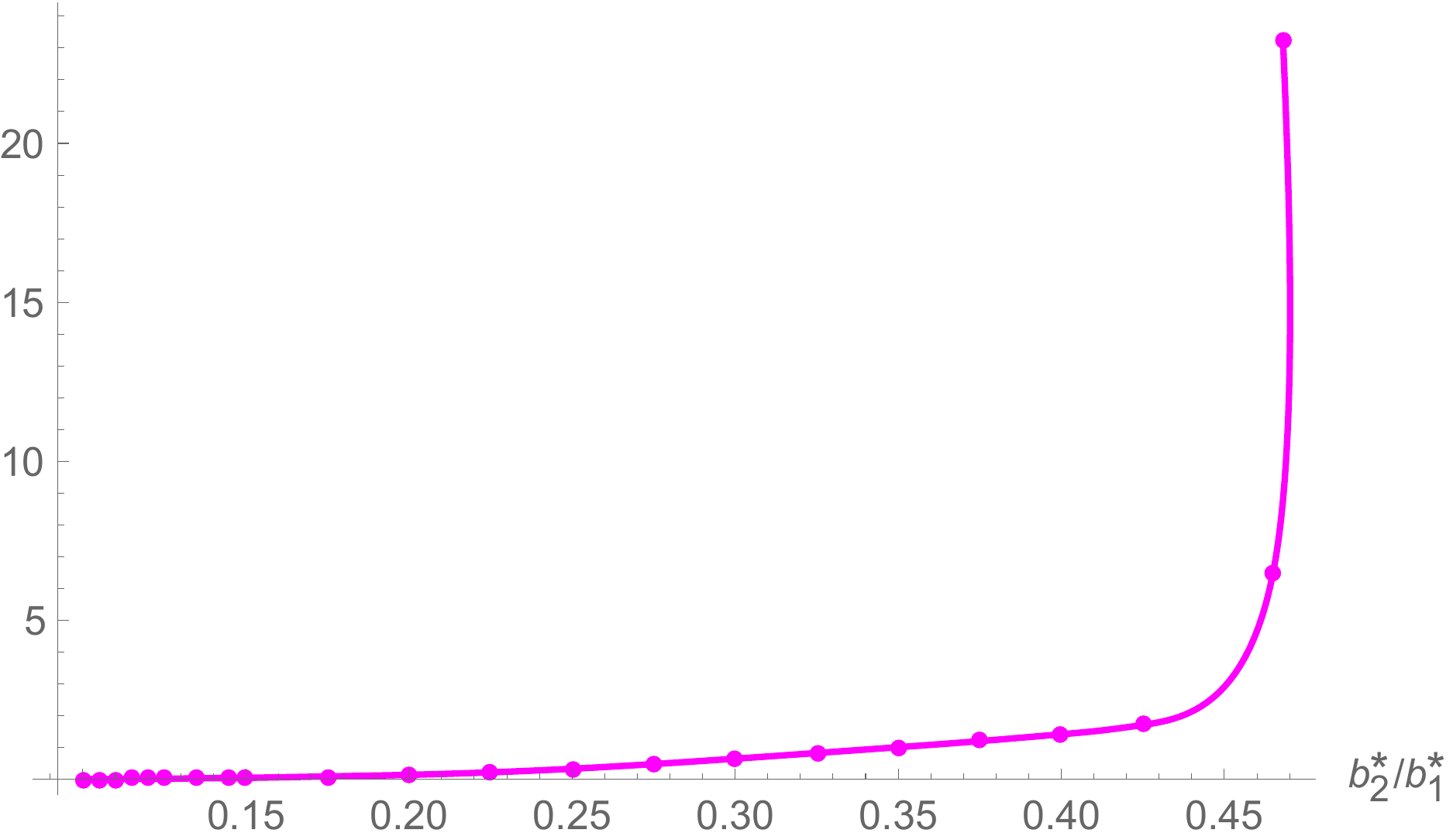} \,\,\,\,
    \includegraphics[width=0.475\textwidth]{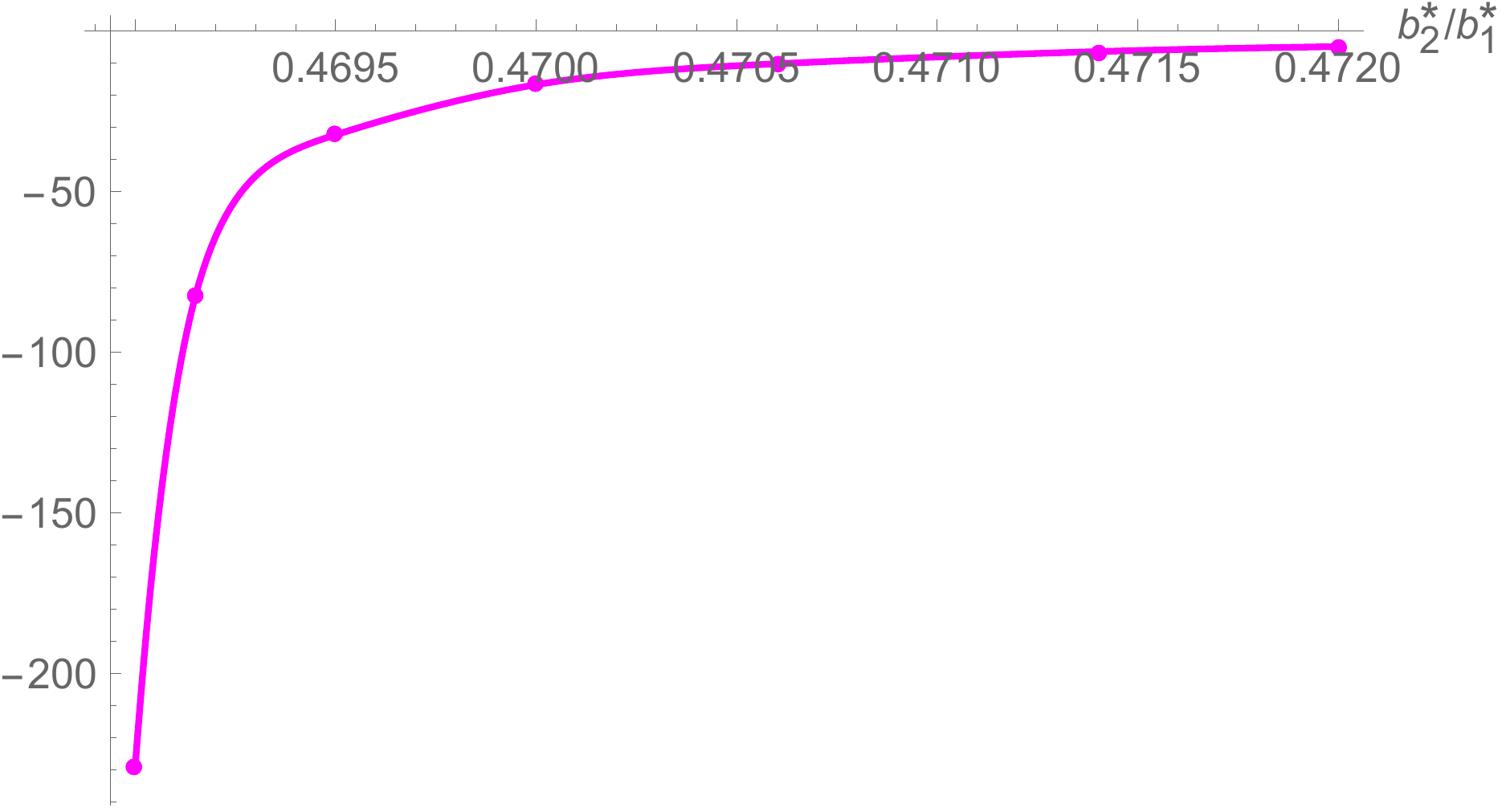}
    \caption{Variation of $Q_4$ on the curve $\Psi$ when $b_2^\ast/b_1^\ast \lesssim 0.468$ (left) and $b_2^\ast/b_1^\ast \gtrsim 0.468$ (right). It is clearly deduced that $Q_4$ does vanish for the allowed values $b^\ast$ on $\Psi$}
    \label{fig:Q412}
\end{figure}

\smallskip\par
The limit value $b_2^\ast/b_1^\ast \approx 0.468$ corresponds to the resonance $-\omega_1 + 2\, \omega_4 = 0$. When the sign of the coefficient of $Q_4$ is positive the bifurcation is supercritical (left picture in Fig. \ref{fig:Q412}), otherwise it is subcritical (right picture in Fig. \ref{fig:Q412}). The bifurcations occur at $M = 0$, equivalently at $\omega_3 = \omega_4$.
\\

At this point we face the analysis on the non-degeneracy conditions necessary to establish the persistence of KAM tori of dimensions $2$, $3$ and $4$, accordingly to the pattern of a Hamiltonian-Hopf bifurcation. We follow Theorem 4.27 in \cite{hanssmann2006local}. 
\\

We introduce the map
\[ \xi: I \rightarrow \left( \tilde{c}( I ), \Omega( I ), \tilde{\omega}_1( I ), 
\tilde{\omega}_2( I ) \right) \quad 
\mbox{with} \quad \tilde{\omega}_i( I ) = \displaystyle \frac{\partial F( I )}{\partial I_i}, \]
intending to prove that the matrix
\[ 
{\mathtt M} = \left(
\begin{array}{cccc}
\tilde{c}( I ) & \Omega( I ) & \tilde{\omega}_1( I ) & \tilde{\omega}_2( I ) \\[1ex]
\frac{\partial \tilde{c}( I )}{\partial I_1} & \frac{\partial \Omega( I )}{\partial I_1} & \frac{\partial \tilde{\omega}_1( I )}{\partial I_1} & \frac{\partial \tilde{\omega}_2( I )}{\partial I_1} \\[1ex]
\frac{\partial \tilde{c}( I )}{\partial I_2} & \frac{\partial \Omega( I )}{\partial I_2} & \frac{\partial \tilde{\omega}_1( I )}{\partial I_2} & \frac{\partial \tilde{\omega}_2( I )}{\partial I_2}
\end{array}
\right),
\]
spans $\mathbb{R} \times \mathbb{R} \times \mathbb{R}^2$ on the bifurcation line. 
\\

On the one hand, for $I = 0$ and $\omega_3 = \omega_4$ we get 
\[ 
\begin{array}{rcl}
\left \| \left( \tilde{c}( 0 ), \frac{\partial \tilde{c}}{\partial I_1}( 0 ), \frac{\partial \tilde{c}}{\partial I_2}( 0 )\right) \right \| &=& 
\frac{1}{2} \sqrt{Q_7^2 + Q_9^2}, \\[1.5ex]
\left \| \left( \Omega( 0 ), \frac{\partial \Omega}{\partial I_1}( 0 ), \frac{\partial \Omega}{\partial I_2}( 0 )\right) \right \| &=& 
\frac{1}{2} \sqrt{16 \omega_3^2 + Q_6^2 + Q_8^2},
\end{array}
\]
and aim to prove that they do not vanish on the bifurcation curve. Evaluating them on a discrete set of points chosen on $\Psi$ the minima of the vectors' norms are approximately  $0.2305...$ and $5.1393...$ respectively, thus we conclude that the norms are positive, that is, $( \tilde{c}, D \tilde{c}( 0 )) \neq 0$, $( \Omega( 0 ), D \Omega( 0 )) \neq 0$. (Notice also that $\omega_3 > 0$ along the bifurcation line, thus preventing the second vector to become null.)  
\\

On the other hand we compute the determinant 
\[ 
\left\vert
\begin{array}{cc}
\frac{\partial \tilde{\omega}_1}{\partial I_1}( 0 ) & \frac{\partial \tilde{\omega}_1}{\partial I_2}( 0 ) \\[1.3ex]
\frac{\partial \tilde{\omega}_2}{\partial I_1}( 0 ) & \frac{\partial \tilde{\omega}_2}{\partial I_2}( 0 )
\end{array}
\right\vert = Q_1 Q_2 - \frac{Q_5^2}{4}.
\]
In Fig. \ref{fig:detQQ} we depict this determinant along the bifurcation curve.
The determinant crosses the horizontal line only once at $b_2^\ast/b_1^\ast \approx 0.356$, corresponding to the vanishing of the resonance $\omega_1 - \omega_2 + \omega_4$. A small neighbourhood of this ratio as well as the other values where the non-linear normal form is not valid have to be deleted from the curve $\Psi$.
\begin{figure}[ht]
    \centering
    \includegraphics[width=0.475\textwidth]{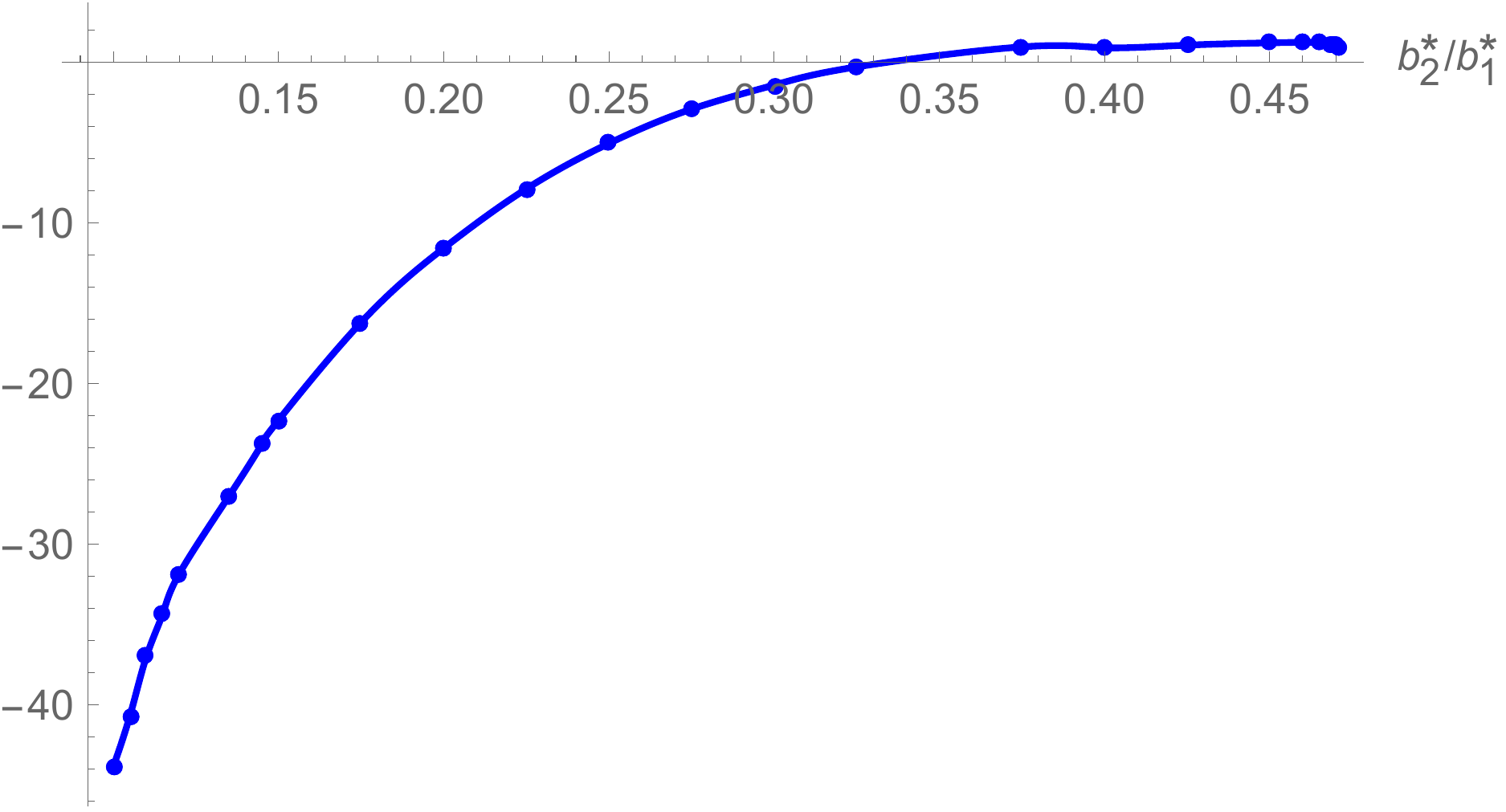}
    \caption{Variation of $Q_1 Q_2 - Q_5^2/4$ when $b^\ast$ is on the bifurcation
    line $\Psi$. The determinant becomes for $b_2^\ast/b_1^\ast \approx 0.356$, the value for which $\omega_1 - \omega_2 + \omega_4 = 0$}
    \label{fig:detQQ}
\end{figure}

The above analysis allows us to conclude that $\mathtt M$ spans $\mathbb{R} \times \mathbb{R} \times \mathbb{R}^2$ and a quasi-periodic Hamiltonian-Hopf bifurcation is displayed on the curve $\Psi$.
\\

The occurrence of the bifurcation guarantees the persistence of invariant tori of various dimensions associated to the bifurcation, but we describe with some detail what is going on. The persistence analysis is essentially based on KAM theory. Specifically there are invariant $2$, $3$ and $4$-tori. This $4$-tori are maximally dimensional tori, that is, Lagrangian tori, and its persistence is provided applying standard Kolmogorov’s condition, see also Theorem 4.15 in \cite{broer2007quasi}. Indeed, excluding the values of $b^\ast$ leading to zero or small denominators in the resonance cases studied above, is enough to conclude the persistence of the Lagrangian tori. However, for the $2$ and $3$-tori some Diophantine conditions on the frequencies
are needed to be imposed.  
\\

As said in Remark \ref{remposiS2c} the normal forms obtained throughout the analyses we do for the different Riemann ellipsoids, are useful to compute the first terms of the parametrisation of the related invariant tori, in case of persistence under perturbations. We begin with the $1$:$-1$ resonant $2$-torus, that we call $T_{\nu_0}$, existing for $I = 0$ and $\omega_3 = \omega_4$ and consider the family of invariant tori depending on parameters $\omega_3$, $\omega_4$ that we denote by $T_{\nu}$, meaning that when $\omega_3 = \omega_4$, then $\nu = \nu_0$. If $\Gamma( \nu )$ denotes the Floquet $( 4 \times 4 )$-matrix built from the subsystem in $x_3$, $x_4$, $y_3$  $y_4$ derived from $H_2$, after setting $I = 0$ $\Gamma( \nu )$ has a double pair of pure imaginary eigenvalues with a non-trivial nilpotent part when $\nu = \nu_0$. Standard KAM theory on the persistence of elliptic $2$-tori cannot be applied and one resorts to an adapted version of the KAM theorem \cite{BroerHooNaudot} to this special setting, and the results in \cite{broer2007quasi} apply. 
\\

With the aim of getting persistence of the invariant $2$ and $3$-tori, Diophantine conditions among the frequencies involved in $H_2$ are required. In particular, if $\omega_3^N$, and $\omega_4^N$ represent the imaginary parts of the eigenvalues of
$\Gamma( \nu )$ (the so-called normal frequencies), we write $\omega^N = ( \omega_3^N, \omega_4^N )$ observing that $\omega^N = ( \omega_3, \omega_4 )$ in the stable part of the bifurcation while $\omega^N = ( -\Im \omega_3, -\Im \omega_4 )$ in the unstable one. For instance, in the analysis of the persistence of $3$-tori we require the following condition to be satisfied:
for a constant $\tau > 1$ and for a positive parameter $\gamma$, we have
\[ 
\lvert \omega \cdot k  + 
\omega^N \cdot l \rvert \ge \gamma \lvert k \rvert ^{-\tau}, 
\]
for $\omega = ( \omega_1, \omega_2 )$, $k \in {\mathbb Z}^2 \setminus \{ 0 \}$ and $l \in {\mathbb Z}^2$
with $\lvert l \rvert \le 2$. (For an $n$-dimensional vector $v$ the norm $\lvert v \rvert$ refers to $v_1 + \ldots + v_n$.) Other related conditions are imposed to accomplished persistence of hyperbolic $2$-tori. The persistence of invariant $3$-tori is deeply analysed in \cite{hanssmann2006local} and in \cite{broer2007quasi}. As Theorem 4.27 of \cite{hanssmann2006local} applies in our setting, the persistence of these KAM tori is guaranteed under pertinent Diophantine conditions.
\\

In the supercritical piece of the bifurcation curve ($b_2^\ast/b_1^\ast \lesssim 0.468$, i.e., $Q_4 > 0$), when the point $b^\ast$ is in a narrow strip (neighbourhood) above the curve $\Psi$, a single invariant $2$-torus is elliptic and loses its stability when crossing the line (becoming parabolic) and turns hyperbolic $2$-tori below the bifurcation line. Above $\Psi$, emanating from this invariant torus, there is a two-dimensional Cantor family of normally elliptic invariant $3$-tori of large relative measure. (We remark that Diophantine conditions define Cantor sets.) This family of tori remains on the bifurcation line and when crossing it remaining below $\Psi$ it detaches from the hyperbolic torus moving away from it. 
\\

In the subcritical part of the bifurcation ($b_2^\ast/b_1^\ast \gtrsim 0.468$, thus $Q_4 < 0$)  the invariant elliptic $2$-torus follow the same pattern as in the supercritical case, but now for a point $b^\ast$ belonging to a strip above the curve $\Psi$, a family of elliptic $3$-tori and a family of hyperbolic $2$-tori meet in a subordinate quasi-periodic saddle-centre bifurcation. Persistence of these tori can be achieved. Additionally, there is a Cantor set of persistent parabolic tori on the bifurcation line involved in the subordinate saddle-centre bifurcation. When the point $b^\ast$ crosses $\Psi$ remaining close and below the curve the hyperbolic $2$-torus is not surrounded by invariant tori.
\\

Non-trivial monodromy is obtained in the supercritical case, in the family of invariant
$4$-tori when the bifurcating $2$-tori become hyperbolic.
\\

Apart from the various families of invariant tori of various dimensions, when the Riemann ellipsoid of type III is not of elliptic type and is non-degenerate, it possesses stable and unstable invariant manifolds attached to it (having the corresponding dimensions). The theory on the persistence of these manifolds and how they evolve is not really developed, excepting of course the Hamiltonian-Hopf bifurcation of equilibria. In the present work we do not need to handle this.
\end{proof}

%%%%%%%%%%%%%%%%%%%%%%%%%%%%%%%%%
\section{Global bifurcation of ellipsoids}
\label{global}
%%%%%%%%%%%%%%%%%%%%%%%%%%%%%%%%%

Apart from the local bifurcations accounted in the previous sections, there is a global bifurcation of equilibria due to an interplay between $S_2$-ellipsoids and type-III ellipsoids. This phenomenon is related to the fission theory described in the survey paper by Lebovitz \cite{lebovitz1998}. The underlying idea is that a rotating ellipsoid can undergo an evolution such that it loses stability to a non-axisymmetric disturbance,
and then splits into two ellipsoids. This theory was tackled by Liapunov and Poincar\'e, among others for some simplified ellipsoids. In his memoir \cite{chandrasekhar1969ellipsoidal} Chandrasekhar observed that type-III ellipsoids branch off from the ellipsoids of type $S$ along a curve of bifurcation, and this line
coincides with the line where the $S_2$-ellipsoids lose their stability. Our aim in the next paragraphs is to clarify these findings putting them in the perspective of our presentation. 
\\

The red curve ($G = 0$) mentioned in Section \ref{SectionS2} on $S_2$-ellipsoids is also a global bifurcation of equilibria, as we intend to explain now. Below the curve but close to it, the ellipsoids of type III are linearly stable and coexist with unstable $S_2$-ellipsoids with linearisation of type centre $\times$ centre $\times$ centre $\times$ saddle. On the bifurcation line, type-III ellipsoids disappear whereas ellipsoids of type $S_2$ become linearly stable above the curve. It looks like a Hamiltonian bifurcation of pitchfork type involving only equilibria in four degrees of  freedom, that is, Riemann ellipsoids. 
\\

In a bid to check whether this type of bifurcation actually occurs, one has to analyse the possible transitions between the two types of ellipsoids, proving that the equilibria of type $S_2$ can merge with equilibria of type III for some combinations of the parameters $b^\ast$. 
\\

The key is that the equilibrium points corresponding to the ellipsoids of type III can be obtained approximately from the normal form computed for the ellipsoids of type $S_2$, supporting the fact that $S_2$-ellipsoids bifurcate to ellipsoids of type III fitting the pattern of a Hamiltonian pitchfork bifurcation.
\\

\begin{thm}
There is a global bifurcation of Hamiltonian-pitchfork type involving $S_2$ and type-III ellipsoids.
\end{thm}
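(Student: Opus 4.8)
The plan is to reuse the degree-four normal form \eqref{H4S2} already built in the proof of Theorem \ref{S2Theorem} for the co-parallel $S_2$-ellipsoids, but now to read off from it the full family of nearby \emph{relative equilibria} instead of merely the invariant tori. Relative equilibria correspond to placing the first three elliptic degrees of freedom at the origin, i.e. setting the actions $I=0$, and seeking critical points of $H^4$ in the remaining pair $(X_4,Y_4)$. At $I=0$ the Hamiltonian $H^4$ restricts to the one-degree-of-freedom function $-\tfrac{1}{2}(\omega_4^2 X_4^2+Y_4^2)+\tfrac{1}{2}Q_4 X_4^4$, whose critical points satisfy $Y_4=0$ together with $X_4(-\omega_4^2+2Q_4 X_4^2)=0$. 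Thus there is always the branch $X_4=0$, which is the $S_2$-ellipsoid itself, together with a second branch $X_4^2=\omega_4^2/(2Q_4)$. Since $Q_4<0$ on and near $G=0$ (as established in the proof of Theorem \ref{S2Theorem}), this second branch produces a genuine pair of real solutions $X_4=\pm X_4^\ast$ exactly when $\omega_4^2<0$, i.e. below the red curve. Linearising, the central branch is a centre above $G=0$ and a saddle below it, while the two lateral branches are centres. The absence of any $X_4^3$ term in \eqref{H4S2} (a consequence of the reversibility that already killed the cubic normal form) is precisely what forces a supercritical Hamiltonian pitchfork of equilibria rather than a saddle-centre.

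I would then identify the two lateral equilibria $X_4=\pm X_4^\ast$ with a symmetric pair of type-III ellipsoids. The decisive structural fact is that the boundary $G=0$ of the existence region $\mathcal{B}_{\mathrm{III}}$ (defined by $G(b_1^\ast,b_2^\ast,b_3^\ast)>0$) is the very same curve as the $S_2$ bifurcation line $G=0$: type-III ellipsoids exist precisely on the side $G>0$, which is exactly the side (below the red curve) where the lateral branches of the pitchfork are present. Moreover, inspecting the tabulated type-III coordinates $N^R_{\pm}(b_1^\ast,b_2^\ast,b_3^\ast)\,e_1+N^R_{\mp}(b_2^\ast,b_1^\ast,b_3^\ast)\,e_2$ of Table \ref{table:nonlin} together with the definitions \eqref{eq:Gs}, one sees that $G^R_{\pm}(b_1^\ast,b_2^\ast,b_3^\ast)$ carries the factor $G(b_1^\ast,b_2^\ast,b_3^\ast)$, so the $e_1$-component $N^R_{\pm}(b_1^\ast,b_2^\ast,b_3^\ast)$ is proportional to $\sqrt{G}$ and vanishes as $G\to 0^+$, whereas the $e_2$-component $N^R_{\mp}(b_2^\ast,b_1^\ast,b_3^\ast)$ stays finite and limits onto the $S_2$ value $N^S_{\pm}$ on $e_2$. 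Hence as $G\to 0^+$ the type-III family degenerates onto the $S_2$ family, the $e_1$-direction being exactly the direction opened up by the pitchfork. To make this quantitative I would undo the normal-form and linear transformations of Theorem \ref{S2Theorem}, push $(X_4,Y_4)=(\pm X_4^\ast,0)$ back to the original variables, and match its leading-order $e_1$-amplitude with $N^R_{\pm}(b_1^\ast,b_2^\ast,b_3^\ast)$. Since $X_4^{\ast 2}=\omega_4^2/(2Q_4)$ with both $\omega_4^2<0$ and $Q_4<0$ below the curve, one has $X_4^\ast\propto\sqrt{|\omega_4^2|}\propto\sqrt{G}$, so both the normal-form branch and the tabulated type-III amplitude scale like the square root of the distance to $G=0$, the characteristic pitchfork scaling.

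The remaining structural point is that the bifurcation is genuinely global and of pitchfork type: the discrete $\mathbb{Z}_2$ symmetry $X_4\mapsto -X_4$ inherited from the symmetries of the reduced system (see Remark \ref{adjoint}) interchanges the two lateral branches, forcing them to be born simultaneously and symmetrically while pinning the central $S_2$ branch as the symmetric fixed point throughout. This is exactly the equilibrium-level counterpart of the local pitchfork of invariant $3$-tori of Theorem \ref{S2Theorem}: the two elliptic $3$-tori born there are the KAM-persistent remnants of the pair of type-III equilibria, and the parabolic/hyperbolic transition of the $S_2$ torus is the transition of the $S_2$ ellipsoid itself. I expect the main obstacle to be the quantitative matching of coordinates: the normal form is truncated at degree four, so the correspondence between its lateral equilibria and the exact type-III ellipsoids is only asymptotic as $G\to 0^+$, and the closed forms of $X_4^\ast$, of the inverse transformations, and of $N^R_{\pm}$ are enormous. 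As in the local analysis, I would push the symbolic computation as far as possible in {\sc Mathematica} and then verify both the leading-order agreement and the coincidence of the bifurcation curve with $\partial\mathcal{B}_{\mathrm{III}}$ numerically along a grid on $G=0$, excising the same resonance points removed in the proof of Theorem \ref{S2Theorem}.
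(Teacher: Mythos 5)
Your proposal is correct and, at its core, it is the same argument as the paper's: the paper likewise takes the truncated normal form \eqref{H4S2}, computes the critical points of the associated vector field (besides non-isolated families, only the origin and two equilibria ${\mathcal E}_{1}$, ${\mathcal E}_{2}$ with $X_4^0 = \pm \imath\, \omega_4/(\sqrt{2}\, Q_4)$ and all remaining coordinates zero survive), notes that these are real exactly below $G=0$ because there $\omega_4$ is pure imaginary and $Q_4 < 0$, and interprets the pair as the type-III ellipsoids, so that the two linearly stable $S_2$-equilibria (the ellipsoid and its adjoint, cf.\ Remark \ref{adjoint}) split into four type-III equilibria in a pitchfork. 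Where you genuinely differ is in how the lateral branches are identified with type-III ellipsoids, and your route is the more analytic one: the paper supports the identification by computing the full eight-dimensional linearisation at ${\mathcal E}_i$ --- eigenvalues $\pm \imath \omega_j \pm \omega_4^2 Q_{7+j}/(4 Q_4)$ for $j=1,2,3$ and $\pm\sqrt{2}\,\omega_4$, all pure imaginary because $Q_8$, $Q_9$, $Q_{10}$ are pure imaginary and $Q_4<0$ --- and then checking \emph{numerically}, and against Figs.\ 3(b), 4(e), 4(f) of Fass\`o--Lewis, that genuine type-III ellipsoids are linearly stable in a strip below $G=0$; it does not exploit, as you propose, the closed-form coordinates of Table \ref{table:nonlin} and the factor $G$ inside $G^R_{\pm}$ in \eqref{eq:Gs}, which give the $\sqrt{G}$-collapse of the type-III family onto the $S_2$ family and match the amplitude scaling of the pitchfork branch. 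Carried through, that matching would strengthen the paper's conclusion, which at present rests on Chandrasekhar's observation plus the numerical stability check.

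Two cautions if you complete the argument. First, your stability claim for the lateral branches (``centres'') is only a statement in the $(X_4,Y_4)$ plane; to conclude that the bifurcating ellipsoids are linearly stable --- which is part of the paper's final statement that two stable $S_2$-ellipsoids split into four \emph{stable} type-III ellipsoids --- you must control the remaining six dimensions, i.e.\ precisely the eigenvalue computation above, which hinges on $Q_8$, $Q_9$, $Q_{10}$ being pure imaginary. Second, your scaling $X_4^{\ast} \propto \sqrt{\lvert \omega_4^2 \rvert} \propto \sqrt{G}$ tacitly assumes that $\omega_4^2$ vanishes to first order across the curve $G=0$; this is consistent with the paper's equivalence $\omega_4 = 0 \Leftrightarrow G = 0$, but it is an additional non-degeneracy to be verified along the curve, away from the same excised resonance neighbourhoods as in Theorem \ref{S2Theorem}.
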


\begin{proof}
We begin with the vector field (\ref{reduced_system}). After equating it to zero and solving the system in the brown region in Fig. \ref{fig:S2}, i.e. below the red curve ($G = 0$), we find two solutions corresponding to $S_2$-equilibria and four more solutions close to $S_2$. The former match with equilibria related type-III ellipsoids, at least apparently, as the computations soon become unwieldy. As these critical are very close to the ellipsoids of type $S_2$ and in fact, they bifurcate from them (as we wish to prove), we intend to access their coordinates and stability character using the local approach described in Section \ref{SectionS2} for $S_2$.   
\\
    
The normal-form Hamiltonian (\ref{H4S2}) associated to co-parallel $S_2$-ellipsoids has been computed by expanding the original Hamiltonian around the equilibrium point corresponding to $S_2$-ellipsoids. Expressing the actions $I_i$ in terms of  coordinates, $X_i/Y_i$, $i = 1, 2, 3$, from the equations of motion of $H^4$ in the rectangular coordinates $Z$, we compute the critical points, obtaining $81$ solutions. All of them excepting three are non-isolated, so we exclude them. Besides, the null solution corresponds to the equilibrium point representing $S_2$. Thus, we have two critical points, say ${\mathcal E}_1$, ${\mathcal E}_2$, that are our candidates to be the (approximate) coordinates of type-III ellipsoids. Concretely we get $X_4^0 = \pm \imath \omega_4/( \sqrt{2} Q_4 )$ while the rest of $X_i^0$, $Y_i^0$ are $0$. (Recall that below the curve $G = 0$ the frequency $\omega_4 = \imath \bar{\omega}_4$ with $\bar{\omega}_4 < 0$.) Notice that $Q_4 \neq 0$ on the bifurcation curve, thus, by continuity it does not vanish if $b^\ast$ is close to the curve in the unstable side of the bifurcation.
\\

The eigenvalues of the matrices ${\mathcal L}_{{\mathcal E}_i}$ related to the linearisation around ${\mathcal E}_1$, ${\mathcal E}_2$ are:
\[ 
\pm \imath \omega_1 \pm \frac{\omega_4^2 Q_8}{4 Q_4}, \quad
\pm \imath \omega_2 \pm \frac{\omega_4^2 Q_9}{4 Q_4}, \quad
\pm \imath \omega_3 \pm \frac{\omega_4^2 Q_{10}}{4 Q_4}, \quad
\pm \sqrt{2}\, \omega_4. \\
\]

Coefficients $Q_8$, $Q_9$, $Q_{10}$ are pure imaginary whereas $Q_4$ is real and negative.
Moreover, $\omega_4$ is pure imaginary (and close to zero) but with negative imaginary part. As $\omega_i$, $i = 1, 2, 3$ are positive, it is straightforward to see that the other eigenvalues are also pure imaginary. Additionally the related eigenvectors form a basis of ${\mathbb R}^8$. Consequently ${\mathcal E}_1$, ${\mathcal E}_2$ are linearly equilibria of $H^4$ with linearisation centre $\times$ centre $\times$ centre $\times$ centre. 
\\

From Theorem \ref{S2Theorem} we know that ${\mathcal E}_1$, ${\mathcal E}_2$ are associated to the elliptic $4$-tori arising in the Hamiltonian pitchfork bifurcation displayed by the $S_2$-ellipsoids below the bifurcation curve, see remark \ref{remposiIIa}, in the sense that the approximate frequencies of these tori are readily obtained from the eigenvalues computed above. Pushing the normal form up to higher orders we would end up with better approximations of the coordinates of those invariant tori. However, thinking globally the points ${\mathcal E}_i$ are the equilibria corresponding to the type-III ellipsoids. Indeed, this affirmation is supported by the fact that type-III ellipsoids are linearly stable when $b^\ast$ is in ${\mathcal B}_{\rm III}$ but close to the boundary $G = 0$. This behaviour is corroborated with the numerical computations we have performed on a strip around $G = 0$ but below the curve. Specifically we have obtained the eigenvalues related with the critical points corresponding to the ellipsoids of type III using the vector field (\ref{reduced_system}). See also Figs. 3(b), 4(e), 4(f) of \cite{fasso2001stability}, where (spectral) stability is readily seen.
\\

We can conclude that the points corresponding to the ellipsoids of type III can be approximated from the normal form $H^4$ computed for the ellipsoids of type $S_2$, supporting the fact that the ellipsoid $S_2$ bifurcates to ellipsoids of type III. This is a bifurcation of equilibrium points experienced by a Hamiltonian system of four degrees of freedom and the bifurcation is of pitchfork type.
\\

Due to the discrete symmetries of Hamiltonian $H$ on the manifold ${\mathcal P}_{L, R}$ we can be more precise. Indeed we state that two linearly-stable $S_2$-ellipsoids split into four stable ellipsoids of type III by means of a Hamiltonian pitchfork bifurcation curve, $G = 0$, and the $S_2$-ellipsoids become unstable. The two linearly-stable $S_2$-ellipsoids correspond to the one with coordinates $( b^\ast, 0, \mu^{+}_\alpha( b^{\ast} ), \mu^{-}_\alpha( b^{\ast} ))$ and its adjoint $( b^\ast, 0, \mu^{-}_\alpha( b^{\ast} ), \mu^{+}_\alpha( b^{\ast} ))$, according to the notation in Table \ref{table:nonlin}.
 
\end{proof}

%%%%%%%%%%%%%%%%%%%%%%%%%%%%%%%%%
\section{Conclusions}
\label{conclusions}
%%%%%%%%%%%%%%%%%%%%%%%%%%%%%%%%%

This paper comes as a continuation of the studies on Riemann ellipsoids by addressing the analysis of parametric bifurcations, which was a pending issue in the literature. It has been observed a remarkable dynamical richness, with plenty of  bifurcations, most of them being of Hamiltonian-Hopf type in the Riemann ellipsoids of types I, II and III.
\\

We have analytically proved the existence of three kinds of bifurcations, that is, a quasi-periodic Hamiltonian pitchfork bifurcation of $S_2$-ellipsoids, two quasi-periodic saddle-centre bifurcations of type-II ellipsoids and a quasi-periodic Hamiltonian-Hopf bifurcation of type-III ellipsoids.
\\

The computations related to linear and non-linear normal forms and the associated transformations have been performed symbolically. Only a few checks have been done numerically, namely: (i) the non-degeneracy of some coefficients regarding the higher-order terms of a Hamiltonian function in normal form; (ii) the possibilities of introducing small denominators for some resonance combinations through the process based on Lie transformations; (iii) the determinants associated to the frequency maps that need to be non-zero in order to prove that KAM tori persist under small perturbations. In all these cases the numerical approximations have been carried out with {\sc Mathematica} performing the calculation with high precision, including the evaluation of some elliptic integrals. For the intermediate steps we have manipulated expressions in integer arithmetic.
\\

The existence of a global bifurcation involving $S_2$ and type-III ellipsoids was already noticed by Chandrasekhar \cite{chandrasekhar1969ellipsoidal} using a numerical approach. In this respect we have analytically clarified the underlying mechanism, relating it with the pitchfork bifurcation displayed by
the $S_2$-ellipsoids.
\\

The stability of both $S_2$ and $S_3$-ellipsoids has been established analytically. We have detected Liapunov stability for $S_2$-ellipsoids in part of their domain of existence and for $S_3$ in the whole ${\mathcal B}_{S_3}$. This strong stability was already established by Riemann but here we provide an alternative straightforward proof based on linear normal-forms transformations. The same procedure could be followed to prove Liapunov stability in the case of symmetric ellipsoids (with angular velocity and vorticity parallel to the same principal axis of the body) as a different approach to \cite{OlmosSousa}.
\\

From the linear analysis we have not detected Liapunov stability of types-I, II and III ellipsoids as the corresponding Hamiltonians of the linearised systems written in normal-form coordinates in case of linearly-stable points are always indefinite functions. 
\\

Irrotational ellipsoids of type $S_2$ and I have been properly analysed as three-degree-of-freedom Hamiltonian systems. We have obtained Liapunov stability for $S_2$-ellipsoids but linear stability or instability for type-I ellipsoids.
\\

In future we intend to delve deeper into the non-linear stability of Riemann ellipsoids from the viewpoint of formal stability. We believe that the application of the techniques described in \cite{Carcamo2021} would lead to the extension of the stability domains with exponentially long-time estimates compared to the ones established in \cite{fasso2001stability, fasso2014erratum}, as it happens in other problems with more than two degrees of freedom \cite{Carcamo2021sate}. More specifically with the use of Lie stabilty one can relax the quasi-convexity condition needed to achieve exponential estimates for time in Nekhorosev theory is applied in the setting of elliptic equilibria. 
\\

%%%%%%%%%%%%%%%%%%%%%%%%%%%%%%%%%%%%%%%%
\begin{appendices}
%%%%%%%%%%%%%%%%%%%%%%%%%%%%%%%%%%%%%%%%

%%%%%%%%%%%%%%%%%%%%%%%%%%%%%%%%%%%%%%%%
\section{Expressions of $C_1$ and $C_2$}
\label{C1C2}
%%%%%%%%%%%%%%%%%%%%%%%%%%%%%%%%%%%%%%%%

In this appendix we present analytical expressions of the integrals $C_1( x, y, z )$ and $C_2( x, y, z )$ appearing in \eqref{eq:c1}.
\\
 
Define
\[
\begin{array}{lcl}
P_1 &=& x ( x^2 - y^2 )^{3/2} ( x^2 - z^2 )^2 ( y^2 - z^2 )^2, \\[1ex]
P_2 &=& z ( x^2 - z^2 )^{3/2} ( x^2 - y^2 )^2 ( y^2 - z^2 )^2, \\[1ex]
E_1 &=& \displaystyle E \left( \arcsin\left(\frac{\sqrt{x^2 - y^2}}{x} \right) \,\bigg\vert \, 
        \frac{x^2 - z^2}{x^2 - y^2} \right), \\[3ex]
F_1 &=& \displaystyle F \left( \arcsin\left(\frac{\sqrt{x^2 - y^2}}{x} \right) \,\bigg\vert \,
        \frac{x^2 - z^2}{x^2 - y^2} \right), 
        
  \end{array} \]
\[
\begin{array}{lcl}
E_2 &=& \displaystyle E \left( \arcsin\left(\frac{\sqrt{z^2 - x^2}}{z} \right) \,\bigg\vert \,
		\frac{y^2 - z^2}{x^2 - z^2} \right), \\[3ex]
F_2 &=& \displaystyle F \left( \arcsin\left(\frac{\sqrt{z^2 - x^2}}{z} \right) \,\bigg\vert \,
        \frac{y^2 - z^2}{x^2 - z^2} \right),
 \end{array}\] 
where $E(\phi \mid k)$ stands for the incomplete elliptic integral of the second kind, for \( \phi \in (-\frac{\pi}{2}, \frac{\pi}{2}) \) and $k < 1$, i.e.
	\bas
	{E( \phi\mid k ) = \int _0^{\phi} \left(1 - k \sin ^2( \theta ) \right)^{1/2} d\theta}.\\
	\eas

\begin{lem} 
\label{C12lem}
The improper integrals of \eqref{eq:c1} are given by
\bas
	C_1 &=&
 \frac{4 \g \pi}{P_1} \Big(
 y z \sqrt{x^2 - y^2} ( x^2 - z^2 ) (z^2 - 2\,x^2 + y^2)
    \\ && \hspace*{0.9cm} + \,
  x \big( ( x^4 + y^2 z^2 ) ( y^2 + z^2 )  
  + x^2 ( y^4 - 6\, y^2 z^2 + z^4 )  \big) E_1
     \\ && \hspace*{0.9cm} - \,
     x ( y^2 - z^2 ) \big( y^2 z^2 + x^2 ( y^2 - 2\,z^2 ) \big) F_1 \Big),
\eas
and
\bas
	C_2 &=& \frac{-4\,\g \pi}{P_2}\Big(x y
		\sqrt{ x^2 - z^2 } ( z^2 - y^2 ) \big( y^2 z^2 + x^2 ( z^2 - 2\,y^2 )  \big) 
  \\&&
    \hspace*{1.1cm} + \, 2\,\imath z
		\big( y^4 z^4 - x^2 y^2 z^2 ( y^2 + z^2 ) + x^4 ( y^4 - y^2 z^2 + z^4 ) 
		\big) E_2  
  \\ &&
  \hspace*{1.1cm} - \, \imath z ( x^2 - y^2 ) \big( -y^2 z^2 
   ( y^2 + z^2 ) + x ( y^4 + z^4 ) \big) F_2  
    \Big).
	\eas
\end{lem}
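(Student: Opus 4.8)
The plan is to reduce each improper integral to Legendre canonical form by an explicit rational–trigonometric substitution and then to eliminate the residual third-kind contributions, which will turn out to degenerate. For $C_1$ I would set
\[
\sin^2\phi = \frac{x^2-y^2}{s+x^2},
\]
which maps the ray $s\in(0,\infty)$ monotonically onto the amplitude interval $\phi\in(0,\phi_1)$ with $\phi_1=\arcsin(\sqrt{x^2-y^2}/x)$, precisely the amplitude occurring in $E_1$ and $F_1$. A short computation gives $s+y^2=(x^2-y^2)\cos^2\phi/\sin^2\phi$ and $s+z^2=\frac{x^2-y^2}{\sin^2\phi}(1-m\sin^2\phi)$ with parameter $m=\frac{x^2-z^2}{x^2-y^2}$, so the cubic under the radical becomes $(x^2-y^2)^3\cos^2\phi\,\sin^{-6}\phi\,(1-m\sin^2\phi)$. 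Feeding this in, together with $ds=-2(x^2-y^2)\cos\phi\,\sin^{-3}\phi\,d\phi$ and $s=(x^2-y^2)\csc^2\phi-x^2$, turns $C_1$ into a constant multiple of
\[
\int_0^{\phi_1}\frac{(x^2-y^2)\sin^4\phi-x^2\sin^6\phi}{(1-\sin^2\phi)\,(1-m\sin^2\phi)^{3/2}}\,d\phi.
\]
One checks along the way that $\sin^2\phi_1=(x^2-y^2)/x^2<1/m$, so $\Delta=\sqrt{1-m\sin^2\phi}$ stays real and positive on the whole range and the integral is a genuine real number.

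First I would split the integrand by partial fractions in $\sin^2\phi$ and apply the standard recursion formulae that lower the exponent of $\Delta$ from $3$ to $1$; this produces integrals of the first and second kind, $F(\phi_1\mid m)$ and $E(\phi_1\mid m)$, together with an algebraic boundary term proportional to $\sin\phi_1\cos\phi_1/\Delta(\phi_1)$. Since $\cos\phi_1=y/x$ and $\Delta(\phi_1)=z/x$, back-substitution renders this term rational in $x,y,z$ and, after multiplication by the overall prefactor and clearing the common denominator $P_1$, reproduces the elementary (non-elliptic) contribution on the first line of the claimed expression for $C_1$. The one genuinely delicate feature is the factor $(1-\sin^2\phi)^{-1}=\sec^2\phi$: it generates an elliptic integral of the third kind whose characteristic equals $1$. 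This is exactly the degenerate value at which the third kind collapses to a combination of $E$, $F$ and an algebraic term, so that no $\Pi$ survives. Collecting the coefficients of $E_1$ and $F_1$ then yields the stated formula for $C_1$.

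For $C_2$ the same strategy applies, but the extra weight $s^2$ raises the degree of the numerator and the reduction that diagonalises the integral has canonical parameter $\frac{y^2-z^2}{x^2-z^2}$ rather than $m$. Because $x>z$, the associated amplitude $\arcsin(\sqrt{z^2-x^2}/z)$ is formally imaginary, and passing to it requires a modulus transformation that introduces an imaginary amplitude; this is the origin of the explicit factors of $\imath$ multiplying $E_2$ and $F_2$ in the statement. The assertion that $C_2$ is nonetheless real then follows by grouping the transformed integrals so that the purely imaginary $E_2$, $F_2$ combine with the $\imath$-prefactors into a real value, which I would confirm term by term.

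The hard part will be bookkeeping rather than anything conceptual: tracking the elementary boundary terms and confirming that every third-kind contribution either carries characteristic $1$ (hence degenerates) or cancels, all while keeping the rational coefficients in the compact factored form $P_1$, $P_2$. As a rigorous and fully mechanical alternative, once the reductions suggest the answer I would prove the lemma by differentiation: treating the right-hand sides as functions of a variable upper limit and using $\partial_\phi F=\Delta^{-1}$, $\partial_\phi E=\Delta$ together with the chain rule in the parameters, one verifies that their derivatives with respect to the upper limit reproduce the integrands and that the boundary values at $s=0$ and $s\to\infty$ match. This reduces the entire statement to a single algebraic identity, readily certified in integer arithmetic with {\sc Mathematica}.
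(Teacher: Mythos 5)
Your strategy is sound and, carried to completion, would prove the lemma, but it takes a genuinely different route from the paper's. The paper's proof is computer-assisted and very short: {\sc Mathematica} evaluates the integral for $C_1$ directly (followed by simplification via elliptic-integral identities), and for $C_2$, where the integrator fails on the improper integral, {\sc Mathematica} supplies an antiderivative to which the Fundamental Theorem of Calculus is applied; the reality of $C_2$ is simply recorded as the observed cancellation of the imaginary terms. You instead perform the classical Legendre reduction by hand, and your details check out: $s+y^2$ and $s+z^2$ transform exactly as you state, $\sin^2\phi_1=(x^2-y^2)/x^2<1/m$ keeps $\Delta$ real on the whole range, and $\cos\phi_1=y/x$, $\Delta(\phi_1)=z/x$ turn the boundary terms into algebraic functions of $x,y,z$. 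Your key structural point is correct and is something the paper never explains: the only third-kind contribution has characteristic $1$, which degenerates into $E$, $F$ plus an algebraic term (indeed $\int_0^\phi \sec^2\theta\,\Delta^{-1}d\theta = F(\phi\mid m)+(\Delta\tan\phi-E(\phi\mid m))/(1-m)$, and the $\Delta^{-3}$ weight reduces by partial fractions in $\sin^2\theta$), so no $\Pi$ can survive; similarly, your remark that the imaginary amplitude makes $E_2$, $F_2$ themselves purely imaginary is the structural reason behind the paper's cancellation statement. What each approach buys: yours is self-contained, independent of CAS integration heuristics (it would work precisely where {\sc Mathematica}'s integrator fails, as it does for $C_2$), and explains a priori the form of the answer; the paper's route delivers the explicit, very large coefficient polynomials with minimal human effort, which is exactly the bookkeeping you defer. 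Your fallback certification---differentiate a candidate antiderivative and match the boundary values at $s=0$ and $s\to\infty$---is logically the same device as the paper's FTC treatment of $C_2$, so the two proofs converge there, and it is the right way to certify the coefficient lists rigorously. Carrying out that bookkeeping would in fact flag an apparent misprint in the statement: since $C_2$ is homogeneous of degree $-3$ and $P_2$ of degree $12$, every term in the bracket must have degree $9$, so the term $x(y^4+z^4)$ in the coefficient of $F_2$ should read $x^2(y^4+z^4)$.
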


\begin{proof}
We have used {\sc Mathematica} in order to evaluate the integrals. The expression of $C_1$ 
is achieved after applying some simplifications involving properties of elliptic integrals. 
In the case of $C_2$ {\sc Mathematica} is unable to compute the improper integral but it finds 
an antiderivative of the integrand function. Applying the Fundamental Theorem of Calculus 
and simplifying the resulting expressions using the properties of elliptic integrals, we arrive 
at the solution given above. The calculations are lengthy and have been placed in the 
{\sc Mathematica} file we attach. It is worth mentioning that the imaginary terms of $C_2$ 
cancel out for all values of $x$, $y$, $z$ and then, the integral makes sense always leading  to a
real integral.
\end{proof}
Having closed-form expressions of $C_1$, $C_2$, we have explicit formulae for the coordinates
of the five types of Riemann ellipsoids in $P_{L, R}$, for the boundaries in the parametric plane
accounting for the existence of all ellipsoids and for the expressions of the irrotational
and pitchfork bifurcation curves in the parametric plane for $S_2$-ellipsoids.
\\

The elliptic integrals are manipulated with {\sc Mathematica} using the functions {\tt EllipticE[]}
and {\tt EllipticF[]}, evaluating them only when necessary, for instance to produce the plots in
the manuscript or to approximate some formulae numerically. This is achieved using the high 
precision of {\sc Mathematica}.

%%%%%%%%%%%%%%%%%%%%%%%%%%%%%%%%%%%%%%%%
\section{Coefficients of the linearisation matrix ${\mathcal L}$}
\label{CoefficientsL}
%%%%%%%%%%%%%%%%%%%%%%%%%%%%%%%%%%%%%%%%

The linearisation matrix for $S$-type ellipsoids of Sections \ref{SectionS2}, \ref{SectionS3} can be written
as
\bas
   {\mathcal L} = \left( \begin{array}{cccccccc} 
   0 & 0 & 0 & 0 & \ell_{1,5} & \ell_{1,6} & 0 & 0
	\\ \noalign{\medskip}
   0 & 0 & 0 & 0 & \ell_{1,6} & \ell_{2,6} & 0 & 0
    \\ \noalign{\medskip}
   0 & 0 & 0 & 0 & 0 & 0 & \ell_{3,7} & \ell_{3,8}
    \\ \noalign{\medskip}
   0 & 0 & 0 & 0 & 0 & 0 & \ell_{3,8} & \ell_{4,8}
    \\ \noalign{\medskip}
   \ell_{5,1} & \ell_{5,2} & 0 & 0 & 0 & 0 & 0 & 0
	\\ \noalign{\medskip}
   \ell_{5,2} & \ell_{6,2} & 0 & 0 & 0 & 0 & 0 & 0
	\\ \noalign{\medskip}
   0 & 0 & \ell_{7,3} & \ell_{7,4} & 0 & 0 & 0 & 0
	\\ \noalign{\medskip}
   0 & 0 & \ell_{7,4} & \ell_{8,4} & 0 & 0 & 0 & 0 \end{array}
	\right),
	\eas 
\noindent with the exception of the irrotational regime in the $S_2$-ellipsoids, in which case
 it is
 \bas
   {\mathcal L}_I = \left( 
   \begin {array}{cccccc} 
   0 & 0 & 0 & \ell_{1,4} & \ell_{1,5} & 0
	\\ \noalign{\medskip}
   0 & 0 & 0 & \ell_{1,5} & \ell_{2,5} & 0
    \\ \noalign{\medskip}
   0 & 0 & 0 & 0 & 0 & \ell_{3,6}
    \\ \noalign{\medskip}
   \ell_{4,1} & \ell_{4,2} & 0 & 0 & 0 &0
    \\ \noalign{\medskip}
   \ell_{4,2} & \ell_{5,2} & 0 & 0 & 0 & 0
	\\ \noalign{\medskip}
   0 & 0 & \ell_{6,3} & 0 & 0 & 0
   \end{array}
	\right).
	\eas 
\begin{lem} 
\label{Coefellem}
The following holds with matrix ${\mathcal L}$:
\begin{itemize}
\item [(i)]
In the co-parallel regime of $S_2$-ellipsoids the coefficients of ${\mathcal L}$ in terms of $b^\ast$ are given by
 \bas
\ell_{1,5} &=& 1 - \frac{b_2^{\ast 2}}{b_1^{\ast 4} b_2^{\ast 4}
+ b_1^{\ast 2} + b_2^{\ast 2}},
\\
\ell_{1,6} &=& -\frac{b_1^\ast b_2^\ast}{b_1^{\ast 4} b_2^{\ast 4} + b_1^{\ast 2} + b_2^{\ast 2}},
\\
\ell_{2,6} &=& 1 - \frac{b_1^{\ast 2}}{b_1^{\ast 4} b_2^{\ast 4} + b_1^{\ast 2} + b_2^{\ast 2}},
\\
\ell_{3,7} &=& \frac{\sqrt{\pi\g} \alpha_1^{+}[b_1^\ast, b_2^\ast]}{
b_2^\ast b_1^{\ast 2}
\sqrt{b_1^{\ast 2} b_2^{\ast 4}-1}
\left( b_1^{\ast 4} b_2^{\ast 2} - 1 \right)^{3/4}
\left( b_1^{\ast 2} - b_2^{\ast 2} \right)^{5/2}},
\\[1ex]
\ell_{3,8} &=&
\frac{2 \sqrt{2\pi\g} \sqrt{\alpha_{13}[b_1^\ast, b_2^\ast] + \alpha_{14}[b_1^\ast, b_2^\ast] E_1 + \alpha_{15}[b_1^\ast, b_2^\ast] F_1}}{b_1^\ast ( b_1^{\ast 2} - b_2^{\ast 2} )^{5/2}
\sqrt{( b_1^{\ast 4} b_2^{\ast 2} - 1 )( b_1^{\ast 2} b_2^{\ast 4}-1 )}},
\\
\ell_{4,8} &=& \frac{\sqrt{\pi\g} \alpha_1^{-}[b_1^\ast, b_2^\ast]}{
b_2^\ast b_1^{\ast 2}
\sqrt{b_1^{\ast 2} b_2^{\ast 4}-1}
\left( b_1^{\ast 4} b_2^{\ast 2} - 1 \right)^{3/4}
\left( b_1^{\ast 2} - b_2^{\ast 2} \right)^{5/2}},
\\
\ell_{5,1} &=& \frac{-4 \g \pi b_2^\ast \left( 
\alpha_2[b_1^\ast, b_2^\ast] + \alpha_3[b_1^\ast, b_2^\ast] E_1 +
\alpha_4[b_1^\ast, b_2^\ast] F_1 \right)}{b_1^{\ast 2} ( b_1^{\ast 4} b_2^{\ast 2} - 1 )^{3/2} ( b_1^{\ast 2} b_2^{\ast 4} - 1 )^2 ( b_1^{\ast 2} - b_2^{\ast 2} )^2},
\\[1ex]
\ell_{5,2} &=& \frac{4 \g \pi \left( \alpha_5[b_1^\ast, b_2^\ast] + \alpha_6[b_1^\ast, b_2^\ast] E_1 + \alpha_7[b_1^\ast, b_2^\ast] F_1 \right)}{b_1^\ast ( b_1^{\ast 4} b_2^{\ast 2} - 1 )^{3/2} ( b_1^{\ast 2} b_2^{\ast 4} - 1 )^2 ( b_1^{\ast 2} - b_2^{\ast 2} )^2},
\\[1ex]
\ell_{6,2} &=&
\frac{-4 \g \pi \left(\alpha_8[b_1^\ast, b_2^\ast] + \alpha_9[b_1^\ast, b_2^\ast] E_1 + \alpha_{10}[b_1^\ast, b_2^\ast] F_1 \right)}{b_2^\ast 
( b_1^{\ast 4} b_2^{\ast 2} - 1 )^{3/2} ( b_1^{\ast 2} b_2^{\ast 4} - 1 )^{2} ( b_1^{\ast 2} - b_2^{\ast 2} )^2},
\\
\ell_{7,3} &=& \gamma^{+}[b_1^\ast, b_2^\ast],
\\ 
\ell_{7,4} &=&
\frac{-2 \sqrt{ 2\pi \g} b_1^\ast b_2^{\ast 3}
\sqrt{\alpha_{13}[b_1^\ast, b_2^\ast] + \alpha_{14}[b_1^\ast, b_2^\ast] E_1 + \alpha_{15}[b_1^\ast,b_2^\ast] F_1}}{( b_1^{\ast 2} b_2^{\ast 4} - 1 )^{5/2} ( b_1^{\ast 4} b_2^{\ast 2} - 1 )^{1/4} \sqrt{b_1^{\ast 2} - b_2^{\ast 2}}},
\\
\ell_{8,4} &=& \gamma^{-}[b_1^\ast, b_2^\ast],
\eas
\noindent
where 
{\footnotesize 
\[
\begin{array}{rcl}
{\alpha_1}^{\pm}[b_1^\ast, b_2^\ast] &=&
 \sqrt{b_1^{\ast 2} b_2^\ast + 1} \left( b_2^{\ast 2} - b_1^{\ast 4} b_2^\ast ( 2 - 3 b_2^{\ast 3} ) + b_1^{\ast 2} ( 1 - 2 b_2^{\ast 3} - b_2^{\ast 6} )\right) 
  \sqrt{\beta[b_1^\ast, b_2^\ast]} \\[0.8ex] &&
\mp \,
\sqrt{b_1^{\ast 2} b_2^\ast - 1} \left( b_2^{\ast 2} + b_1^{\ast 4} b_2^\ast ( 2 + 3\, b_2^{\ast 3} ) + b_1^{\ast 2} ( 1 + 2 b_2^{\ast 3} - b_2^{\ast 6}) \right)  \sqrt{-\beta[-b_1^\ast,-b_2^\ast]},
\end{array}
\]}
\noindent with 
{\footnotesize
\bas
\beta[b_1^\ast, b_2^\ast] &=& b_2^\ast ( b_2^{\ast 2} - b_1^{\ast 2} ) ( 2 b_1^{\ast 2} b_2^\ast - 1 ) \sqrt{b_1^{\ast 4} b_2^{\ast 2} - 1}
\\ &&
 + \, b_1^\ast ( b_1^{\ast 2} b_2^\ast + 1 ) \left( b_2^{\ast 2} + b_1^{\ast 2} \big( 1 + b_2^{\ast 3} ( b_1^{\ast 2} b_2^\ast - 3 ) \big) \right) E_1
\\ &&
 + \, b_1^\ast ( b_1^{\ast 2} b_2^{\ast 4} - 1 ) \left( b_2^{\ast 2} - b_1^{\ast 2} ( b_2^{\ast 3} - 1 ) \right) F_1
\eas}
\noindent 
and 
{\footnotesize
\bas
\alpha_2[b_1^\ast, b_2^\ast] &=&
b_2^\ast ( b_2^{\ast 2} - b_1^{\ast 2} ) \sqrt{b_1^{\ast 4} b_2^{\ast 2} - 1}
\\ &&
\times  
\left( b_2^{\ast 2} ( 15 b_1^{\ast 8} b_2^{\ast 4} + 9 b_1^{\ast 4} b_2^{\ast 2} - 8 ) + b_1^{\ast 2} ( 6 b_2^{\ast 6} + 9 ) - b_1^{\ast 6} b_2^{\ast 2} ( 14 b_2^{\ast 6} + 17 ) \right),
\\[1ex] 
\alpha_3[b_1^\ast, b_2^\ast] &=&
2 b_1^\ast \left( 4  b_2^{\ast 4} ( b_1^{\ast 12} b_2^{\ast 6} + 1 ) - 4 b_1^{\ast 4} ( 4 b_2^{\ast 6} + 1 )
+ 2 b_1^{\ast 6} b_2^{\ast 4} ( 7 b_2^{\ast 6} + 13 ) \right. \\&&
\left. \hspace*{0.6cm} - \, b_1^{\ast 2} b_2^{\ast 2} ( 3 b_2^{\ast 6} + 1 ) - 4 b_1^{\ast 8} b_2^{\ast 2} ( 4 b_2^{\ast 6} + 1 )
- b_1^{\ast 10} b_2^{\ast 6} ( 3 b_2^{\ast 6} + 1 ) \right),
\\
\alpha_4[b_1^\ast, b_2^\ast] &=&
b_1^\ast ( 1 - b_1^{\ast 2} b_2^{\ast 4} )
\left(
7 b_1^{\ast 4} + b_1^{\ast 2} b_2^{\ast 2} ( b_1^{\ast 6} + 4 )
- b_2^{\ast 4} ( 19 b_1^{\ast 6} + 9 ) \right. 
\\ &&
\left.
\hspace*{17.9ex} + \, b_1^{\ast 2} b_2^{\ast 8} ( 11 b_1^{\ast 6} + 8 ) - 8 b_1^{\ast 6} b_2^{\ast 10} + b_1^{\ast 4} b_2^{\ast 6} ( 6 - b_1^{\ast 6} ) \right),
\\[1ex]
\alpha_5[b_1^\ast, b_2^\ast] &=&
 b_2^\ast ( b_1^{\ast 2} - b_2^{\ast 2} ) \sqrt{b_1^{\ast 4} b_2^{\ast 2} - 1} \\ &&
\times \left( b_2^{\ast 2} ( 4 b_1^{\ast 8} b_2^{\ast 4} + 9 b_1^{\ast 4} b_2^{\ast 2}-5 )
+ b_1^{\ast 2} ( b_2^{\ast 6} + 4 ) - b_1^{\ast 6} b_2^{\ast 2} ( 5 b_2^{\ast 6} + 8 )
\right),
\\[1ex]
\alpha_6[b_1^\ast, b_2^\ast] &=&
b_1^\ast \left(- b_2^{\ast 4} ( 3 b_1^{\ast 12} b_2^{\ast 6} + 5 ) + b_1^{\ast 2} b_2^{\ast 2} ( b_2^{\ast 6} - 2 ) + b_1^{\ast 8} 
 b_2^{\ast 2} ( 4 b_2^{\ast 6} + 3 ) \right.
\\ && \hspace*{.5cm}
\left. + \, b_1^{\ast 10} b_2^{\ast 6} ( 5 b_2^{\ast 6} + 4 ) - 2 b_1^{\ast 6} b_2^{\ast 4} ( 7 b_2^{\ast 6} + 13 ) + b_1^{\ast 4} 
( 28 b_2^{\ast 6} + 5 ) \right),
\\[1ex]
\alpha_7[b_1^\ast, b_2^\ast] &=& 2 b_1^\ast ( b_1^{\ast 2} b_2^{\ast 4} - 1 )
\\ &&
\times \, \left(
b_2^{\ast 4} ( b_1^{\ast 8} b_2^{\ast 4} + 2 b_1^{\ast 2} b_2^{\ast 4} - 3 ) - 2 b_1^{\ast 6} b_2^{\ast 4} 
( b_2^{\ast 6} + 2 ) + b_1^{\ast 4} ( 4 b_2^{\ast 6} + 2 )
\right),
\\[1ex]
\alpha_8[b_1^\ast, b_2^\ast] &=& b_2^\ast ( b_1^{\ast 2} - b_2^{\ast 2} ) 
\sqrt{b_1^{\ast 4} b_2^{\ast 2} - 1}\, 
\\ &&
\times \, \left( b_2^{\ast 2} ( 2 b_1^{\ast 8} b_2^{\ast 4} - 9 b_1^{\ast 4} b_2^{\ast 2} + 3 ) - 3 b_1^{\ast 6} b_2^{\ast 2} ( b_2^{\ast 6} - 2 ) + b_1^{\ast 2} (5 b_2^{\ast 6} - 4 ) \right),
\\[1ex]
\alpha_9[b_1^\ast, b_2^\ast] &=&
b_1^\ast \left( -b_1^{\ast 2} ( b_1^{\ast 6} b_2^{\ast 2} + 3 b_1^{\ast 2} + 2 b_2^{\ast 2} )
+ b_2^{\ast 4} ( 22 b_1^{\ast 6} + 3 ) 
- 4 b_1^{\ast 4} b_2^{\ast 6} ( 2 b_1^{\ast 6} + 5 ) \right.
\\ &&
\hspace*{0.4cm} \left. + \, b_1^{\ast 2} b_2^{\ast 8} ( 4 b_1^{\ast 6} + 5 ) + b_1^{\ast 6} b_2^{\ast 10}
( b_1^{\ast 6} - 2 ) + b_1^{\ast 10} b_2^{\ast 12} \right),
\\[1ex]
\alpha_{10}[b_1^\ast, b_2^\ast] &=& 2 b_1^\ast ( 1 - b_1^{\ast 2} b_2^{\ast 4} )
\left( b_1^{\ast 2} ( b_1^{\ast 2} + 2 b_2^{\ast 2} ) - 2 b_2^{\ast 4} ( 2 b_1^{\ast 6} + 1 ) + b_1^{\ast 4} b_2^{\ast 6} ( b_1^{\ast 4} b_2^{\ast 2} + 2 )
\right),\\[1ex]
\gamma^{\pm}[b_1^\ast, b_2^\ast] &=&
\frac{-\sqrt{\pi\,\g} \, b_1^{\ast 2} b_2^{\ast 2}}{( b_1^{\ast 4} b_2^{\ast 2} - 1 )^{3/4} 
( b_1^{\ast 2} b_2^{\ast 4} - 1 )^{5/2} \sqrt{b_1^{\ast 2} - b_2^{\ast 2}}} \\
&&
\times \, \left( \sqrt{b_1^{\ast 2} b_2^{\ast} + 1} \, \alpha_{11}[b_1^\ast, b_2^\ast]
\sqrt{\alpha_{12}[b_1^\ast, b_2^\ast]} \right. \\[-1.5ex]
&& \left. \hspace*{0.5cm} \pm \, \sqrt{b_1^{\ast 2} b_2^{\ast} - 1}\, \alpha_{11}[b_1^\ast,-b_2^\ast] \sqrt{-\alpha_{12}[-b_1^\ast,-b_2^\ast]} \right), 
\\
\alpha_{11}[b_1^\ast, b_2^\ast] &=& -2 + b_2^{\ast 3} ( b_1^{\ast 4} b_2^{\ast 2} + 3 ) - b_1^{\ast 2} b_2^\ast ( b_2^{\ast 6} + 2 b_2^{\ast 3} - 1),
\\
\alpha_{12}[b_1^\ast, b_2^\ast] &=&
b_2^\ast ( b_2^{\ast 2}- b_1^{\ast 2} ) ( 2 b_1^{\ast 2} b_2^\ast - 1 ) \sqrt{ b_1^{\ast 4} b_2^{\ast 2}-1}
\\ &&
+ \, b_1^\ast ( b_1^{\ast 2} b_2^\ast + 1 ) \left( b_2^{\ast 2} + b_1^{\ast 2} \big( 1 + b_2^{\ast 3} ( b_1^{\ast 2} b_2^\ast - 3 ) \big) \right) E_1
\\ &&
+ \, b_1^\ast ( b_1^{\ast 2} b_2^{\ast 4} - 1 ) \left( b_2^{\ast 2} - b_1^{\ast 2} ( b_2^{\ast 3} - 1 ) \right) F_1,
\\
\alpha_{13}[b_1^\ast, b_2^\ast] &=&
b_2^\ast ( b_2^{\ast 2} - b_1^{\ast 2} ) \sqrt{b_1^{\ast 4} b_2^{\ast 2} - 1} ( 2 b_1^{\ast 8} b_2^{\ast 4} + 9 b_1^{\ast 4} b_2^{\ast 2} + 1 ),
\\
\alpha_{14}[b_1^\ast, b_2^\ast] &=&
b_1^\ast ( b_1^{\ast 4} b_2^{\ast 2} - 1 ) \, ( b_1^{\ast 8} b_2^{\ast 6} + b_1^{\ast 6} b_2^{\ast 2} + 8 b_1^{\ast 4} b_2^{\ast 4} + b_1^{\ast 2} + b_2^{\ast 2} ),
\\
\alpha_{15}[b_1^\ast, b_2^\ast] &=&
b_1^\ast ( 1 - b_1^{\ast 2} b_2^{\ast 4} ) ( b_1^{\ast 8} b_2^{\ast 6} + 3 b_1^{\ast 6} b_2^{\ast 2} + 6 b_1^{\ast 4} b_2^{\ast 4} + b_1^{\ast 2} + b_2^{\ast 2} ).
\eas}

\item [(ii)] The coefficients $\ell_{i,j}$ for the counter-parallel and irrotational cases of $S_2$, as well as for $S_3$-ellipsoids admit analogous expressions and are provided in the {\sc Mathematica} file.
\\
 	
\item [(iii)] The eigenvalues of the linearisation matrix ${\mathcal L}$ are $\pm \imath \omega_i$, $i = 1, \ldots, 4$, with
\begin{equation}
\begin{array}{lcl}
\omega_1 & = & \displaystyle
\mbox{$\frac{-\imath}{\sqrt{2}}$} \sqrt{\ell_{1,5} {\ell_{5,1}} + 2\, \ell_{1,6} {\ell_{5,2}} + \ell_{2,6} \ell_{6,2} - \sqrt{{\mathcal P}_1}},
\\[1.5ex]
{\omega_2} &=& \displaystyle
\mbox{$\frac{-\imath}{\sqrt{2}}$} \sqrt{\ell_{1,5} {\ell_{5,1}} + 2\, \ell_{1,6} \ell_{5,2} + \ell_{2,6} \ell_{6,2}
	+ \sqrt{{\mathcal{P}_1}}},
\\[1.5ex]
{\omega_3} &=& \displaystyle
\mbox{$\frac{-\imath}{\sqrt{2}}$} \sqrt{{\ell_{4,8}} \ell_{8,4} + 2\, \ell_{7,4} \ell_{3,8} + \ell_{3,7} \ell_{7,3}
	- \sqrt{{\mathcal{P}_2}}},
\\[.51ex]
{\omega_4} &=& \displaystyle
\mbox{$\frac{-\imath}{\sqrt{2}}$} \sqrt{{\ell_{4,8}} \ell_{8,4} + 2\, \ell_{7,4} \ell_{3,8} + \ell_{3,7} \ell_{7,3}
	+ \sqrt{{\mathcal{P}_2}}},
\end{array}
\label{omegas}
\end{equation}
where 
\bas
\mathcal{P}_1 &=& \ell_{1,5}^2 \ell_{5,1}^2 + 4\,\ell_{1,5} \ell_{5,2} \, 
 ( \ell_{1,6} \ell_{5,1} + \ell_{2,6} \ell_{5,2} ) - 2 \, \ell_{1,5} \ell_{2,6} \ell_{5,1} \ell_{6,2}
\\&& + \, \ell_{6,2} \left( 4\,\ell_{1,6} ( \ell_{1,6} \ell_{5,1} + {\ell_{2,6}} \ell_{5,2} ) +
\ell_{2,6}^2 \ell_{6,2} \right),
\\[1.2ex]
\mathcal{P}_2 &=& \ell_{3,7}^2 \ell_{7,3}^2 + 4\,\ell_{3,7} \ell_{7,4} \, 
 ( \ell_{3,8} \ell_{7,3} + \ell_{4,8} \ell_{7,4} ) - 2 \, \ell_{3,7} \ell_{4,8} \ell_{7,3} \ell_{8,4}
\\&& + \, \ell_{8,4} \left( 4\,\ell_{3,8} ( \ell_{3,8} \ell_{7,3} + \ell_{4,8} \ell_{7,4} ) +
\ell_{4,8}^2 \ell_{8,4} \right).
\eas

\item [(iv)] The eigenvalues of the linearisation matrix ${\mathcal L}_I$ for the irrotational case are $\pm \imath \omega_i$, $i = 1, \ldots, 3$, with
\begin{equation}
\begin{array}{lcl}
\omega_1& = & \displaystyle
\mbox{$\frac{-\imath}{\sqrt{2}}$} \sqrt{\ell_{1,4} {\ell_{4,1}} + 2\, \ell_{1,5}{\ell_{4,2}} + \ell_{2,5} \ell_{5,2}
	- \sqrt{\bar{{\mathcal{P}}_1}}},
\\[1.5ex]
{\omega_2} &=& \displaystyle
\mbox{$\frac{-\imath}{\sqrt{2}}$} \sqrt{\ell_{1,4} {\ell_{4,1}} + 2\, \ell_{1,5} \ell_{4,2} + \ell_{2,5} \ell_{5,2}
	+ \sqrt{\bar{{\mathcal{P}}_1}}},
\\[1.5ex]
{\omega_3} &=& \displaystyle \imath \sqrt{\ell_{3,6} \ell_{6,3}},
\end{array}
\label{omegasI}
\end{equation}
with 
\[
\begin{array}{rcl} \bar{\mathcal P}_1 &=& 
\ell_{1,4}^2 \ell_{4,1}^2 + 4\,\ell_{1,4} \ell_{4,2} \, 
 ( \ell_{1,5} \ell_{4,1} + \ell_{2,5} \ell_{4,2} ) - 2 \, \ell_{1,4} \ell_{2,5} \ell_{4,1} \ell_{5,2}
\\[1ex] && + \,\ell_{5,2} \left( 4\,\ell_{1,5} ( \ell_{1,5} \ell_{4,1} + \ell_{2,5} \ell_{4,2} ) +
\ell_{2,5}^2 \ell_{5,2} \right).
\\ \end{array} 
\]
\end{itemize}
\end{lem}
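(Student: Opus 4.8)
The plan is to obtain $\mathcal{L}$ as the Hessian of the reduced Hamiltonian at the equilibrium and then exploit its block structure to read off the eigenvalues. For part (i) I would start from the reduced Hamiltonian \eqref{Hamiltonian}, apply the symplectic shift-to-origin transformation of Section \ref{SectionS2} with the co-parallel sign choice, and Taylor expand to second order in $u=(\bar b_1,\bar b_2,q_1,q_2,\bar c_1,\bar c_2,p_1,p_2)$. Since $H_2(u)=\tfrac12\,u^{T}(-\mathcal{J}_8\mathcal{L})\,u$, one reads off $\mathcal{L}=\mathcal{J}_8\,\mathrm{Hess}(H_2)$, so each $\ell_{i,j}$ is a signed second partial of $H$ evaluated at $b=b^\ast$, $c=0$, $q_i=p_i=0$. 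The purely kinetic part produces the rational entries $\ell_{1,5},\ell_{1,6},\ell_{2,6}$ at once, whereas differentiating the elliptic-integral potential $\mathcal{V}(b)$ twice yields exactly the improper integrals $C_1$ and $C_2$. Substituting their closed forms from Lemma \ref{C12lem} and repeatedly applying the standard reduction and derivative identities for the incomplete integrals $E(\phi\mid k)$ and $F(\phi\mid k)$ collapses the raw output into the compact $\alpha_i$ blocks listed in the statement. Part (ii) is the identical computation performed with the counter-parallel and irrotational sign choices, and with $e_3$ replacing $e_2$ for $S_3$; the pattern is the same, so I would simply record the outcome.

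Before solving for the $\omega_i$ I would justify the sparsity of $\mathcal{L}$, which is what makes the eigenvalue step elementary. Writing $\mathrm{Hess}(H_2)$ in the (coordinates; momenta) splitting with coordinate block $P$, momentum block $S$ and mixed block $R$, one has $\mathcal{L}=\mathcal{J}_8\,\mathrm{Hess}(H_2)$, and the displayed shape of $\mathcal{L}$ forces $R=0$ (no coordinate–momentum cross terms at the equilibrium, because $\partial^2 H/\partial b\,\partial c$ carries a factor $c$ that vanishes) together with $P$ and $S$ block-diagonal. This reflects the parallel-axis configuration of $S_2$: with $\eta_l,\eta_r$ both along $e_2$, the shape degrees of freedom $(\bar b_1,\bar b_2,\bar c_1,\bar c_2)$ decouple at linear order from the spin degrees $(q_1,q_2,p_1,p_2)$. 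Concretely $\mathcal{L}$ is anti-block-diagonal with upper-right block $M_1=\mathrm{blockdiag}(A_1,A_2)$ and lower-left block $M_2=\mathrm{blockdiag}(B_1,B_2)$, where $A_1,B_1$ are the symmetric $2\times2$ shape blocks built from $\ell_{1,5},\ell_{1,6},\ell_{2,6},\ell_{5,1},\ell_{5,2},\ell_{6,2}$ and $A_2,B_2$ the analogous spin blocks.

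For part (iii) I would then compute $\mathcal{L}^2=\mathrm{diag}(M_1M_2,\,M_2M_1)$ and use that $A_iB_i$ and $B_iA_i$ share the same spectrum, so every eigenvalue $\lambda$ of $\mathcal{L}$ satisfies $\lambda^2\in\mathrm{spec}(A_1B_1)\cup\mathrm{spec}(A_2B_2)$. Each $A_iB_i$ is $2\times2$, hence $\lambda^2$ solves $\mu^2-\mathrm{tr}(A_iB_i)\,\mu+\det(A_iB_i)=0$; a direct expansion gives $\mathrm{tr}(A_1B_1)=\ell_{1,5}\ell_{5,1}+2\,\ell_{1,6}\ell_{5,2}+\ell_{2,6}\ell_{6,2}$ with discriminant equal to $\mathcal{P}_1$ (and likewise $\mathcal{P}_2$ for the spin block), so that $\omega=\sqrt{-\mu}$ reproduces exactly \eqref{omegas}. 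Part (iv) is the same argument for the $6\times6$ irrotational matrix $\mathcal{L}_I$: its shape sector is again a $4\times4$ anti-block yielding $\omega_1,\omega_2$ through the discriminant $\bar{\mathcal{P}}_1$, while the single surviving spin degree of freedom contributes a $2\times2$ block with zero diagonal and off-diagonal entries $\ell_{3,6},\ell_{6,3}$, whose eigenvalues are $\pm\sqrt{\ell_{3,6}\ell_{6,3}}$, i.e.\ $\omega_3=\imath\sqrt{\ell_{3,6}\ell_{6,3}}$ as in \eqref{omegasI}. The purely imaginary character $\lambda=\pm\imath\omega$ with real $\omega$ holds precisely where the roots $\mu$ are negative, which is the elliptic regime isolated by the stability analysis of Section \ref{SectionS2}.

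The genuine obstacle is not the eigenvalue step — elementary once the block structure is in hand — but the symbolic reduction in part (i). Before simplification, every potential-sector entry $\ell_{i,j}$ is a cumbersome mixture of $C_1$, $C_2$, nested square roots and rational functions of $b^\ast$, and collapsing it to the displayed $\alpha_i$ form demands careful, repeated use of the derivative and reduction identities for the incomplete integrals $E$ and $F$ together with the closed expressions of Lemma \ref{C12lem}; this is precisely where the computation in the accompanying {\sc Mathematica} file does the heavy lifting. A secondary verification, also best carried out symbolically, is confirming that $\mathcal{P}_1$, $\mathcal{P}_2$ and $\bar{\mathcal{P}}_1$ are indeed the claimed discriminants and remain non-negative in the relevant regions, so that the frequencies in \eqref{omegas} and \eqref{omegasI} are well defined.
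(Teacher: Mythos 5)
Your proposal is correct and is essentially the paper's own route: the published proof of Lemma \ref{Coefellem} is a one-line deferral to the accompanying {\sc Mathematica} file, and your plan (Taylor expansion of the shifted Hamiltonian, reading off $\mathcal{L}=\mathcal{J}_8\,\mathrm{Hess}(H_2)$, substituting the closed forms of $C_1,C_2$ from Lemma \ref{C12lem}, and exploiting the anti-block-diagonal, block-decoupled structure so that the characteristic equation splits into two quadratics) is exactly that computation made explicit. In particular, your identification of $\mathcal{P}_1$, $\mathcal{P}_2$ and $\bar{\mathcal{P}}_1$ as the discriminants $\mathrm{tr}^2(A_iB_i)-4\det(A_iB_i)$ of the $2\times2$ shape and spin products checks out term by term, so parts (iii) and (iv) follow precisely as you describe, matching \eqref{omegas} and \eqref{omegasI}.
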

\begin{proof}
The calculations are straightforward but lengthy and they involve simplifications of
the intermediate formulae. They are given in the {\sc Mathematica} file.
\end{proof}

%%%%%%%%%%%%%%%%%%%%%%%%%%%%%%%%%%%%%%%%
\section{Linear normal form}
\label{Linearization}
%%%%%%%%%%%%%%%%%%%%%%%%%%%%%%%%%%%%%%%%

We describe Markeev's procedure \cite{Markeev} to determine the linear normal form corresponding to a linearisation matrix ${\mathcal L}$ for an elliptic point, that is, an equilibrium of a Hamiltonian system with $n$ degrees of freedom such that ${\mathcal L}$ is diagonalisable with eigenvalues $\pm \imath \omega_i$, such that $\omega_i > 0$ for all $i$. 
\\

\begin{remark}
Notice that this corresponds to a linearly stable equilibrium, which is slightly stronger than 
spectrally stable equilibrium, as we require ${\mathcal L}$ to be diagonalisable, avoiding possible nilpotent components in its decomposition that could lead to linear instability.
\\
\end{remark}

We use the eigenvalues and eigenvectors of ${\mathcal L}$ to construct a symplectic linear transformation, bringing ${\mathcal L}$ to diagonal form. We proceed as follows: (i) Due to the fact that for an elliptic equilibrium there are always $n$ pairs of complex eigenvectors, $v_i, \bar v_i$, $i = 1, \ldots, n$, we construct $n$ pairs of real eigenvectors, say $r_i = ( v_i - \bar v_i )/( 2\imath ), s_i = ( v_i + \bar v_i )/2$, $i = 1, \ldots, n$. (ii) We build the vectors $t_i = {\mathcal J}_{2n} \cdot s_i$, $i = 1, \ldots, n$, with
${\mathcal J}_{2n}$ the $( 2n \times 2n )$-skew symmetric matrix. Vectors $t_i, r_i$ are arranged so that $\bar{n}_i = r_i \cdot t_i$, $i = 1, \ldots, n$ are positive numbers. (iii) We define $k_i = 1/\sqrt{\bar{n}_i}$, for $i = 1, \ldots, n$.  
\\

Setting $n = 4$, the (real) matrix of the linear transformation is given by the symplectic matrix
\begin{equation}
{\mathcal T} = ( -k_1 s_1, \, -k_2 s_2, \, -k_3 s_3, \, -k_4 s_4, \, k_1 r_1, \, k_2 r_2, \, k_3 r_3, \, k_4 r_4 )^T.
\\
\label{eigenvectors1}
\end{equation}

The column vectors of ${\mathcal T}$ form a basis of $\mathbb{R}^8$ and the weights $k_i$ are strategically chosen so that the linear transformation is symplectic. For the irrotational $S_2$-ellipsoids we proceed similarly with $n = 3$, and the matrix ${\mathcal T}_I$ of the linear transformation is $(6 \times 6)$-dimensional.\\

Thus, the Hamiltonian function $H_2( u ) = -\frac{1}{2}\, u^T \cdot {\mathcal J} {\mathcal L} \cdot u$ corresponding to the linearised Hamiltonian system associated to $H$ given in (\ref{Hamiltonian}), can be brought to linear normal form. Indeed, defining
$z = ( x_1, x_2, x_3, x_4,$ $ y_1, y_2, y_3, y_4 )$, the change $u = {\mathcal T} \cdot z$ transforms the quadratic Hamiltonian to
$H_2( z ) = \frac{1}{2} \, z^T \cdot {\mathcal S} \cdot z,$
where ${\mathcal S} = - {\mathcal T}^T {\mathcal J}_8 {\mathcal L} {\mathcal T}$. In the irrotational regime we have $z = ( x_1, x_2, x_3, y_1, y_2, y_3 )$, $H_2(z) = \frac{1}{2} \, z^T \cdot {\mathcal S}_I \cdot z$, with ${\mathcal S}_I = - {\mathcal T}_I^T {\mathcal J}_6 {\mathcal L}_I {\mathcal T}_I$.\\

\begin{lem} 
\label{Coeftlem}
In the co-parallel regime of $S_2$-ellipsoids the symplectic transformation that brings $H_2$ in items ii. and iii. of Theorem \ref{S2Theorem} to linear normal form is given by the matrix
\begin{equation}
{\mathcal T} = ( t_{i,j} ), \quad
{\rm for} \quad i,j = 1, \ldots, 8,
\\
\label{eigenvectors2}
\end{equation}
\noindent where
\[
\begin{array}{rcl}

t_{1,5} &=& -\frac{\left( \ell_{1,6} \ell_{5,2} + \ell_{2,6} \ell_{6,2} + \omega_1^2  \right) \sqrt{\omega_1}}{\sqrt{\ell_{2,6} \left( \ell_{5,2}^2 - \ell_{5,1} \ell_{6,2} \right) - \ell_{5,1} \omega_1^2} \sqrt{\omega_1^2 - \omega_2^2}}, \\[2.5ex]

t_{1,6} &=& \frac{\left( \ell_{1,6} \ell_{5,2} + \ell_{2,6} \ell_{6,2} + \omega_2^2\right) \sqrt{\omega_2}}{\sqrt{\ell_{2,6} \left( \ell_{5,1} \ell_{6,2} - \ell_{5,2}^2 \right) + \ell_{5,1} \omega_2^2} \sqrt{\omega_1^2 - \omega_2^2}}, \\[2.5ex]

t_{2,5} &=& \frac{\left( \ell_{1,6} \ell_{5,1} + \ell_{2,6} \ell_{5,2} \right)  
 \sqrt{\omega_1}}{\sqrt{\ell_{2,6} \left( \ell_{5,2}^2 - \ell_{5,1} \ell_{6,2} \right)  - \ell_{5,1} \omega_1^2} \sqrt{\omega_1^2 - \omega_2^2}}, \\[2.5ex]

t_{2,6} &=&
 -\frac{\left( \ell_{1,6} \ell_{5,1} + \ell_{2,6} \ell_{5,2} \right) \sqrt{\omega_2}}{\sqrt{\ell_{2,6} \left( \ell_{5,1} \ell_{6,2} - \ell_{5,2}^2 \right) + \ell_{5,1} \omega_2^2} \sqrt{\omega_1^2 - \omega_2^2}}, \\[2.5ex]

t_{3,4} &=&
\frac{ \left( \ell_{3,8} \ell_{7,4} + \ell_{4,8} \ell_{8,4} + \omega_4^2 \right) \sqrt{\left( \ell_{3,7} \ell_{4,8} - \ell_{3,8}^2 \right) \ell_{7,3} + \ell_{4,8} \omega_4^2}}{\left( \ell_{3,8} \ell_{7,3} + \ell_{4,8} \ell_{7,4} \right) \sqrt{\omega_3^2 - \omega_4^2}}, \\[2.5ex]

t_{3,7} &=& -\frac{\left( \ell_{3,8} \ell_{7,4} + \ell_{4,8} \ell_{8,4} + \omega_3^2 \right) \sqrt{\omega_3}}{\sqrt{\ell_{4,8} \left( \ell_{7,4}^2 - \ell_{7,3} \ell_{8,4}  \right) - \ell_{7,3} \omega_3^2} \sqrt{\omega_3^2 - \omega_4^2}}, \\[2.5ex]

t_{4,4} &=&
-\frac{ \sqrt{\ell_{7,3} \left( \ell_{3,7} \ell_{4,8} - \ell_{3,8}^2 \right) + \ell_{4,8} \omega_4^2}}{\sqrt{\omega_3^2 - \omega_4^2}}, \\[2.5ex]

t_{4,8} &=&
\frac{\left( \ell_{3,8} \ell_{7,3} + \ell_{4,8} \ell_{7,4} \right) \sqrt{\omega_3}}{\sqrt{\ell_{4,8} \left( \ell_{7,4}^2 - \ell_{7,3} \ell_{8,4} \right) - \ell_{7,3}\omega_3^2} \sqrt{\omega_3^2 - \omega_4^2}}, \\[2.5ex]

t_{5,1} &=&
\frac{\sqrt{\ell_{2,6} \left( \ell_{5,2}^2 - \ell_{5,1} \ell_{6,2} \right) - \ell_{5,1}\omega_1^2}}{\sqrt{\omega_1 ( \omega_1^2 - \omega_2^2 )}}, \\[2.5ex]

\end{array}\]

\[
\begin{array}{rcl}

t_{5,2} &=&
\frac{\sqrt{\ell_{2,6} \left( \ell_{5,1} \ell_{6,2} - \ell_{5,2}^2 \right) + \ell_{5,1} \omega_2^2}}{\sqrt{\omega_2 (\omega_1^2 - \omega_2^2 )}}, \\[2.5ex]

t_{6,1} &=& 
\frac{\ell_{1,6} \left( \ell_{5,1} \ell_{6,2} - \ell_{5,2}^2 \right) - \ell_{5,2}\omega_1^2}{\sqrt{\ell_{2,6} \left( \ell_{5,2}^2 - \ell_{5,1} \ell_{6,2} \right) - \ell_{5,1} \omega_1^2} \sqrt{\omega_1 ( \omega_1^2 - \omega_2^2 )}}, \\[2.5ex]

t_{6,2} &=&
\frac{\ell_{1,6} \left( \ell_{5,2}^2 - \ell_{5,1} \ell_{6,2} \right) + \ell_{5,2}\omega_2^2}{\sqrt{\ell_{2,6} \left( \ell_{5,1} \ell_{6,2} - \ell_{5,2}^2 \right) + \ell_{5,1} \omega_2^2} \sqrt{\omega_2 ( \omega_1^2 - \omega_2^2)}}, \\[2.5ex] 

t_{7,3} &=&
\frac{\sqrt{\ell_{4,8} \left( \ell_{7,4}^2 - \ell_{7,3} \ell_{8,4} \right) - \ell_{7,3}\omega_3^2}}{\sqrt{\omega_3 ( \omega_3^2 - \omega_4^2 )}}, \\[2.5ex]

t_{7,8} &=&
\frac{\sqrt{\ell_{7,3} \left( \ell_{3,7} \ell_{4,8} - \ell_{3,8}^2 \right) + \ell_{4,8}\omega_3^2}}{\sqrt{\ell_{3,8}^2 - \ell_{3,7} \ell_{4,8}} \sqrt{\omega_3^2 - \omega_4^2}}, \\[2.5ex]

t_{8,3} &=&
\frac{\ell_{3,8} \left( \ell_{7,3} \ell_{8,4} - \ell_{7,4}^2 \right) - \ell_{7,4}\omega_3^2}{\sqrt{\ell_{4,8} \left( \ell_{7,4}^2 - \ell_{7,3} \ell_{8,4} \right)
- \ell_{7,3} \omega_3^2} \sqrt{\omega_3 (\omega_3^2 - \omega_4^2 )}}, \\[2.5ex]

t_{8,8} &=&
\frac{\sqrt{\ell_{8,4} \left( \ell_{3,7} \ell_{4,8} - \ell_{3,8}^2 \right) + \ell_{3,7}\omega_3^2}}{\sqrt{\ell_{3,8}^2 - \ell_{3,7} \ell_{4,8}} \sqrt{\omega_3^2 - \omega_4^2}}, 
\end{array}
\]
and the rest of the $t_{i,j}$ are zero.
\end{lem}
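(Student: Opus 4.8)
The plan is to obtain $\mathcal{T}$ by a direct application of Markeev's construction from Appendix \ref{Linearization} to the linearisation matrix $\mathcal{L}$ whose entries are recorded in Lemma \ref{Coefellem}(i), followed by the $\omega_4$-rescaling that renders the transformation valid across the bifurcation line. The first observation is that, thanks to the zero blocks of $\mathcal{L}$ exhibited in Appendix \ref{CoefficientsL}, the linearised system splits into two uncoupled two-degree-of-freedom Hamiltonian subsystems: one acting on the shape variables $(\bar b_1,\bar b_2,\bar c_1,\bar c_2)$ (indices $1,2,5,6$) and one on the sphere variables $(q_1,q_2,p_1,p_2)$ (indices $3,4,7,8$). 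Consequently the eigenvalue/eigenvector problem, and hence the whole construction of $\mathcal{T}$, decomposes block by block; this is already reflected in the claimed entries, where $t_{1,5},\dots,t_{6,2}$ depend only on the $\ell_{i,j}$ with indices in $\{1,2,5,6\}$ and on $\omega_1,\omega_2$, while $t_{3,4},\dots,t_{8,8}$ involve only indices in $\{3,4,7,8\}$ and $\omega_3,\omega_4$.

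For each block I would first invoke the eigenvalues $\pm\imath\omega_j$ supplied in \eqref{omegas}; because the characteristic polynomial of each $4\times4$ block contains only even powers of $\lambda$, these are available in closed form, and the symmetric functions $\omega_1^2+\omega_2^2$, $\omega_1^2\omega_2^2$ are polynomials in the $\ell_{i,j}$ through $\mathcal{P}_1$ (similarly for the second block through $\mathcal{P}_2$). Next I would solve $(\mathcal{L}-\imath\omega_j\,\mathrm{Id})v_j=0$ for the complex eigenvectors $v_j$ and then follow the recipe of Appendix \ref{Linearization} verbatim: set $r_j=(v_j-\bar v_j)/(2\imath)$, $s_j=(v_j+\bar v_j)/2$, form $t_j=\mathcal{J}_8 s_j$, compute $\bar n_j=r_j\cdot t_j$ and $k_j=1/\sqrt{\bar n_j}$, and assemble $\mathcal{T}$ as in \eqref{eigenvectors1}. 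The displayed entries then emerge after simplifying the resulting radicals using the relations among the $\ell_{i,j}$ and $\omega_i$ encoded in \eqref{omegas}; this simplification is the routine but bulky part, performed symbolically in the accompanying {\sc Mathematica} file.

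The genuinely delicate point concerns the fourth degree of freedom. In the naive Markeev output the column associated to $\omega_4$ carries a factor $1/\sqrt{\omega_4}$, which is singular on the red curve $G=0$, where $\omega_4=0$. Following the modification introduced in the proof of Theorem \ref{S2Theorem}, I would apply the symplectic rescaling $x_4\mapsto\sqrt{\omega_4}\,x_4$, $y_4\mapsto y_4/\sqrt{\omega_4}$; the factor $\omega_4$ produced in the numerator cancels a compensating factor that also vanishes on $G=0$, so the modified matrix, still denoted $\mathcal{T}$, is real and non-singular in a neighbourhood of the bifurcation line and depends smoothly on $\omega_4$. Since only $\omega_4^2$ (which is real throughout: positive above, zero on, negative below the line) enters the rescaled column, reality is preserved in all three regimes simultaneously. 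This is exactly what produces the asymmetric pattern visible in the statement, in particular why the fourth column populates rows $3,4$ via $t_{3,4},t_{4,4}$ rather than symmetrically, and why the target quadratic form carries the block $-\frac{1}{2}(\omega_4^2 x_4^2+y_4^2)$ instead of a pure centre. In Arnol'd's language this is a versal deformation of the transformation, which is the conceptual crux of the argument.

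Finally I would close the proof by verification rather than by tracking the construction: one checks directly that the displayed $\mathcal{T}=(t_{i,j})$ satisfies $\mathcal{T}^T\mathcal{J}_8\mathcal{T}=\mathcal{J}_8$ and that $-\mathcal{T}^T\mathcal{J}_8\mathcal{L}\mathcal{T}$ equals the matrix of the normal form of items ii and iii of Theorem \ref{S2Theorem}. After clearing denominators both identities reduce to polynomial identities in the $\ell_{i,j}$ and $\omega_i$ that hold in view of \eqref{omegas}, so the verification, though lengthy, is unconditional and requires no numerical input. I expect the main obstacle to be not conceptual but the control of the $\omega_4\to0$ limit together with the size of the intermediate radical expressions, which is precisely why the symbolic computation should be organised around the versal form and the pre-simplified quantities $\mathcal{P}_1,\mathcal{P}_2$.
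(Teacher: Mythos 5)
Your proposal is correct and follows essentially the same route as the paper's proof: apply Markeev's algorithm of Appendix \ref{Linearization} to the block-structured linearisation matrix $\mathcal{L}$ of Lemma \ref{Coefellem}, then perform the rescaling $x_4 \rightarrow \sqrt{\omega_4}\, x_4$, $y_4 \rightarrow y_4/\sqrt{\omega_4}$ so that the resulting matrix is real and well defined across the bifurcation line $G=0$, with the remaining radical simplifications carried out symbolically. Your closing suggestion to verify the result directly through the identities ${\mathcal T}^T {\mathcal J}_8 {\mathcal T} = {\mathcal J}_8$ and ${\mathcal S} = -{\mathcal T}^T {\mathcal J}_8 {\mathcal L}\, {\mathcal T}$ is a sensible formalisation of what the paper leaves implicit in its {\sc Mathematica} computation, but it is not a different argument.
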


\begin{proof}
We apply the procedure due to Markev as described in this appendix, but making the extra change $x_4 \rightarrow \sqrt{\omega_4} x_4$, $y_4 \rightarrow y_4/\sqrt{\omega_4}$ in order to accommodate the analysis including the pitchfork bifurcation line. After some simplifications on the resulting expressions that we have made with {\sc Mathematica} we end up with the entries $t_{i,j}$.
\end{proof}

The coefficients of the transformation matrix for the case of counter-parallel $S_2$-ellipsoids, as well as irrotational ones, are analogous and can be found in the {\sc Mathematica} file.
\\

%%%%%%%%%%%%%%%%%%%%%%%%%%%%%%%%%%%%%%%%
\end{appendices}
%%%%%%%%%%%%%%%%%%%%%%%%%%%%%%%%%%%%%%%%
%%%%%%%%%%%%%%%%%%%%%%%%%%%%%%%%%%%%%%%%

%%%%%%%%%%%%%%%%%%%%%%%%
\bibliography{Riemann Ellipsoids} 
%%%%%%%%%%%%%%%%%%%%%%%

%%%%%%%%%%%%%%%%%%
\end{document}